\title{Linear Hashing Is Optimal}
\author{Michael Jaber\thanks{Department of Computer Science, University of Texas at Austin. \href{mailto:mjjaber@cs.utexas.edu,vmkumar@cs.utexas.edu}{\texttt{\{mjjaber, vmkumar\}@cs.utexas.edu}}. Supported by NSF Grant CCF-2312573 and a Simons Investigator Award (\#409864, David Zuckerman).}
\and 
Vinayak M.\ Kumar\footnotemark[1]
%\thanks{Department of Computer Science, University of Texas at Austin. \href{mailto:vmkumar@cs.utexas.edu}{\texttt{vmkumar@cs.utexas.edu}}. Supported by NSF Grant CCF-2312573 and a Simons Investigator Award (\#409864, David Zuckerman).}
  \and David Zuckerman\thanks{Department of Computer Science, University of Texas at Austin. \href{mailto:diz@cs.utexas.edu} {\texttt{diz@cs.utexas.edu}}. Supported in part by NSF Grant CCF-2312573 and a Simons Investigator Award (\#409864).}
}
\date{}
\renewcommand{\backref}[1]{}
\renewcommand{\backrefalt}[4]{%
\ifcase #1 %
\or
[p.\ #2]%
\else
[pp.\ #2]%
\fi}
\newtheorem{theorem}{Theorem}
\newtheorem{lemma}{Lemma}
\newtheorem{claim}{Claim}
\newtheorem{fact}{Fact}
\newtheorem{definition}{Definition}
\newtheorem{remark}{Remark}
\crefname{prop}{Proposition}{Propositions}
\crefname{ineq}{inequality}{inequalities}
\crefname{THM}{Theorem}{Theorems}
\newcommand{\E}{\mathbb{E}}
\newcommand{\N}{\mathbb{N}}
\newcommand{\F}{\mathbb{F}}
\newcommand{\eps}{\varepsilon}
\newcommand{\Span}{\text{Span}}
\newcommand{\opt}{\textsc{opt}}
\renewcommand{\subset}{\subseteq}
\newcommand{\cE}{\mathcal{E}}
\newcommand{\cK}{\mathcal{K}}
\newcommand{\cH}{\mathcal{H}}
\begin{document}

\maketitle

\abstract{
%We prove that hashing $n$ balls into $n$ bins via random $\F_2$-linear maps yields expected maximum load $O(\log n / \log \log n)$, resolving an open question of Alon, Dietzfelbinger, Miltersen, Petrank, and Tardos (STOC ’97, JACM ’99). Our proof uses potential functions to detect heavy bins.
%More generally, we show that the maximum load exceeds $r\cdot \log n/\log\log n$ with probability at most $O(1/r^2)$. 

%We prove that hashing $n$ balls into $n$ bins via a random matrix over $\F_2$ yields expected maximum load $\Theta(\log n / \log \log n)$, resolving an open question of Alon, Dietzfelbinger, Miltersen, Petrank, and Tardos (STOC ’97, JACM ’99). More generally, we show that the maximum load exceeds $r\cdot \log n/\log\log n$ with probability at most $O(1/r^2)$. Our proof uses potential functions to detect heavy bins.

We prove that hashing $n$ balls into $n$ bins via a random matrix over $\F_2$ yields expected maximum load $O(\log n / \log \log n)$. This matches the expected maximum load of a fully random function and resolves an open question posed by Alon, Dietzfelbinger, Miltersen, Petrank, and Tardos (STOC ’97, JACM ’99). More generally, we show that the maximum load exceeds $r\cdot \log n/\log\log n$ with probability at most $O(1/r^2)$.

%Consider vector spaces $U$ and $V$ over $\F_2$, where $|U|\ge |V|=n$. We show that for any subset $S\subset U$ of size $n$, a random linear map between $U$ and $V$ will map at most $O\left(\log n/\log\log n\right)$ elements of $S$ to the same vector in $V$ with high probability. 
%Consider hashing $n$ balls to $n$ bins. We demonstrate that the simple hash family of random $\F_2$-linear maps, introduced by Carter and Wegman (STOC '77), has expected maximum load $O(\log n/\log\log n)$. This resolves a question of Alon, Dietzfelbinger, Miltersen, Petrank, and Tardos (STOC '97, JACM '99), and is the first example of a universal hash family to achieve optimal expected max-load. We establish this result by carefully designing potential functions that detect when a heavy bin is present and analyzing its growth.
%
}
\section{Introduction}
In 1997, Alon, Dietzfelbinger, Miltersen, Petrank, and Tardos \cite{alon99} asked ``Is Linear Hashing Good?'' We answer this question in the affirmative over $\F_2$.

\begin{theorem}
\label{thm:startthingswithaBANG}
    Let $u\ge \ell\ge 1 $ be integers, $n\coloneqq 2^\ell$, and $\cH$ the set of linear maps $\F_2^u\to\F_2^\ell$. For any $S\subset \F_2^u$ with cardinality $n$, we have 
    %\[\Pr_{h\sim \cH}[\forall y\in \F_2^\ell, |h^{-1}(y)\cap S| \le \frac{48}{\eps} \opt(m,n)]\ge 1-\eps.\] 
    %\[\Pr_{h\sim \cH}\left[\exists y\in \F_2^\ell: |h^{-1}(y)\cap S| \ge C\frac{\log n}{\log\log n} \right] \le \eps.\]
    \[\E_{h\sim \cH}\left[\max_{y\in \F_2^\ell} |h^{-1}(y)\cap S|\right] \le \frac{16\log n}{\log\log n}.\]
\end{theorem}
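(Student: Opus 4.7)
The plan is a first-moment / Markov bound on the number of $t$-subsets of $S$ collapsed by $h$. For any $T \subseteq S$ with $|T|=t$, the event that $h(T)$ is a single point is the event that $\Span(T-T) \subseteq \ker h$; since $h$ is a uniformly random $\F_2$-linear map, this has probability exactly $2^{-\dim \aff(T)\cdot \ell}$. A union bound yields
\[
\Pr[M \ge t] \;\le\; \sum_{T \in \binom{S}{t}} 2^{-\dim \aff(T)\cdot \ell},
\]
and it suffices to show that this right-hand side is $O(1/r^2)$ for $t = r\cdot 16\log n/\log\log n$, since then $\E[M] = \sum_t \Pr[M\ge t]$ gives the theorem.

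I would split the sum by the affine rank $\rho := \dim \aff(T) \in [\log_2 t,\, t-1]$. The ``generic'' term $\rho = t-1$ contributes at most $\binom{n}{t}\cdot 2^{-(t-1)\ell} \le n/t!$; a standard Stirling estimate shows this is $o(1)$ exactly when $t \gtrsim \log n/\log\log n$, which is the intended threshold. The real work is controlling the ``degenerate'' contributions from tuples of rank $\rho<t-1$.

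A rank-$\rho$ tuple is contained in some $\rho$-dimensional affine subspace $V \subset \F_2^u$ with $|S\cap V|\ge t$. Letting $N_\rho := \sum_{\dim V = \rho}\binom{|S\cap V|}{t}$, the rank-$\rho$ contribution is at most $N_\rho\cdot 2^{-\rho\ell}$. The plan is to establish a structural dichotomy: either no low-dimensional affine subspace contains many points of $S$, making low-rank tuples scarce enough that a direct count suffices; or some $V^\star$ is ``dense'' in $S$, in which case we recurse on $h|_{V^\star}$, which is itself a uniformly random linear map of the appropriate shape acting on a smaller-dimensional problem, and apply the bound inductively. The alternative would be a single global density inequality (of Pl\"unnecke--Ruzsa or Kruskal--Katona flavor) bounding $\sum_\rho N_\rho \cdot 2^{-\rho\ell}$ directly in terms of the generic term.

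The hard part, I expect, is making the low-rank bound tight. A naive union bound over affine subspaces overcounts, because a fixed $V$ produces many $t$-tuples, while bounding by $\max_V|S\cap V|$ ignores that most subspaces are nearly empty. The delicate regime is $\rho$ just above $\log_2 t$, where one dense $V$ with $|S\cap V|\approx t$ can already contribute substantially. This is presumably where the gap between the previous $O(\log n \log\log n)$ bound of Alon et al.\ and the optimal $O(\log n/\log\log n)$ actually lies; closing it likely requires either a carefully structured induction on $\rho$ (or on $\ell$), or an encoding / entropy-compression argument showing that a rank-$\rho$ collapsed $t$-tuple admits a description much shorter than $t\log n$ bits, which would, after union-bounding over possible descriptions, give the tail bound $O(1/r^2)$ claimed in the abstract.
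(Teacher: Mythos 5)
Your setup is fine as far as it goes: the collapse probability $2^{-\dim \aff(T)\cdot \ell}$ is correct, the union bound is a valid inequality, and the generic term $\binom{n}{t}2^{-(t-1)\ell}\le n/t!$ is indeed $o(1)$ at $t\approx \log n/\log\log n$. The fatal problem is that the right-hand side of your union bound is not merely hard to control --- it is genuinely exponentially large at the target threshold, so no amount of careful counting of the low-rank terms can close the argument. Take $u=\ell$ and $S=\F_2^\ell$ (or any $\ell$-dimensional affine subspace inside $\F_2^u$). For this $S$ the true max-load is $2^{\dim\ker(h|_S)}$, which has $O(1)$ expectation. But for $\rho\approx \log_2 t+1$, the number of $\rho$-dimensional affine subspaces contained in $S$ is about $2^{(\rho+1)(\ell-\rho)}$, each contains roughly $\binom{2^\rho}{t}=2^{\Omega(t)}$ full-rank $t$-subsets, and each such $T$ contributes $2^{-\rho\ell}$; the total is $2^{\ell-\rho^2-\rho+\Omega(t)}\ge n\cdot 2^{\Omega(t)}\gg 1$. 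So $\sum_T 2^{-\dim\aff(T)\cdot\ell}$ is at least polynomially (in fact more than polynomially) large even though $\Pr[M\ge t]$ is tiny: the first moment is dominated by a $2^{-\Omega(\rho^2)}$-probability event in which an entire $\rho$-dimensional subspace collapses and a huge number of tuples are collapsed simultaneously. In particular, the ``single global density inequality bounding $\sum_\rho N_\rho 2^{-\rho\ell}$ in terms of the generic term'' that you propose as one alternative is provably impossible, and the dichotomy/recursion you sketch does not obviously make progress on this example ($S$ \emph{is} the dense subspace, so recursing on $h|_{V^\star}$ returns the same problem). Any rescue would have to condition on the absence of large collapsed subspaces (a second-moment or encoding argument), which is exactly the step you defer and where the entire difficulty of the problem lives; this is essentially why previous first-moment-style analyses stalled at $O(\log n)$.

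For contrast, the paper does not use a union bound over tuples at all. It writes $h$ as a composition of $u-\ell$ random ``merge one kernel vector'' steps, tracks the potential $\Phi_i=\E_x[b^{|(x+V_i)\cap S|}]$, proves $\E[\Phi_{i+1}\mid \Phi_{\le i}]\le \Phi_i^2$ together with the monotonicity $\Phi_{i+1}-1\ge 2(\Phi_i-1)$, and converts these into a quadratically decaying tail bound on $\Phi_{u-\ell}$ (hence on the max-load via $\Phi\ge b^{M}/n$). This inherently handles the correlations among linearly dependent tuples that break the first-moment approach.
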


  Due to the classical balls-and-bins result, this shows that the expected maximum load of a random linear map is within a constant factor of the maximum load of a fully random function. Our proof uses potential functions, which more generally allows us to show that the maximum load exceeds $\frac{r\log n}{\log\log n}$ with probability $O(1/r^2)$.  
  We now provide context on our work, and cover prior results on linear hashing.

\subsection{Motivation}

Consider tossing $n$ balls uniformly and independently into $n$ bins. The maximum number of balls in any bin--the max load--is a well studied quantity critical to randomized algorithm design \cite[Chapter 5]{mitzupfal}. A common application arises in hashing with chaining. In this scenario, keys are mapped to addresses via a random hash, and each address holds a linked list of all keys mapping to it. Consequently, retrieving a key might require sweeping through the largest linked list, and so the worst-case retrieval time would be the max-load. 
%The expected max-load of a hash family is therefore the expected retrieval-time when the family is used to hash with chaining.

A classical result says that tossing $n$ balls randomly into $n$ bins will have expected max-load $O\left(\frac{\log n}{\log\log n
}\right)$, implying that a fully random hash will have only $O\left(\frac{\log n}{\log\log n
}\right)$ expected retrieval time. However, placing balls independently and uniformly requires sampling a truly random function, which has near-maximal time and space complexity. This raises a central question: \begin{center}\emph{What is the simplest hash family with optimal expected max-load?}\end{center}

An engineering approach to this question is to use any tools necessary to design hash functions that minimize the time and space complexity, say in the word-RAM model. Such an approach will naturally give low complexity hash functions, but at the cost of a more contrived construction that is hard to implement practically. Nevertheless, this approach has been extremely successful, and arguably started with Wegman and Carter's definition of $k$-wise independent hash functions \cite{WC81}. One can easily verify that $O(\log n/\log\log n)$-wise independent hash functions have optimal max-load, and only require $O(\log n/\log\log n)$ evaluation time and $O(\log^2 n/\log\log n)$ description size.
Recent works greatly optimized the use of $k$-wise independence, culminating in a construction of Meka, Reingold, Rothblum, and Rothblum \cite{crsw13,mrrr-fastloadbalance} only needing $O(\log n\log\log n)$ bits to describe and $O((\log\log n)^2)$ time to evaluate. This gets us within a $\text{poly}(\log\log n)$ factor of optimal in both parameters. However, these function rely on concatenating hashes of gradually increasing independence, which will require implementing either prime fields or polynomial rings, and performing $\omega(1)$ multiplications over them. Although possible, it is quite complicated and slow to do in practice.

The approach we adopt in this paper is to only consider simple and practical hash functions and analyze their properties as well as possible. To this end, Chung, Mitzenmacher, and Vadhan \cite{cmv13-simplehashingonentropicsources} showed that basic universal hash functions (e.g., linear congruence or multiplication schemes) can achieve optimal max-load if the balls are assumed to have high enough entropy, but say nothing about a worst-case choice of balls. In terms of worst-case results, a surprising result of P\v{a}tra\c{s}cu and Thorup \cite{PT12-tabulation} show that tabulation hashing has optimal max-load. This scheme is only 3-wise independent, is simple and practical to use, and has $O(1)$ evaluation time. However, it requires $O(n^\eps )$ bits to describe.
\subsubsection*{Linear Hash Functions}

%We note that the above evaluation times are in the word-RAM model, thereby hiding the complexity of arithmetic operations over $O(\log n)$-bit integers, like multiplication. However, this assumption becomes unrealistic for large input values. Indeed, Carter and Wegman cited this phenomenon as a main motivator behind the their universal hash constructions \cite{cw79}. All of the above constructions mentioned do involve multiplication, whose best known implementations require $\omega(\log n)$ basic bit operations. \vinny{fix using "How Fast Can We multiply large integers using computers" and motivate further with incremental crytpography, say we only use word additions, so in computationals where addition is better, this is great. Extendability, I can just add columns}

In this paper, we consider an extremely simple hash family proposed in the first paper on universal hashing \cite{cw79}: random matrices over $\F_2$. In particular, let $\ell \coloneqq \log n$, and say we arrange our $n$ bins into a vector space $\F_2^\ell$. Further, arrange the universe set into a vector space~$\F_2^u$. Our hash family is simply the set of linear maps $h:\F_2^u\to\F_2^\ell$.  This family is arguably the easiest to implement over all constructions mentioned thus far. For $u = O(\ell)$, this family only needs to bitwise XOR $O(\log n)$ words together, and takes $O(\log^2 n)$ bits to describe. This family requires no multiplication, is not even 3-wise independent, and can be seen as the simplest version of tabulation hashing, where the lookup tables only stores two values. 

Furthermore, linear hash functions can batch compute keys that are clustered together, which happens often in practice. This notion of efficiency, dubbed \emph{incremental cryptography}, was first introduced in a pair of papers by Bellare, Goldreich, and Goldwasser \cite{BGG94, BGG95}. The motivation behind incremental cryptography is that data to be hashed often consists of slight modifications of each other, whether it be consecutive frames of video footage or edited versions of files. Ideally, after computing the first hash, future hashes should have computation time proportional to the amount of modification made to this initial key. This is indeed the case for linear maps. Let $h(x) \coloneqq \sum_{i:x_i=1}c_i$ be a linear map whose matrix has columns $c_1,\dots, c_u$. If $x$ has Hamming weight $w$, $h(x)$ is simply a sum of $w$ column vectors. Hence, if $x, x^1,\dots x^{100}$ are keys such that each $x^i$ differs from $x$ in less than $w\ll u$ bits, we can compute all the hashes by computing $h(x)$, $ h(x+x^i)$ for all $i$, and then computing $h(x^i) = h(x)+h(x+x^i)$. This gives a total of $u+100w$ additions, which is better than the naive way of computing $h(x), h(x^1),\dots, h(x^{100})$ directly, which can take up to $100u$ additions.

Computational considerations aside, it is a fundamental question to ask how much a random matrix behaves like a random function. From a technical standpoint, dealing with the correlations between linearly dependent balls has proven to be challenging. For example, the mere existence of good linear seeded extractors is a longstanding open problem in pseudorandomness \cite[Question 7.6]{wootters2014thesis}, and highlights our lack of understanding of random linear maps.

%It may seem that because $\cH$ takes $O(\log n)$ time to evaluate and $O(\log^2 n)$ bits to store, the simple solution of $O(\log n/\log\log n)$-wise independent hash functions gives better parameters in both complexity measures. This is indeed the case under the standard assumption that multiplication costs unit time. However, as stated in Carter and Wegman \cite{cw79}, this assumption becomes unrealistic when the input universe set grows large. Without this assumption, $O(\log n/\log\log n)$-wise independence will now require $\Omega(\log^2 n\log\log n)$ evaluation time whereas a $\cH$ will still have evaluation time $O(\log n)$. All other examples mentioned above having better time and space complexity in the word RAM model have a similar blowup in runtime when accounting for multiplication \cite{crsw13,mrrr-fastloadbalance}.

%\vinny{Add pseudorandomness motivation here? } 
% Talk about linear-seeded extractors, get a sense for what is actually known, mention that LHL gives something with terrible seed length

\subsubsection*{The Expected Max-Load of Linear Maps}

Characterizing the max-load of $\cH$, the set of linear maps, has remained elusive. As a preliminary bound, the expected max-load of $\cH$ is $O(\sqrt{n})$ since it is universal \cite{cw79}. Indeed, some universal hash functions saturate this bound \cite{alon99}. However, Markowsky, Carter, and Wegman \cite{mcw78} showed $\cH$ achieves expected max-load $O(n^{1/4})$, significantly outperforming the generic universal bound \cite{alon99}. This initiated further study of $\cH$'s max-load, with Mehlhorn and Vishkin \cite{mv81} (implicitly) showing a subpolynomial bound of $2^{O(\sqrt{\log n})}$, and later with Alon, Dietzfelbinger, Miltersen, Petrank, and Tardos \cite{alon99} giving a bound of $O(\log n\log\log n)$. Since then, progress has largely stalled, except for a note by Babka \cite{babka18}, who observed that an improved choice of parameters in the argument of \cite{alon99} yields a bound of $O(\log n)$. Unfortunately, the argument in \cite{alon99} has an inherent barrier at $O(\log n)$,\footnote{Applying their argument to a purely random function only yields an $O(\log n)$ expectation bound on the max-load.} leaving open whether $\cH$ can match the performance of a fully random map. Indeed, this natural question was explicitly posed in \cite{alon99}.

% With regards to the max-load of $\cH$, the fact that $\cH$ is universal already guarantees an expected max-load of $O(\sqrt{n})$ \cite{cw79}, and indeed some universal hash functions saturate this bound \cite{alon99}. However, Markowsky, Carter, and Wegman \cite{mcw78} showed $\cH$ achieves expected max-load $O(n^{1/4})$, performing much better than the generic universal bound \cite{alon99}. This initiated work studying the max-load of $\cH$, with Mehlhorn and Vishkin \cite{mv81} (implicitly) showing a bound of $O(2^{\sqrt{\log n}})$, and later with Alon, Dietzfelbinger, Miltersen, Petrank, and Tardos \cite{alon99} giving a bound of $O(\log n\log\log n)$. Babka \cite{babka18} observed that a better setting of parameters in the argument of \cite{alon99} yields $O(\log n)$. Unfortunately, the argument in \cite{alon99} has an inherent barrier at $O(\log n)$,\footnote{Applying their argument to a purely random function only yields an $O(\log n)$ expectation bound on the max-load.} and left open whether $\cH$ could perform as well as a fully random map. Indeed, this natural question was explicitly posed in \cite{alon99}. 

\subsection{Our Results}

We fully resolve this question by showing that a random linear map hashing $n$ balls to $n$ bins will have expected max-load $O\left(\frac{\log n}{\log\log n}\right)$. Our proof easily generalizes to $m$ balls and $n$ bins for $m\neq n$. In particular, define \[\opt(m,n) = \begin{cases}\frac{\log n}{\log\left(\frac{n\log n}m\right)} & m\le \frac{1}2n\log n \\  \frac{m}n & m > \frac{1}2 n\log n.\end{cases}\] It is well known that a random function mapping $m$ balls to $n$ bins will have expected max-load $\Theta(\opt(m,n))$ \cite[Theorem 1]{RS98-ballsbinstightanalysis}. We show that a random linear map performs just as well. 

\begin{theorem}[\Cref{thm:startthingswithaBANG} generalized]
\label{thm:balls-and-bins}
    Let $u\ge \ell, m$ be integers, $n\coloneqq 2^\ell$ , and $\cH$ the set of linear maps $\F_2^u\to\F_2^\ell$. For any $S\subseteq \F_2^u$ with cardinality $m$, 
    %\[\Pr_{h\sim \cH}[\forall y\in \F_2^\ell, |h^{-1}(y)\cap S| \le \frac{48}{\eps} \opt(m,n)]\ge 1-\eps.\] 
    %\[\Pr_{h\sim \cH}[\exists y\in \F_2^\ell: |h^{-1}(y)\cap S| > (2+3/\eps) \cdot \opt(m,n)] < \eps.\] 
    \[\E_{h\sim \cH}\left[\max_{y\in \F_2^\ell} |h^{-1}(y)\cap S|\right] \le 16\cdot \opt(m,n)\]
\end{theorem}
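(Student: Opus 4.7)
The plan is to bound $\max_y L_y$ via a falling-factorial potential. Fix $k \coloneqq \max(2, \lceil \opt(m,n)\rceil)$, and consider
\[
\Phi_k(h) \coloneqq \sum_{y \in \F_2^\ell} \binom{L_y(h)}{k}.
\]
By Markov's inequality, for $t = rk$ with $r, k \geq 2$, we have $\Pr[\max_y L_y \geq rk] \leq \E[\Phi_k]/\binom{rk}{k} \leq \E[\Phi_k]/r^2$, since $\binom{rk}{k} \geq r^2$ when $r, k \geq 2$. Summing this tail yields $\E[\max_y L_y] \leq O(k \cdot \E[\Phi_k]) = O(\opt(m,n))$ provided $\E[\Phi_k] = O(1)$.

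Linearity of expectation reduces $\E[\Phi_k]$ to a counting problem:
\[
\E[\Phi_k] = \sum_{T \in \binom{S}{k}} \Pr[h \text{ is constant on } T] = \sum_{T \in \binom{S}{k}} n^{-d(T)},
\]
where $d(T)$ denotes the dimension of the affine hull of $T$. The last equality uses a standard consequence of linearity: $h|_T$ is constant iff $h$ vanishes on the linear part $V_T$ of the affine hull of $T$, an event of probability $n^{-d(T)}$ for a uniformly random linear $h$. The crux is thus to show $\sum_T n^{-d(T)} = O(1)$ at $k = \opt(m,n)$; the affinely independent $T$'s (where $d(T) = k-1$) contribute $\binom{m}{k}/n^{k-1}$, matching the corresponding bound for a truly random hash, which is indeed $O(1)$ at this choice of $k$.

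The main obstacle is controlling the contribution from affinely dependent $T$ (with $d(T) < k-1$), for which $n^{-d(T)}$ is exponentially larger. The natural refinement is to decompose by affine hull,
\[
\sum_T n^{-d(T)} = \sum_{A \text{ affine}} n^{-\dim A} \cdot |\{T \in \tbinom{S \cap A}{k} : \text{aff span}(T) = A\}|,
\]
and introduce a potential function over affine subspaces weighted by $n^{-\dim A}$, arguing that the total weight remains bounded. For ``structured'' $S$ (e.g., $S$ itself an affine subspace), naive weighted sums over affine subspaces diverge, even though the actual max load in such cases is small because $h|_S$ typically has near-full rank---a completely different mechanism. The principal technical challenge is therefore to design a single potential-function argument that unifies both regimes: the random-like regime (handled by the moment estimate) and the structured regime (where rank concentration of $h|_S$ must be invoked). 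I expect this unification to proceed by induction on affine dimension, with the potential carefully weighting affine subspaces so that ``dangerous'' highly-structured contributions are damped by the corresponding improved probability bound. Tracking constants through this argument gives the factor $16$ in the theorem statement.
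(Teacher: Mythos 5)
There is a genuine gap, and it sits exactly where you flagged it: the claim that $\E[\Phi_k]=\sum_{T}n^{-d(T)}$ can be made $O(1)$ is false for worst-case $S$, and the ``unification'' you defer to is not sketched in any way that could be completed. Concretely, take $S$ to be a linear subspace of dimension $\log m$. The affine subspaces $A\subseteq S$ of dimension $d\approx\log k$ already contribute on the order of $2^{\log m+d(\log m-d)}\cdot n^{-d}$, which for $m=n$ is about $n\cdot 2^{-d^2}=n^{1-o(1)}$ since $d^2\approx(\log\log n)^2\ll\log n$. So the potential you propose is polynomially large in $n$ precisely on the structured inputs, and Markov's inequality on $\Phi_k$ gives nothing. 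Saying that in this regime the max load is controlled ``by a completely different mechanism'' (rank concentration of $h|_S$) and that one should ``design a single potential-function argument that unifies both regimes... by induction on affine dimension'' is a restatement of the obstacle, not a proof step. This is essentially the barrier that kept the problem open since Alon et al.: linear maps are not $3$-wise independent, so the $k$-th falling-factorial moment method that works for $O(\log n/\log\log n)$-wise independent families does not transfer, and no one has made the affine-hull decomposition close.

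The paper avoids this entirely by using a different potential: the exponential moment $\Phi_i=\E_{x}\bigl[b^{S_i(x)}\bigr]$ of the bin loads, analyzed by revealing the kernel of $h$ one vector at a time. Each step satisfies $\E[\Phi_{i+1}\mid\Phi_{\le i}]\le\Phi_i^2$, and a bespoke tail lemma---which crucially also exploits the deterministic monotonicity $\Phi_{i+1}-1\ge 2(\Phi_i-1)$ to upgrade a $1/r$ tail to a $1/r^2$ tail---yields quadratically decaying bounds on the max load, which then integrate to the stated expectation bound. If you want to salvage your approach, you would need to prove a statement of the form ``the total weight $\sum_A n^{-\dim A}\cdot(\text{count})$ restricted to low-dimensional $A$ is compensated by the improved behavior of $h$ on such $A$,'' and no such compensation is available through Markov on a single global moment; some adaptive or sequential revelation of $h$ (as in the paper) appears to be essential.
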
 

Our \cref{thm:balls-and-bins} is a simple corollary of the following theorem, which gives quadratically decaying tail bounds on the max-load.

\begin{theorem}
     Let $u\ge \ell\ge 1, m\ge 1$ be integers, $n\coloneqq 2^\ell$, and $\cH$ the set of linear maps $\F_2^u\to\F_2^\ell$. For any $S\subset \F_2^u$ with cardinality $m$ and $r\ge 6$, 
    %\[\Pr_{h\sim \cH}[\forall y\in \F_2^\ell, |h^{-1}(y)\cap S| \le \frac{48}{\eps} \opt(m,n)]\ge 1-\eps.\] 
    %\[\Pr_{h\sim \cH}[\exists y\in \F_2^\ell: |h^{-1}(y)\cap S| > (2+3/\eps) \cdot \opt(m,n)] < \eps.\] 
    \[\Pr_{h\sim \cH}\left[\max_{y\in \F_2^\ell} |h^{-1}(y)\cap S|\ge r\cdot \opt(m,n)\right] \le \frac{49}{(r-2)^2}.\]
\end{theorem}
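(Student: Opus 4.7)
The plan is to bound the max load via a moment method tailored to the algebraic structure of random linear maps. The crucial identity is that for any subset $T\subseteq\F_2^u$, letting $d(T)$ denote the $\F_2$-dimension of $\mathrm{span}(T-T)$, we have
\[
\Pr_{h\sim\cH}[h \text{ is constant on } T]=2^{-\ell\cdot d(T)}=n^{-d(T)},
\]
since $h$ is constant on $T$ iff $\mathrm{span}(T-T)\subseteq \ker h$, and the $\ell$ random linear functionals making up $h$ simultaneously vanish on a fixed $d(T)$-dimensional subspace with probability $2^{-\ell\cdot d(T)}$. For a fully random function, only $k$-subsets of affine dimension exactly $k-1$ contribute appreciably; linear maps additionally generate collisions from affinely-structured subsets of $S$.

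Choose $k$ comparable to $\opt(m,n)$ and define the potential
\[
\Phi(h)\coloneqq \sum_{y\in\F_2^\ell}\binom{|h^{-1}(y)\cap S|}{k}=\sum_{\substack{T\subseteq S\\|T|=k}}\mathbf{1}[h \text{ is constant on }T],
\]
so if $M$ denotes the max load, then $M\ge L$ implies $\Phi\ge\binom{L}{k}$. Expanding gives $\E[\Phi]=\sum_{T\subseteq S,\,|T|=k} n^{-d(T)}$; partitioning by the affine dimension $d$ of $T$ yields $\E[\Phi]=\sum_d N_{k,d}\cdot n^{-d}$ where $N_{k,d}\coloneqq \#\{T\subseteq S:|T|=k,\,d(T)=d\}$. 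The central combinatorial step is to show that $N_{k,d}\cdot n^{-d}\le C\cdot \binom{m}{k}\cdot n^{-(k-1)}$ uniformly in $d$, i.e., that contributions from lower-dimensional subsets never exceed the generic full-dimension term by more than a universal constant. With such a bound, $\E[\Phi]$ stays $O(1)$ when $k\approx \opt(m,n)$, matching the random-function case.

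For the tail I would apply Markov: $\Pr[M\ge L]\le \E[\Phi]/\binom{L}{k}$ with $L=r\cdot \opt(m,n)$. Keeping $k$ near $\opt(m,n)$ makes $\binom{L}{k}\ge (L/k)^k \gtrsim r^k$, which dominates the required $r^2$ by a wide margin for $r\ge 6$, so the stated rate $49/(r-2)^2$ should be attainable with significant slack in the exponent. The main anticipated obstacle is proving the structural bound on $N_{k,d}$ uniformly over every worst-case $S$: a naive counting by choosing a base point of $T$ and extending it into an affine $d$-space is too loose, because it ignores both the overlap of affine structures within $S$ and the $k$-fold base-point overcount within each subset. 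Sharpening it likely requires a peeling argument that iteratively removes the densest affine structures from $S$, or a direct comparison to a worst-case affinely-structured majorizer. Once the structural lemma is in place, plugging into Markov and tracking constants should close the argument.
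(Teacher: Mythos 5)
Your identity $\Pr_{h}[h\text{ constant on }T]=n^{-d(T)}$ is correct, but the ``central combinatorial step'' your argument rests on --- that $N_{k,d}\cdot n^{-d}\le C\binom{m}{k}n^{-(k-1)}$ uniformly in $d$ --- is false, and no peeling or majorization argument can repair it because the quantity you are trying to bound, $\E[\Phi]$, is genuinely of order $n^{1-o(1)}$ for the worst-case $S$. Take $S$ to be an $\ell$-dimensional linear subspace of $\F_2^u$ (so $m=n$), $k\approx \log n/\log\log n$, and $d=\lceil \log_2 k\rceil$. Every $d$-dimensional affine subspace of $S$ contains a $k$-subset of affine dimension exactly $d$, and distinct such subsets have distinct affine spans, so $N_{k,d}\ge 2^{(\ell-d)(d+1)}=n^{d+1}2^{-d(d+1)}$ and hence
\[
N_{k,d}\cdot n^{-d}\;\ge\; n\cdot 2^{-d(d+1)}\;=\;n\cdot 2^{-O((\log\log n)^2)}\;=\;n^{1-o(1)},
\]
whereas $\binom{m}{k}n^{-(k-1)}\le n\,(e/k)^{k}=n^{o(1)}$ since $k^k=n^{1-o(1)}$. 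The same phenomenon is visible probabilistically: with probability about $2^{-j^2}$ the kernel of $h$ meets $S$ in a $j$-dimensional subspace, in which case every nonempty bin holds $2^j$ balls and $\Phi=2^{\ell-j}\binom{2^j}{k}$; taking $2^j\approx k$ contributes roughly $(n/k)\cdot 2^{-(\log k)^2}=n^{1-o(1)}$ to $\E[\Phi]$. So Markov applied to this potential cannot give any nontrivial bound, even though the max load for this $S$ is in fact $O(\log n/\log\log n)$ with high probability. This is exactly the obstruction that kept the problem open: the $k$-th falling-moment method works for fully random or $\Theta(\log n/\log\log n)$-wise independent hashing, but linear maps concentrate their collision probability on low-dimensional (affinely structured) subsets, and those rare, highly correlated events dominate every fixed moment of the load.

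The paper avoids this by never computing a global moment of the final hash. Instead it uses the exponential potential $\Phi_i=\E_{x}[b^{S_i(x)}]$, reveals the kernel one random vector at a time, proves the conditional recursion $\E[\Phi_{i+1}\mid \Phi_{\le i}]\le \Phi_i^2$ together with the monotonicity $\Phi_{i+1}-1\ge 2(\Phi_i-1)$, and then converts these into a tail bound on $\Phi_k$ via a bespoke martingale-type lemma; Markov's inequality is only ever applied one step at a time, which is what tames the heavy-tailed contribution of structured subsets. If you want to pursue your route you would have to condition away the event that $\ker h$ has large intersection with the affine structure of $S$ before taking moments, but controlling that event over all affine subspaces is essentially the original problem in disguise.
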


Our techniques arguably yield a more streamlined proof than \cite{alon99}, which first proves a coupon collector property, and then cleverly converts it to a max-load bound. In contrast, we directly analyze the max-load by defining a potential function and tracking its growth. We elaborate further in \Cref{sec:tech}.

%When $m=n$, we also show the expected max-load of $\cH$ is $O\left(\frac{\log n\log\log\log n}{\log\log n}\right)$, which improves on the previous expectation bound of $O(\log n)$ \cite{alon99,babka18}. This brings us a $\log\log\log n$ factor away from the optimal quantity, which would answer a question of \cite{alon99}.

In terms of matching lower bounds, we note \Cref{thm:balls-and-bins} is tight up to constant factors. In particular, Celis, Reingold, Segev, and Wieder show that \emph{any} family of hash functions will have max-load  $\Omega(\opt(m,n))$ with high probability \cite[generalization of Theorem 5.1]{crsw13}. 

%When $m = O(n\log n)$, we show that most sets of $m$ balls have \emph{no} linear map that achieves a max-load better than $\Theta(\opt(m,n))$ with high probability. Here, we show

%\begin{theorem}
%    Let $m,n$ be integers, and define $\ell \coloneqq \log n$. There exists a set $S\subset\F_2^u$ of cardinality $m$ such that for every linear map $h:\F_2^u\to\F_2^\ell$, there exists $y$ such that $|h^{-1}(y)\cap S|\ge (1-o(1))\opt(m,n)$. 
%\end{theorem}

%The proof of this uses a standard probabilistic method argument and uses techniques from the fully random balls-and-bins setting.
Interestingly, in the regime $m =  \Omega(n\log n)$, we show that with high probability, \emph{every} bin has load within a constant factor of $\opt(m,n)$.

\begin{theorem}
\label{thm:our-two-sided}
     Let $u\ge \ell\ge 1, m\ge 1$ be integers, $S\subset\F_2^u$ a set of cardinality $m$, and $\cH$ the set of linear maps $\F_2^u\to\F_2^\ell$. For every $0 < \eps < 1/2$, there exists constants $C_1 < C_2$ depending on $\eps$ such that for $m \ge C_1^{-1}n\log n$,
     \[\Pr_{h\sim \cH}\left[\forall y\in \F_2^\ell,~ C_1 \frac{m}n\le |h^{-1}(y)\cap S|\le C_2 \frac{m}n\right] \ge 1-\eps.\] In particular, $C_1 = \Omega(\eps^{74})$ and $C_2 = O(\eps^{-1/2})$.
     %\[\Pr_{h\sim \cH}\left[\forall y\in \F_2^\ell,~ C_1 \opt(m,n)\le |h^{-1}(y)\cap S|\le C_2 \opt(m,n)\right] \ge 1-\eps.\]
\end{theorem}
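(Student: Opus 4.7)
The plan is to prove the upper and lower halves separately; the upper half is an immediate corollary of the previous tail-bound theorem, while the lower half (the min-load claim) requires new work.

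\textbf{Upper bound.} Since $m\ge C_1^{-1}n\log n \ge \tfrac12 n\log n$ (for $C_1\le 2$), we are in the regime $\opt(m,n)=m/n$. Applying the preceding tail-bound theorem with $r = 2+\sqrt{98/\eps}$ gives
\[\Pr_{h}\bigl[\max_{y\in\F_2^\ell}|h^{-1}(y)\cap S|\ge r\cdot m/n\bigr]\le \frac{49}{(r-2)^2}\le \frac{\eps}{2},\]
so $C_2 = r = O(\eps^{-1/2})$ suffices with failure probability at most $\eps/2$.

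\textbf{Lower bound.} Let $X_y\coloneqq |h^{-1}(y)\cap S|$. The plan is to bound the single-bin lower tail $\Pr[X_y\le C_1 m/n]$ uniformly in $y$ and union-bound over the $n$ bins. The naive Chebyshev route uses only pairwise independence of $\cH$ on distinct nonzero inputs, which yields $\mathrm{Var}[X_y]\le m/n$ and hence $\Pr[X_y\le(1-\delta)m/n]\le n/(\delta^2 m)$; even at $m\asymp n\log n$ this is $\Theta(1/\log n)$ per bin, far too weak to survive a union bound over $n$ bins. The plan is therefore to establish a higher-moment inequality of Gaussian type,
\[\E\bigl[(X_y-\E X_y)^{2k}\bigr]\le (C\,k\cdot m/n)^k \qquad\text{for } k=\Theta(\log n).\]
Markov's inequality then gives $\Pr[X_y\le (1-\delta)m/n] \le \bigl(Ck/(\delta^2 m/n)\bigr)^k$, which in the regime $m/n \gtrsim C_1^{-1}\log n$ and $k=\Theta(\log n)$ is polynomially small in $n$, so the union bound over $\F_2^\ell$ closes.

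\textbf{Strategy for the moment bound.} Expand
\[\E[(X_y-\E X_y)^{2k}] = \sum_{(s_1,\ldots,s_{2k})\in S^{2k}}\E\!\left[\prod_{i=1}^{2k}\bigl(\mathbf{1}[h(s_i)=y]-1/n\bigr)\right].\]
Since $h$ is linear, $(h(s_1),\ldots,h(s_{2k}))$ is uniformly distributed on the subspace cut out by the $\F_2$-linear dependencies among the $s_i$'s; in particular every summand corresponding to a linearly independent tuple vanishes, just as it would in the fully independent setting. The nontrivial task is to enumerate the ``dependent'' tuples by their linear matroid type and bound their joint contribution on an arbitrary $S$, exploiting the fact that each dependency forces a large preimage coincidence in $S$ and hence is rare. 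I would expect this enumeration to be performed by the same potential-function bookkeeping used to prove the main tail-bound theorem, only now applied symmetrically to track the lower tail instead of the upper one.

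\textbf{Main obstacle.} The crux is the moment bound: pairwise independence suffices for $k=1$, but pushing to $k=\Theta(\log n)$ forces us to quantify how much the $\F_2$-linear structure of $S$ can distort the ideal $k$-wise independent behavior. Carefully tracking the combinatorial loss through the dependence-type enumeration, together with the union bound and the parameter $r$ selection for the upper half, produces the stated polynomial-in-$\eps$ constants $C_1=\Omega(\eps^{74})$ and $C_2=O(\eps^{-1/2})$.
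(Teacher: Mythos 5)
Your upper-bound half is fine and is exactly what the paper does: apply the quadratic tail bound with $r = 2+\sqrt{98/\eps}$, which gives $C_2 = O(\eps^{-1/2})$ and failure probability at most $\eps/2$ since $\opt(m,n)=m/n$ in this regime.

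The lower-bound half has a genuine gap, and the overall strategy (per-bin tail bound plus union bound over the $n$ bins) cannot be repaired. Take $S$ to be a linear subspace of dimension $d=\log m$. Then $h|_S$ is a uniformly random linear map $\F_2^d\to\F_2^\ell$, which fails to be surjective with probability $\Theta(2^{\ell-d})=\Theta(n/m)$; on that event its image is a proper subspace, so at least half of the bins receive \emph{zero} balls. Hence for a fixed bin $y$, with $m=\Theta(n\log n)$ one has $\Pr[X_y=0]=\Omega(1/\log n)$. A union bound over $n$ bins needs each bin's lower tail to be $O(\eps/n)$, so no per-bin bound can close the argument; the theorem is only true because these bad events are massively correlated across bins. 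This also falsifies the moment inequality you posit as the crux: for this $S$, $\E[(X_y-m/n)^{2k}]\ge \Omega(1/\log n)\cdot(m/n)^{2k}$, which forces the constant $C$ in $(Ck\,m/n)^k$ (with $k=\Theta(\log n)$, $m/n=\Theta(\log n)$) to be at least $1-o(1)$, while your Markov step requires $C/\delta^2<1$ with $\delta$ near $1$. The paper sidesteps all of this by never working per bin: it first factors $h=h_1\circ h_2$ through an intermediate space $\F_2^t$ with $t=\log(4m/\eps)$ to control collisions, and then uses a single \emph{global} potential $\Phi_i=\E_x[(1/2)^{S_i(x)}]$ with base $b=1/2<1$, which is large as soon as \emph{some} coset is light (\Cref{clm:blanketing}); the whole content is then a doubly-exponential-decay tail bound for this one scalar (\Cref{lem:square-decay}, proved via a strides argument and Azuma's inequality). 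That global, all-bins-at-once viewpoint is the missing idea in your proposal, and the deferred ``linear-matroid enumeration'' is not a fillable gap.
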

This can be seen as a generalization of the result of \cite[Theorem 7]{alon99}, which states that $\cH$ satisfies the \emph{covering property}. The covering property, named after the fact that the cover time of a random walk on the complete graph on $n$ vertices is $O(n\log n)$, says that if $\Omega(n\log n)$ balls are mapped to $n$ bins by $h\sim \cH$, every bin will be occupied with high probability. We show these hash functions actually satisfy a  \emph{blanketing property}: if $\Omega(n\log n)$ balls are mapped to $n$ bins by $h\sim \cH$, every bin will contain $\Omega(\log n)$ balls with high probability. A fully random $h$ has this property by the fact that the blanket time \cite{WinklerZuckerman1996} of the complete graph is $O(n\log n)$. Interestingly, we use potential functions to prove this claim as well.

\subsubsection*{Comparison with Dhar and Dvir}

A recent work of Dhar and Dvir \cite{dhardvir} also established two-sided bounds on all bins for linear hash functions. They proved the following: 

\begin{theorem}[\hspace{1sp}\cite{dhardvir}, Theorem 2.4]\label{thm:dhardvir}
    Let $u\ge \ell\ge 1$ be integers, $n\coloneqq 2^\ell$, $\eps,\tau\in (0,1)$, and $S \subset \F_2^{u}$ a set of cardinality $m$. Let $\cH$ be the set of linear maps $h:\F_2^u\to\F_2^\ell$. For $m = \Omega(n \log^4 (n/\tau\eps))$ we have
    \[\Pr_{h\sim \cH}\left[\forall y\in \F_2^\ell,~ (1-\tau)\frac{m}{n} \leq |h^{-1}(y) \cap S| \leq (1+\tau)\frac{m}{n}\right] \ge 1-\eps.\]
    
\end{theorem}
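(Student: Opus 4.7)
My plan is to use a moment method on the load deviations $L_y - m/n$, with the moments computed via Fourier analysis over $\F_2^\ell$. Let $M \in \F_2^{\ell\times u}$ be the random matrix representing $h$. By Plancherel,
\[L_y - \tfrac{m}{n} \;=\; \tfrac{1}{n}\sum_{a\in\F_2^\ell\setminus\{0\}}(-1)^{\langle a,y\rangle}\,T_a,\qquad T_a := \sum_{x\in S}(-1)^{\langle M^T a,\,x\rangle},\]
so each $T_a$ equals $\hat{\mathbb{1}_S}(M^T a)$, an evaluation of the Fourier transform of $\mathbb{1}_S$ at the image of $a$ under $M^T$. Since $\mathrm{image}(M^T)$ is a uniformly random $\ell$-dimensional subspace of $\F_2^u$ (conditional on $M$ having rank $\ell$, which holds with probability $1-o(1)$ when $u \gg \ell$), the task reduces to bounding the $\ell^\infty$-norm of the inverse Fourier transform of $\{T_a\}_a$ by $\tau m/n$ with probability $1-\eps$.

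The main calculation is the $2k$-th moment, summed over bins, for $k = \Theta(\log(n/\tau\eps))$. Expanding the products and invoking $\E_M[(-1)^{\mathrm{tr}(MA)}] = \mathbb{1}[A=0]$,
\[\E_M\!\Big[\sum_y\big(L_y - \tfrac{m}{n}\big)^{2k}\Big] \;=\; \frac{1}{n^{2k-1}}\!\!\sum_{\substack{a_1,\dots,a_{2k}\ne 0\\ \sum_i a_i=0}}\! N(a_1,\dots,a_{2k}),\]
where $N(\cdot)$ counts tuples $(x_1,\dots,x_{2k}) \in S^{2k}$ satisfying $\sum_i x_i a_i^T = 0$. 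Forming the $2k\times\ell$ matrix with rows $a_i^T$, this constraint says that for every $\alpha$ in the column span $V\subset\F_2^{2k}$ of that matrix, $\sum_i \alpha_i x_i = 0$ in $\F_2^u$. Partitioning by $d := \dim V = \dim\mathrm{span}(a_1,\dots,a_{2k})$ (which satisfies $d \le \min(\ell, 2k-1)$ on the support) gives $N \le m^{2k-d}$, while the number of sum-zero tuples with given span dimension admits a Gaussian-binomial bound. Summing over $d$ and applying Markov followed by a union bound over the $n$ bins then yields the claim whenever $m \ge Cn\log^4(n/\tau\eps)$ for a sufficiently large constant $C$.

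The main obstacle I expect is the structured regime, e.g., when $S$ is itself a subspace of $\F_2^u$: there the additive energy saturates at $E_{2k}(S) = m^{2k-1}$ and the moment bound alone is too weak. The remedy is to split $S$ into a Fourier-flat part (low $\ell^\infty$-norm of $\hat{\mathbb{1}_S}$) and a Fourier-heavy part whose mass is supported on a small set of frequencies. The flat part admits the moment bound above; the heavy part affects $T_a$ only when $M^T a$ hits one of the heavy frequencies, a rare event of probability $O(n/m)$ controlled by a direct union bound. Stitching these two estimates together and extracting the precise $\log^4$ factor is the technical crux of the argument.
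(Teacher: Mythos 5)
First, a point of orientation: the paper does not prove \Cref{thm:dhardvir} at all --- it is imported verbatim from Dhar and Dvir, and the surrounding discussion records that their proof reduces the two-sided load bound to a two-sided Kakeya/Furstenberg-set estimate and then applies the polynomial method. So your attempt is necessarily measured against that external argument, and it takes a genuinely different route. The first half of your setup is correct: the identity $L_y-\tfrac{m}{n}=\tfrac1n\sum_{a\ne0}(-1)^{\langle a,y\rangle}T_a$, the reduction of $\E_M[\prod_i T_{a_i}]$ to the count $N(a_1,\dots,a_{2k})$ of tuples in $S^{2k}$ annihilated by the column span of the $a$-matrix, the bound $N\le m^{2k-d}$ (row-reduce and solve for the $d$ pivot coordinates), and $d\le\min(\ell,2k-1)$ are all right. (Two small points: $\E_M[(-1)^{\mathrm{tr}(MA)}]=\mathbf{1}[A=0]$ needs $M$ uniform over \emph{all} matrices, so you should not condition on $\mathrm{rank}(M)=\ell$ before this computation; and since you bound $\E[\sum_y(L_y-m/n)^{2k}]$, a single application of Markov controls all bins at once --- no further union bound over $y$ is needed.)

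The genuine gap is that the argument stops exactly where the theorem becomes hard, and the sketched repair does not close it. The $d=1$ terms $a_1=\cdots=a_{2k}=a$ contribute $\tfrac{n-1}{n^{2k-1}}E_{2k}(S)$ where $E_{2k}(S)$ is the number of $2k$-tuples of $S$ summing to zero; when $S$ is a subspace this equals $m^{2k-1}$, and Markov then needs $m\gtrsim n^2\tau^{-2k}/\eps$ --- hopeless for $k=\Theta(\log(n/\tau\eps))$. Your proposed flat/heavy decomposition is only gestured at, and the subspace example shows it cannot be stitched together as described: for $S$ a subspace, every load equals $|S\cap\ker h|\cdot\mathbf{1}[y\in h(S)]$, so the conclusion fails precisely when $h(S)\ne\F_2^\ell$, an event of probability at least $1/m$ and in fact $\Theta(n/m)$. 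Your ``union bound over the heavy frequencies'' reproduces exactly this $O(n/m)$, which under $m=\Theta(n\log^4(n/\tau\eps))$ is $\Theta(1/\log^4(n/\tau\eps))$ and is \emph{not} $O(\eps)$ once $\eps\ll1/\log^4(n/\tau)$. (This also shows the hypothesis as quoted, read with an absolute implied constant, cannot be literally correct for very small $\eps$; in Dhar--Dvir the constant is allowed to depend on $\eps$ and $\tau$, and you would need to work with that version.) More importantly, nothing in your moment computation produces the $\log^4$ factor --- in Dhar--Dvir it emerges from the polynomial-method incidence bound, for which your sketch has no analogue. As written, this is a plausible program with a correct first half, not a proof.
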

%There are few differences to note between \Cref{thm:our-two-sided} and \Cref{thm:dhardvir}. The most obvious difference is that Dhar and Dvir give a load guarantee for \emph{every} bucket, whereas our result controls the load of the heaviest bucket. They also establish two sided \emph{additive error} bounds to the optimal load, whereas we only establish a constant multiplicative error bound. However, Dhar and Dvir's results only hold in the regime $m = \Omega(n\log^4 n)$, whereas we can handle $m$ as small as $\Theta(n\log n)$. This gap between $m$ and $n$ is optimal, as it was shown in ~\cite{alon99} that there exists sets of size $.6n\log n$ such that no linear map will even have all buckets occupied.
 Take $\eps = O(1)$. We note that \Cref{thm:dhardvir} gives strong deviation bounds from the optimal load, especially if one sets $\tau = o(1)$, whereas our \Cref{thm:our-two-sided} only establishes that all bins are within some constant multiplicative factor of optimal. However, \Cref{thm:dhardvir} only holds in the regime $m = \Omega(n\log^4 n)$, whereas \Cref{thm:our-two-sided} can handle $m$ as small as $\Theta(n\log n)$. This gap between $m$ and $n$ is optimal, as it was shown in ~\cite{alon99} that there exists sets of size $0.69n\log n$ such that no linear map will even occupy all bins. Our techniques are also quite different. Dhar and Dvir \cite{dhardvir} first reduce the problem of bounding the max-load to bounding the size of Furstenberg sets and then apply the polynomial method, while our proof constructs a potential function and bounds its growth to yield the result. 

In summary: 
\begin{itemize}
    \item when $m = O(n\log n)$, \Cref{thm:balls-and-bins} gives us optimal upper bounds (up to a constant multiplicative factor) on all bins,
    \item when $m\in [\Omega(n\log n), O(n\log^4 n)]$, \Cref{thm:our-two-sided} gives us two-sided multiplicative bounds from the mean load on all bins,
    \item and when $m = \Omega(n\log^4 n)$, \Cref{thm:dhardvir} gives very strong bounds on the additive deviation of all bins from the mean.
\end{itemize}

\subsection{Proof Overview}
\label{sec:tech}

For simplicity, assume $\cH$ is a random \emph{surjective} linear map $\F_2^u\to \F_2^\ell$. We would like to directly analyze the load distribution of $\cH$, but this is quite complicated to do. Instead, for a set of balls~$S$ and hash $h$, we define a potential function $\Phi$ that measures how ``imbalanced'' the allocation of $S$ by $h$ is. In particular, we want a potential function $\Phi \coloneqq \Phi(S,h)$ such that if one preimage of $h$ contains $\ge t$ elements in $S$, then $\Phi \ge f(t)$ for some function $f$. The hope is now that analyzing $\Phi$ will be much easier than analyzing the load distribution directly.

 We use the potential $\E_y[b^{|h^{-1}(y)\cap S|}]$, which takes the average of the exponentials of all the bin loads with some base $b>1$. To analyze this potential, we think of hashing our universe, $\F_2^u$, ``one kernel vector at a time.'' In particular, any $h:\F_2^u\to\F_2^\ell$ can be decomposed into $h = h_1\circ h_2\circ\cdots \circ h_{u-\ell}$, where $h_i:\F_2^{u-i+1}\to \F_2^{u-i}$ is a random surjective map. Each $h_i$ is much simpler to analyze, and allows us to show that if $h_{\le i} \coloneqq h_1\circ \cdots \circ h_i$ and $\Phi_i \coloneqq \Phi(S,h_{\le i})$, then $\E[\Phi_{i+1}|\Phi_1,\dots \Phi_i]\le \Phi_i^2$.

It remains to show that these conditional expectation bounds can be leveraged to give a tail bound on $\Phi_{u-\ell} = \Phi$. A tempting approach is to use the conditional expectations to bound $\E[\Phi] \le f(t)/2$, and then apply Markov's inequality to get a tail bound. Unfortunately, there exists random variables satisfying the conditional expectation bounds, but with $\E[\Phi]$ much larger than $f(t)$. Nevertheless, we prove a technical lemma showing that although $\E[\Phi]$ may be larger than $f(t)$, the conditional expectation bounds enforce that $\Phi$ \emph{will still be smaller than $f(t)$ with good probability}. 
%Our proof is heavily motivated by and iterates the ``geometric'' proof of Markov's inequality,\footnote{See Theorem 1.1 of \url{http://courses.cms.caltech.edu/cs139/notes/lecture01.pdf}} \mj{what are your thoughts on the reviewer's comment on this link? i looked at the link and the proof you are mentioning is pretty straightforward/correct so im okay either way}\vinny{I think the reviewer's concern is valid, i just havent been able to find a literature citation with this proof :(}\mj{it just occurred to me that if the link gets taken down that is a bigger problem, what if we just write it down somewhere in the paper since it is short?}
%We believe this lemma may be of independent interest.
A tail bound of similar flavor was established in \cite{alon99}, but only works on random variables less than 1, and is inapplicable to our setting.

This lemma allows us to prove that the max-load exceeds $\frac{r\log n}{\log\log n}$ with probability $O(1/r)$. Unfortunately, this is not enough to deduce an $O(\log n/\log\log n)$ bound in expectation. In fact, only using the property $\E[\Phi_{i+1}|\Phi_1,\dots \Phi_i]\le \Phi_i^2$ provably cannot give a stronger tail bound. Thankfully, our potential functions have a strong monotonicity property: $\forall i, \Phi_{i+1}-1\ge 2(\Phi_i-1)$. With some technical work, we can leverage the monotonicity and squared conditional expectation properties to obtain quadratically stronger tail bounds, from which optimal expected max-load follows straightforwardly.
%Using the monotonicity property, squared conditional expectation property, and some technical work, we can then derive quadratically stronger tail bounds, which easily imply optimal expected max-load.

Surprisingly, our potential functions also allow us to establish \emph{lower bounds} on the loads of all bins. By setting the base $b<1$, the potential now detects whether some bin is light. Combining this with the max-load analysis allows us to deduce our two-sided bounds on the bin loads.

To the best of our knowledge, our analysis provides the first proof of optimal max-load for a function that is universal, but not 3-wise independent. In fact, our proof technique applies to a broader class of hash functions mapping $[2^u]\to[2^\ell]$. Imagine starting with $2^u$ bins, where each of the $2^u$ universe elements are in their own bin. For $u-\ell$ iterations, pseudorandomly pair bins up in a pairwise independent fashion (i.e., for an arbitrary bin, the marginal distribution of the bin's partner is uniform among all bins) and consolidate each pair into one bin. Consequently, each iteration halves the number of bins, and at the end of this process, we will have hashed into $n$ bins. Our techniques show that any such hashing scheme will have optimal expected max-load. The case of surjective linear maps is when in each round, a random vector $v$ is picked, and each bin $x$ is paired with bin $x+v$.
%We note that the proof of \cite{alon99} can be viewed as using a degenerate case of our potential functions. When proving that $\cH$ has the covering property, \cite{alon99} analyzes the decay rate of the density of empty buckets. The density of empty buckets is simply the exponential potential function with base $b=0$ (with the convention $0^0 = 1$).

%\subsection*{Organization} We give preliminaries \Cref{sec:prelim}. We then define the potential functions and prove their relevant properties in \Cref{sec:assocfuncpotential}. As a warm-up, we will then give a short proof that random linear maps have optimal max-load with probability $.99$ in \Cref{sec:whp}. This will capture the main intuition behind the proof of optimal expected max-load in \Cref{sec:expectation}. Finally, we show our two-sided bounds in \Cref{sec:two-sided}.
\section{Preliminaries}
\label{sec:prelim}

\subsection{Notation}
We let $\log$ denote the base-2 logarithm, and $\ln$ denote the base-$e$ logarithm. For a list of vectors $v_1,\dots, v_n$, we will denote the tuple $v_{\le i} \coloneqq (v_1, v_2, \dots, v_i)$ and $V_i = \Span(v_{\le i})$ (by convention, $V_0 = \{0\}$). We will use the convention $0^0=1$. Throughout this paper, we always have $\ell = \ell(n) \coloneqq \log n$ and $n$ a power of 2. $\F_2$ is the finite field over $\{0,1\}$. For a set $S$, we write $s\sim S$ to denote that $s$ is sampled uniformly at random from $S$.

\begin{definition}
\label{defn:maxload}
Let $h: A\to B$ and $S\subset A$ be a subset. We define the maximum load function \[M(S,h) \coloneqq  \max_{y\in B}|h^{-1}(y)\cap S|.\] 
\end{definition}
%If $A = B$, then we will define $h^{(m)} = h\circ h\circ\cdots \circ h$ to be $m$-fold composition of $h$ with itself (with $h^{(0)}$ being the identity function).

\subsection{Inequalities}
We will use the classic inequality $1+x\le e^x$ for all $x$, and the following not-so-classic variant.
\begin{fact}
\label{fact:expineq}
    For $x< 1$, we have $1-x\ge e^{-\frac{x}{1-x}}$.
\end{fact}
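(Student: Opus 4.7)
The plan is to reduce \Cref{fact:expineq} to the classical inequality $1+u \le e^u$, which holds for all real $u$ and is stated in the sentence immediately preceding the fact. The substitution to try is $u = \frac{x}{1-x}$. For this choice of $u$, one computes
\[
1+u \;=\; 1 + \frac{x}{1-x} \;=\; \frac{1}{1-x},
\]
so the classical bound $1+u \le e^u$ becomes $\frac{1}{1-x} \le e^{x/(1-x)}$.

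Since the hypothesis $x<1$ guarantees $1-x>0$, both sides are positive and I can take reciprocals (flipping the inequality) to conclude $1-x \ge e^{-x/(1-x)}$, which is exactly the claimed inequality. No real obstacle is anticipated: the only subtlety is confirming that the domain $x<1$ (rather than, say, $|x|<1$) is what is actually needed, and this is precisely what makes the reciprocal step legal. If a less slick derivation were preferred, one could instead set $y=1-x>0$ and show $\ln y \ge 1 - 1/y$ by noting the function $f(y) = \ln y - 1 + 1/y$ has derivative $(y-1)/y^2$, so its unique minimum on $(0,\infty)$ is at $y=1$ with value $0$; exponentiating gives the same conclusion. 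The first approach is cleaner and reuses an inequality already in the paper's toolkit, so that is the one I would write up.
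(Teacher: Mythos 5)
Your proposal is correct and is essentially identical to the paper's own proof: both substitute $u = \frac{x}{1-x}$ into $1+u \le e^u$ to get $\frac{1}{1-x} \le e^{x/(1-x)}$ and then take reciprocals, using $x<1$ to ensure both sides are positive. No further comment is needed.
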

\begin{proof}
    Note $\frac{1}{1-x} = 1 + \frac{x}{1-x}\le e^{\frac{x}{1-x}}$. When $x<1$, both sides of the inequality are positive, and taking reciprocals gives the fact.
\end{proof}

We will also use Bernoulli's Inequality.
\begin{fact}
    \label{claim:silly}
    For an integer $n\ge 1$ and $1+x\ge 0$, $(1+x)^n\ge 1+nx$.
\end{fact}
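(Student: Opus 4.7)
The plan is to prove this standard Bernoulli-type bound by induction on the integer $n \ge 1$. The base case $n=1$ holds with equality, since $(1+x)^1 = 1+x = 1 + 1\cdot x$. For the inductive step, I would assume $(1+x)^n \ge 1+nx$ and then multiply both sides by the factor $1+x$. Since the hypothesis guarantees $1+x \ge 0$, this multiplication preserves the direction of the inequality, yielding
\[
(1+x)^{n+1} \;\ge\; (1+x)(1+nx) \;=\; 1 + (n+1)x + nx^2 \;\ge\; 1 + (n+1)x,
\]
where the final step drops the nonnegative term $nx^2$ (valid because $n \ge 1$ and $x^2 \ge 0$). This closes the induction.

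The only subtle point, and really the only potential obstacle, is where the hypothesis $1+x \ge 0$ is actually invoked: it is precisely what licenses the multiplication-preserves-inequality step above. Without it, if $1+x < 0$ then multiplying flips the sign and the induction breaks. An alternative route, valid when $x \ge 0$, would be to appeal directly to the binomial theorem, writing $(1+x)^n = \sum_{k=0}^n \binom{n}{k}x^k$ and keeping only the $k=0$ and $k=1$ terms, since all remaining summands are nonnegative. However, this approach does not smoothly handle the range $-1 \le x < 0$ where the binomial expansion has alternating signs, so induction provides the cleaner unified argument and is what I would write up.
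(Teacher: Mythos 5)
Your induction is correct and is the standard proof of Bernoulli's inequality; the paper simply states this as a known fact without proof, so there is nothing to compare against. The one genuinely load-bearing point — that $1+x\ge 0$ is exactly what lets you multiply the inductive hypothesis by $1+x$ without flipping the inequality — is correctly identified and handled.
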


Finally, for the two-sided bounds, we will use standard facts about martingales, such as the Doob martingale and an elementary version of Azuma's inequality \cite[Chapter 12]{mitzupfal}. 

\begin{fact}
\label{fact:azuma}
    Let $(X_i)_{i\ge 0}$ be a martingale such that for all $i$, $|X_{i+1}-X_i|\le 1$. Then for all positive integers $k$ and real number $\eps > 0$, we have \[\Pr[X_k-X_0\le -\eps]\le e^{-\eps^2/2k}.\]
\end{fact}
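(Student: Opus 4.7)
The plan is to prove the bound by a standard Chernoff-style argument applied to the martingale differences. Define $D_i \coloneqq X_i - X_{i-1}$ for $1 \le i \le k$; by hypothesis $|D_i| \le 1$, and the martingale property gives $\E[D_i \mid X_0, \ldots, X_{i-1}] = 0$. For any $\lambda > 0$, Markov's inequality applied to $e^{-\lambda(X_k - X_0)}$ yields
\[\Pr[X_k - X_0 \le -\eps] \;=\; \Pr\bigl[e^{-\lambda(X_k - X_0)} \ge e^{\lambda \eps}\bigr] \;\le\; e^{-\lambda \eps} \cdot \E\bigl[e^{-\lambda \sum_{i=1}^k D_i}\bigr],\]
so it suffices to bound the moment generating function on the right-hand side.

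Next I would peel off the difference terms one at a time using the tower property: write the exponential as a product $\prod_i e^{-\lambda D_i}$ and condition on $X_0, \ldots, X_{k-1}$ to isolate the innermost factor $\E[e^{-\lambda D_k} \mid X_0, \ldots, X_{k-1}]$. The key intermediate tool here is Hoeffding's lemma: for any mean-zero random variable $Y$ supported on $[-1, 1]$, $\E[e^{\mu Y}] \le e^{\mu^2 / 2}$. I would prove it by using convexity of $x \mapsto e^{\mu x}$ to bound it by the secant line through $\pm 1$, taking expectation (the linear term vanishes since $\E[Y] = 0$), and then verifying that $\ln\!\bigl((e^{\mu} + e^{-\mu})/2\bigr) \le \mu^2/2$ via a termwise comparison of Taylor series.

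Iterating Hoeffding's bound $k$ times gives $\E\bigl[e^{-\lambda \sum_i D_i}\bigr] \le e^{k \lambda^2 / 2}$, so combining with the Markov step,
\[\Pr[X_k - X_0 \le -\eps] \;\le\; \exp\!\left(-\lambda \eps + \frac{k \lambda^2}{2}\right).\]
Optimizing over $\lambda > 0$ by setting $\lambda = \eps / k$ produces the advertised bound $e^{-\eps^2 / (2k)}$.

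The main obstacle is really just establishing Hoeffding's lemma, which is a short but somewhat fiddly convexity-and-Taylor exercise; the surrounding argument is a mechanical combination of Markov's inequality, iterated conditioning, and optimization of $\lambda$. Since Azuma's inequality is classical and invoked in this paper only as a black-box ingredient for the two-sided bound in \Cref{thm:our-two-sided}, my inclination would be to simply cite a standard source (such as the Mitzenmacher--Upfal reference already listed) rather than reproduce the derivation in full.
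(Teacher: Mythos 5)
Your proposed argument is the standard and correct proof of Azuma's inequality: the exponential Markov bound, iterated conditioning via the tower property, Hoeffding's lemma for the bounded mean-zero increments, and optimization at $\lambda = \eps/k$ all go through exactly as you describe. The paper itself gives no proof of this statement---it is stated as a classical fact and cited to \cite[Chapter 12]{mitzupfal}---so your closing instinct to simply cite a standard source matches what the authors actually do.
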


\subsection{Random Linear Maps}
If $h:\F_2^u\to\F_2^\ell$ is a linear map with kernel $V\leq \F_2^u$ and $y$ is in the image of $h$, then $h^{-1}(y) = x+V$ for some $x\in \F_2^u$. A random \emph{surjective} linear map $h: \F_2^u\to\F_2^\ell$ is equivalent to sampling a uniform $(u-\ell)$-dimensional subspace $V\leq \F_2^u$, and then sampling a uniform linear $h$ with kernel $V$. For any $u\ge t\ge \ell$ and surjective map $h_2:\F_2^t\to \F_2^\ell$, if $h_1:\F_2^u\to\F_2^t$ is a uniform random linear map, then $h_1\circ h_2: \F_2^u\to\F_2^\ell$ is a uniform random linear map.

\section{Introducing the Potential Functions}
\label{sec:assocfuncpotential}
Consider $k$ vectors $v_{\le k}\in (\F_2^u)^k$ and a set of balls $S\subset \F_2^u$. Recall $V_i = \Span(v_{\le i})$. To $(v_{\le k}, S)$ we will associate functions $\{S_i: \F_2^u\to\N \}_{0\le i \le k}$, defined by \begin{equation}S_i(x) \coloneqq |(x+V_i)\cap S|.\end{equation} In particular, $S_i$ depends on $S$ and $v_{\le i}$. Intuitively, $S_0$ is (the indicator of) $S$, and $S_i(x)$ is the number of balls in the same bin as $x$ after hashing according to the kernel vectors $v_{\le i}$.

 %For a set of vectors $v_{\le k}\subset \F_2^u$ and a set $S\subset \F_2^u$, we will \emph{associate} functions $\{S_i\}_{0\le i \le k}$ mapping $\F_2^u\to \N$, defined as follows. \begin{align}S_{i}(x) \coloneqq \begin{cases} \mathbf{1}(x\in S) & i = 0 \\  S_{i-1}(x) + S_{i-1}(x+v_i) &  i\ge 1\end{cases}\label{eqn:assoc}\end{align} Intuitively, $S_0$ is simply (the indicator of) $S$, and in general $S_i(x)$ is the number of balls in the same bin as $x$ after hashing according to the kernel vectors $v_{\le i}$.
 
 % More formally, notice if $\{v_i\}_{i\in [k]}$ are linearly independent, we have the following, where $V_i \coloneqq \Span(v_{\le i})$.
%\begin{proposition}\label{clam:Sinduct} If $\{v_{\le i}\}$ are linearly independent vectors in $\F_2^u$, then \[S_i(x) = \sum_{v\in V_i} S_0(x + v) = |(x + V_i)\cap S|.\]\end{proposition}

%\begin{proof}
%    We apply induction on $i$. The base case is true because \[|(x+V_0)\cap S| = |\{x\}\cap S| = {\mathbf 1}(x\in S) = S_0(x).\] Assume the proposition holds true for $i$. We then have for $v_{i+1}\notin V_i$ that \[S_{i+1}(x) = S_i(x) + S_i(x+v_{i+1}) = |(x + V_i)\cap S| + |(x + v_{i+1} + V_i)\cap S| = |(x+V_{i+1}) \cap S|,\] where the last equality follows from the fact $V_i$ and $v_{i+1} + V_i$ partition $V_{i+1}$, since $v_{i+1}\notin V_i$.
%\end{proof}

Now for any real number $b \ge 0$, we can define a sequence of potential functions \begin{equation*}\Phi_i = \Phi_i(S;b; v_{\le i})\coloneqq \E_{x\sim \F_2^u}[b^{S_i(x)}].\end{equation*} The following claim shows that if $b\ge 1$, a heavy bin implies large potential.

\begin{lemma}

\label{clm:potential-load}
\label{clm:load-potential}
    Let $v_{\le i}$ be a fixed tuple of linearly independent vectors in $\F_2^u$, and let $b \ge 1$. If there exists $x\in \F_2^u$ such that $|(x+V_i)\cap S|\ge m$, then $\Phi_i \ge \frac{b^m}{2^{u-i}}$.
\end{lemma}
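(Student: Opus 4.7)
The plan is to exploit the fact that $S_i$ is constant on cosets of $V_i$. Specifically, if $x' = x + v$ for some $v \in V_i$, then $x' + V_i = x + V_i$, so $S_i(x') = |(x+V_i) \cap S| = S_i(x)$. Thus the random variable $b^{S_i(X)}$ for $X \sim \F_2^u$ takes a fixed value on each coset of $V_i$.

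First I would use the linear independence of $v_{\le i}$ to conclude $\dim V_i = i$, so $|V_i| = 2^i$ and $\F_2^u$ partitions into exactly $2^{u-i}$ cosets of $V_i$, each of size $2^i$. Rewriting the expectation as a sum over cosets, I get
\[
\Phi_i \;=\; \E_{x\sim \F_2^u}\bigl[b^{S_i(x)}\bigr] \;=\; \frac{1}{2^u}\sum_{C} |C|\cdot b^{|C \cap S|} \;=\; \frac{1}{2^{u-i}}\sum_{C} b^{|C\cap S|},
\]
where the sum ranges over the $2^{u-i}$ cosets $C$ of $V_i$.

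Next, I would isolate the contribution of the particular coset $C^\star \coloneqq x + V_i$ guaranteed by the hypothesis. Since $|C^\star \cap S| \ge m$ and $b \ge 1$, its contribution to the sum is at least $b^m$. Every other term $b^{|C \cap S|}$ is nonnegative (in fact $\ge 1$ since $b \ge 1$), so dropping them gives
\[
\Phi_i \;\ge\; \frac{b^{|C^\star \cap S|}}{2^{u-i}} \;\ge\; \frac{b^m}{2^{u-i}},
\]
which is the claim.

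There is essentially no obstacle here: the entire content is the observation that $\Phi_i$ averages $b^{S_i}$ over cosets and that a single heavy coset already dominates the desired lower bound. The only subtlety is making sure to use linear independence (to get the exact coset count $2^{u-i}$) and $b \ge 1$ (to discard the other coset contributions without loss).
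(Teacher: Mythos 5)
Your proof is correct and follows essentially the same route as the paper's: both isolate the coset $x+V_i$ (of size $2^i$ by linear independence), lower-bound its contribution by $b^m$ using that $b^x$ is increasing for $b\ge 1$, and discard the remaining nonnegative terms. The only cosmetic difference is that you rewrite the expectation as a sum over all cosets before dropping terms, whereas the paper restricts the sum to the single heavy coset directly.
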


\begin{proof}
    If $v\in V_i$, then $v+V_i = V_i$. Hence, for all $v\in V_i$, $S_i(x+v) = S_i(x)\ge m$. Therefore, \[\Phi_i = \frac{1}{2^u}\sum_{x'\in \F_2^u} b^{S_i(x')}\ge \frac{1}{2^u}\sum_{x'\in x+V_i} b^{S_i(x')}\ge \frac{1}{2^u}  |V_i|b^m = \frac{b^m}{2^{u-i}},\]
    where we used the fact $b^x$ is increasing for $b\ge 1$.
\end{proof}

Interestingly, if we use a base $b\le 1$, we can detect by the same test if some bin is light as well. This will help us establish two-sided bounds on the bins later on.

\begin{lemma}
\label{clm:blanketing}
Let $v_{\le i}$ be a fixed tuple of linearly independent vectors in $\F_2^u$, and let $b \le 1$. If there exists $x\in \F_2^u$ with $|(x + V_i)\cap S| \le m$, then $\Phi_i \ge \frac{b^m}{2^{u-i}}$.
\end{lemma}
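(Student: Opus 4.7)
The plan is to mirror the proof of \Cref{clm:load-potential} essentially verbatim, using only the fact that $b^x$ is nonincreasing (rather than nondecreasing) when $b\le 1$. The structural ingredient is that $V_i$ is a subspace, so for every $v\in V_i$ we have $x+v+V_i = x+V_i$, and therefore $S_i(x+v) = |(x+v+V_i)\cap S| = |(x+V_i)\cap S| = S_i(x)$. In particular, $S_i$ is constant on the coset $x+V_i$, and by hypothesis this constant value is at most $m$.

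First I would restrict the averaging in the definition of $\Phi_i$ to just the coset $x+V_i$: since $b\ge 0$, every term $b^{S_i(x')}$ is nonnegative, so dropping the contribution from $\F_2^u\setminus(x+V_i)$ only decreases the sum. This gives
\[
\Phi_i \;=\; \frac{1}{2^u}\sum_{x'\in \F_2^u} b^{S_i(x')} \;\ge\; \frac{1}{2^u}\sum_{x'\in x+V_i} b^{S_i(x')}.
\]
Next, on the coset $x+V_i$, every value $S_i(x')$ equals $S_i(x)$, which is at most $m$ by hypothesis. Because $b\le 1$, the function $t\mapsto b^t$ is nonincreasing, so $b^{S_i(x')} = b^{S_i(x)} \ge b^m$ for every $x'\in x+V_i$. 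Substituting this bound and using $|x+V_i| = |V_i| = 2^i$, I get
\[
\Phi_i \;\ge\; \frac{1}{2^u}\cdot |V_i|\cdot b^m \;=\; \frac{b^m}{2^{u-i}},
\]
which is the desired inequality.

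The argument is entirely routine and there is no real obstacle: the only content is the observation that the direction of monotonicity of $b^t$ reverses when $b\le 1$, but the hypothesis on $S_i(x)$ reverses correspondingly (now an upper bound), so the two reversals cancel and the lower bound on $\Phi_i$ comes out the same as in \Cref{clm:load-potential}. The only sanity check worth making is the edge case $b=0$ together with $m=0$, which is handled cleanly by the convention $0^0 = 1$ stated in \Cref{sec:prelim}.
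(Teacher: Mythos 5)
Your proof is correct and follows the paper's argument essentially verbatim: restrict the average to the coset $x+V_i$, note $S_i$ is constant (hence $\le m$) there, and use that $t\mapsto b^t$ is nonincreasing for $b\le 1$. Nothing further is needed.
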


\begin{proof}
    The proof is essentially the same as \Cref{clm:load-potential}. We observe that for all $v\in V_i$, ${S_i(x+v) \le m.}$ Hence, \[\Phi_i = \frac{1}{2^u}\sum_{x'\in \F_2^u} b^{S_i(x')}\ge \frac{1}{2^u}\sum_{x'\in x+V_i} b^{S_i(x')}\ge \frac{1}{2^u}  |V_i|b^m = \frac{b^m}{2^{u-i}},\] where we used the fact $b^x$ is decreasing for $b\le 1$.
\end{proof}

The following claim relates $S_i$ to $S_{i+1}$.

\begin{claim}
\label{clm:induct-defn}
For any vectors $v_{\le (i+1)}$, we have
    \[S_i(x)+S_i(x+v_{i+1}) = \begin{cases}2S_i(x) & v_{i+1}\in V_i \\ S_{i+1}(x) & v_{i+1}\notin V_i\end{cases}.\]
\end{claim}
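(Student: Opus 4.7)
The plan is to argue by case analysis on whether $v_{i+1}$ lies in $V_i$, using only the coset structure in $\F_2^u$. The key observation is that $V_{i+1} = V_i + \Span(v_{i+1})$, and over $\F_2$ this sum either collapses (when $v_{i+1}\in V_i$) or decomposes cleanly into two disjoint cosets of $V_i$ (when $v_{i+1}\notin V_i$). I do not expect any real obstacle; both cases reduce to an elementary counting identity.

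For the first case, suppose $v_{i+1}\in V_i$. Then $v_{i+1}+V_i = V_i$, so as cosets $x+v_{i+1}+V_i = x+V_i$. Intersecting both sides with $S$ and taking cardinalities gives $S_i(x+v_{i+1}) = S_i(x)$, hence $S_i(x) + S_i(x+v_{i+1}) = 2S_i(x)$, as required.

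For the second case, suppose $v_{i+1}\notin V_i$. Then the cosets $V_i$ and $v_{i+1}+V_i$ are distinct elements of the quotient $\F_2^u/V_i$, and since $V_{i+1}$ is spanned over $\F_2$ by $V_i$ together with $v_{i+1}$, we get the disjoint union
\[
V_{i+1} \;=\; V_i \;\sqcup\; (v_{i+1}+V_i).
\]
Translating by $x$ yields the disjoint union $x+V_{i+1} = (x+V_i)\sqcup(x+v_{i+1}+V_i)$. Intersecting with $S$ preserves disjointness, so
\[
S_{i+1}(x) \;=\; |(x+V_{i+1})\cap S| \;=\; |(x+V_i)\cap S| + |(x+v_{i+1}+V_i)\cap S| \;=\; S_i(x) + S_i(x+v_{i+1}),
\]
which is exactly the second case of the claim. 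This completes the proof.
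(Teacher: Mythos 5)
Your proof is correct and follows essentially the same argument as the paper: the first case uses $v_{i+1}+V_i = V_i$ to identify the two cosets, and the second uses the disjoint decomposition $V_{i+1} = V_i \sqcup (v_{i+1}+V_i)$ translated by $x$ and intersected with $S$. No issues.
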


\begin{proof}
If $v_{i+1}\in V_i$, then $V_i = v_{i+1}+V_i$, and so \[S_i(x) + S_i(x+v_{i+1}) = S_i(x) + S_i(x) = 2S_i(x).\] If $v_{i+1}\notin V_i$, then $V_{i+1} = V_i\sqcup (v_{i+1}+V_i)$. Thus, $(x+V_i)\cap S$ and $(x+v_{i+1}+V_i)\cap S$ partition $(x+V_{i+1})\cap S$, implying that $S_i(x) + S_i(x+v_{i+1}) = S_{i+1}(x).$
\end{proof}
Using the above claim, we can prove the following crucial lemma which upper bounds the conditional expectations of our potentials. 

\begin{lemma}
\label{clm:expsquare}
    Let $v_{\le i}\in (\F_2^u)^i$, and let $v_{i+1}\sim \F_2^u\setminus V_i$. We have \[\E_{v_{i+1} }[\Phi_{i+1} ] \le \Phi_i^2.\]
\end{lemma}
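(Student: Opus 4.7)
The plan is to expand $\Phi_{i+1}$ using the recursion from \Cref{clm:induct-defn}, descend to the quotient space $W \coloneqq \F_2^u/V_i$, and close out with Jensen's inequality.

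First, since $v_{i+1} \notin V_i$, \Cref{clm:induct-defn} gives $S_{i+1}(x) = S_i(x) + S_i(x+v_{i+1})$. Expanding $\Phi_{i+1}$ and swapping expectations, I get
\[
\E_{v_{i+1}}[\Phi_{i+1}] = \E_{x \sim \F_2^u}\E_{v_{i+1}}\!\bigl[b^{S_i(x)}\, b^{S_i(x+v_{i+1})}\bigr].
\]

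Next, I would push everything down to the quotient. Since $S_i$ is constant on cosets of $V_i$, the function $g(\bar x) \coloneqq b^{S_i(x)}$ is well defined on $W$, and $\Phi_i = \E_{\bar x \sim W}[g(\bar x)]$. The key observation is that under this projection, $x \sim \F_2^u$ becomes a uniform element of $W$ and $v_{i+1} \sim \F_2^u \setminus V_i$ becomes a uniform \emph{nonzero} element of $W$ (each nonzero coset has exactly $|V_i|$ representatives). This turns the expression into $\frac{1}{|W|(|W|-1)}\sum_{\bar x \ne \bar y} g(\bar x)\, g(\bar y)$, a sum over ordered pairs of distinct cosets.

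Finally, I would use the off-diagonal identity $\sum_{\bar x \ne \bar y} g(\bar x) g(\bar y) = \bigl(\sum_{\bar x} g(\bar x)\bigr)^2 - \sum_{\bar x} g(\bar x)^2 = |W|^2 \Phi_i^2 - |W|\, \E[g^2]$, after which the expression simplifies to
\[
\E_{v_{i+1}}[\Phi_{i+1}] = \Phi_i^2 - \frac{\E_{\bar x}[g(\bar x)^2] - \Phi_i^2}{|W|-1}.
\]
By Jensen's inequality, $\E[g^2] \ge \E[g]^2 = \Phi_i^2$, so the subtracted term is nonnegative and the bound $\E_{v_{i+1}}[\Phi_{i+1}] \le \Phi_i^2$ follows. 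No step looks technically demanding; the main thing requiring care is verifying the correct quotient marginals, in particular that conditioning $v_{i+1} \notin V_i$ produces a uniform nonzero coset. This is exactly what makes the off-diagonal reduction clean and forces Jensen to do all the remaining work.
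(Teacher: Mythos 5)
Your proof is correct and is essentially the paper's argument in different clothing: the off-diagonal identity you use is exactly the paper's decomposition of the full average over $v_{i+1}\sim \F_2^u$ (which equals $\Phi_i^2$ by independence of $x$ and $x+v_{i+1}$) into the part with $v_{i+1}\in V_i$ (handled by convexity) and the part with $v_{i+1}\notin V_i$, and both proofs close with the same step $\E[g^2]\ge \E[g]^2$. The quotient-space bookkeeping is a clean way to verify the marginals, but it introduces no new idea beyond the paper's averaging argument.
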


\begin{proof}
    For $x,v_{i+1}\in \F_2^u$ picked uniformly and independently, $x$ and $x+v_{i+1}$ are uniform and independent as well. Hence \begin{equation}\E_{x,v_{i+1}}[b^{S_i(x) + S_i(x+v_{i+1})}] = \E_{x,v_{i+1}}[b^{S_i(x) + S_i(v_{i+1})}] = \E_x[b^{S_i(x)}]\cdot \E_{v_{i+1}}[b^{S_i(v_{i+1})}] = \Phi_i^2.\label{eqn:global-avg}\end{equation}  Now for any fixed $v_{i+1}\in V_i$, we have \begin{equation}\E_{x}[b^{S_i(x) + S_i(x+v_{i+1})}] = \E_x[b^{2S_i(x)}]\ge \E_x[b^{S_i(x)}]^2 = \Phi_i^2\label{eqn:above-avg}\end{equation} by \Cref{clm:induct-defn} and convexity. By an averaging argument, (\ref{eqn:global-avg}) and (\ref{eqn:above-avg}) imply \begin{align*}\Phi_i^2 & \ge \E_{x,v_{i+1}}[b^{S_i(x) + S_i(x+v_{i+1})}|v_{i+1}\notin V_i] \\ & = \E_{x,v_{i+1}}[b^{S_{i+1}(x)}|v_{i+1}\notin V_i] \\ &  = \E_{v_{i+1}}[\Phi_{i+1}|v_{i+1}\notin V_i],\end{align*} where the first equality follows from \Cref{clm:induct-defn}.
\end{proof}

%\begin{proposition}
%\label{clm:expsquare}
%    Let $v_{\le i}$ be linearly independent vectors in $\F_2^u$, and let $W$ be the uniform distribution over $ \F_2^u\setminus V_i$. Then we have \[\E_{v_{i+1} \sim W}[\Phi_{i+1}] \le \Phi_i^2.\]
%\end{proposition}

%\begin{proof}
%For any fixed linearly independent $v_{\le i}$, if $v_{i+1}$ is picked uniformly at random from \emph{all} of $\F_2^u$, \begin{align*}\E_{v_{i+1}}[\Phi_{i+1}]  = \E_{v_{i+1}}[\E_x[b^{S_{i+1}(x)}]]    = \E_{x,v_{i+1}}[b^{S_i(x) + S_i(x+v_{i+1})}]   = \E_x[b^{S_i(x)}]^2  &  = \Phi_i^2,\end{align*} where the penultimate inequality follows from the fact $x$ and $x+v_{i+1}$ are independent random variables. Whenever $v_{i+1}\in V_i$, it follows by \Cref{clam:Sinduct} that \[S_i(x+v_{i+1}) = |(x + v_{i+1} + V_i)\cap S| = |(x + V_i)\cap S|= S_i(x).\] Therefore by \Cref{eqn:assoc}, if $v_{i+1}\in V_i$, then \[\Phi_{i+1} = \E_x [b^{S_{i+1}(x)}] = \E_x [b^{2S_{i}(x)}] \ge \E_x [b^{S_i(x)}]^2 = \Phi_i^2 \] via Jensen's inequality. Consequently by an averaging argument, it must be the case \[\E_{v_{i+1}\sim W}[\Phi_{i+1}] \le \Phi_i^2.\]
%\end{proof}

At a high level, we will use \Cref{clm:expsquare} to upper bound the potential, from which \Cref{clm:potential-load} implies a small max-load.

While the above suffices to get decent tail bounds on the max-load, for technical reasons we will need the following lemma, which is essential to establishing optimal expected max-load and quantitatively stronger two-sided bounds.

\begin{lemma}
\label{prop:strides}
Let $v_{\le i}\in (\F_2^u)^i$, and $v_{i+1}\in \F_2^u\setminus V_i$. For any $b\ge 0$ we have $\Phi_{i+1}-1\ge 2(\Phi_i - 1)$. When $b\le 1$, we also have $\Phi_{i+1}\le \Phi_i$.
\end{lemma}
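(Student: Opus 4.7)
The plan is to reduce both inequalities to a single pointwise sign observation about $f(x)\coloneqq b^{S_i(x)}$. Because $v_{i+1}\notin V_i$, \Cref{clm:induct-defn} gives $S_{i+1}(x)=S_i(x)+S_i(x+v_{i+1})$ for every $x$, so
\[\Phi_{i+1}=\E_{x\sim \F_2^u}\bigl[f(x)\,f(x+v_{i+1})\bigr],\qquad \Phi_i=\E_{x\sim \F_2^u}[f(x)].\]
Since $S_i(x)$ is a nonnegative integer and $b\ge 0$ (using the convention $0^0=1$), the quantity $f(x)-1$ has the same sign at every $x$: it is $\ge 0$ when $b\ge 1$ and $\le 0$ when $b\le 1$. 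Either way, the product $(f(x)-1)(f(x+v_{i+1})-1)$ is pointwise nonnegative.

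For the first inequality I would expand this product and take $\E_x$. Translation invariance gives $\E_x[f(x+v_{i+1})]=\E_x[f(x)]=\Phi_i$, so
\[0\le \E_x\bigl[(f(x)-1)(f(x+v_{i+1})-1)\bigr]=\Phi_{i+1}-2\Phi_i+1,\]
which rearranges to $\Phi_{i+1}-1\ge 2(\Phi_i-1)$. For the second inequality I would write
\[\Phi_{i+1}-\Phi_i=\E_x\bigl[f(x)\bigl(f(x+v_{i+1})-1\bigr)\bigr];\]
when $0\le b\le 1$, we have $f(x)\ge 0$ and $f(x+v_{i+1})-1\le 0$ pointwise, so the integrand is nonpositive, giving $\Phi_{i+1}\le \Phi_i$.

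There is no serious obstacle here; the lemma is a short calculation once the right rewrite is spotted. The one subtlety worth noting is the sign invariance of $f-1$, which crucially uses that $S_i$ is integer-valued (otherwise for $b<1$ we could not conclude $f\le 1$ uniformly). This is the only place where I need the combinatorial nature of $S_i$ beyond the additive identity from \Cref{clm:induct-defn}, and it is what lets the two cases ($b\ge 1$ versus $b\le 1$) of the first claim collapse into a single pointwise inequality.
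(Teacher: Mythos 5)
Your proof is correct and follows essentially the same route as the paper: both arguments reduce the first inequality to the pointwise nonnegativity of $(b^{S_i(x)}-1)(b^{S_i(x+v_{i+1})}-1)$ together with the translation invariance $\E_x[b^{S_i(x+v_{i+1})}]=\Phi_i$, and both get the second inequality from $b^{S_i(x)}(b^{S_i(x+v_{i+1})}-1)\le 0$ when $b\le 1$. One tiny quibble with your closing remark: what you actually need is that $S_i$ is \emph{nonnegative} (so that $b^{S_i(x)}-1$ has a fixed sign determined by whether $b\ge 1$ or $b\le 1$), not that it is integer-valued.
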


\begin{proof}

By \Cref{clm:induct-defn}, elementary manipulations, and linearity of expectation,
    \begin{align*}\Phi_{i+1}-1 &  = \E_x[b^{S_i(x) + S_i(x+v_{i+1})}-1] \\ &  = \E_x[b^{S_i(x)}-1]+\E_x[b^{ S_i(x+v_{i+1})}-1] +\E_x[ b^{S_i(x)+S_i(x+v_i)}-b^{S_i(x)} -b^{S_i(x+v_{i+1})} +  1] \\ & = 2\E_x[b^{S_i(x)}-1] + \E_x[( b^{S_i(x)}-1)(b^{S_i(x+v_{i+1})}-1)] \\ & \ge 2(\Phi_i-1),\end{align*}
 where the inequality follows from the fact for any $b,r,s\ge 0$,  $b^r-1$ and $b^s-1$ have the same sign. When $b\le 1$, \Cref{clm:induct-defn} and the fact $f(r) =b^r$ is decreasing tells us \[\Phi_{i+1} = \E_x[b^{S_i(x)+S_i(x+v_{i+1})}]\le \E_x[b^{S_i(x)}] = \Phi_i. \]
 
\end{proof}
 
\section{Warmup: Optimal Max-Load With .99 Probability}
\label{sec:whp}

In this section, we will use the potentials we constructed to show that a random linear map has maximum load $O\left(\frac{\log n}{\log\log n}\right)$ with probability $0.99$.

\subsection{The Existential Case}

 To give intuition on how the potential functions will be used, we first prove a preliminary result. We will show that for any choice of $n$ balls, there \emph{exists} a linear hash map that has a maximum load of $\frac{2\ln n}{\ln\ln n}$. To the best of our knowledge, even this existential result was not known prior to our work. Recall \Cref{defn:maxload}.
\begin{theorem}
\label{thm:hashingexistence}
    Let $u\ge \ell\ge 1$ be integers, $n\coloneqq 2^\ell$, and $S\subset\F_2^{u}$ be of cardinality $n$. There exists a linear map $h: \F_2^{u}\to\F_2^\ell$ such that \[M(S,h) < \frac{2\ln n}{\ln\ln n}.\] 
\end{theorem}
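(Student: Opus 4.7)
The plan is to apply the probabilistic method with the potential functions from Section~\ref{sec:assocfuncpotential}. I sample $v_1, v_2, \ldots, v_{u-\ell}$ iteratively with each $v_{i+1}$ drawn uniformly from $\F_2^u \setminus V_i$, so that $V_{u-\ell}$ is a $(u-\ell)$-dimensional subspace serving as the kernel of some surjective linear map $h : \F_2^u \to \F_2^\ell$ (which is all we need, since max-load depends only on the kernel). Fix a base $b > 1$ to be chosen later. Since each $S_0(x) \in \{0,1\}$, the starting potential is deterministic: $\Phi_0 = 1 + \frac{n(b-1)}{2^u}$. The goal is to show that with positive probability $\Phi_{u-\ell}$ is small enough to force, by the contrapositive of \Cref{clm:load-potential}, a max-load strictly below $\frac{2\ln n}{\ln \ln n}$.

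The core of the argument is iterating \Cref{clm:expsquare}. The main obstacle is that the naive attempt to derive $\E[\Phi_{u-\ell}] \leq \Phi_0^{2^{u-\ell}}$ from $\E[\Phi_{i+1} \mid v_{\leq i}] \leq \Phi_i^2$ fails: Jensen's inequality goes the wrong way, since $\E[\Phi_i^2] \geq (\E \Phi_i)^2$. I sidestep this by passing to logarithms. Applying Jensen to the concave function $\ln$ yields
\[
\E[\ln \Phi_{i+1} \mid v_{\leq i}] \;\leq\; \ln \E[\Phi_{i+1} \mid v_{\leq i}] \;\leq\; 2 \ln \Phi_i,
\]
and this inequality on conditional means iterates cleanly via the tower property to give $\E[\ln \Phi_{u-\ell}] \leq 2^{u-\ell} \ln \Phi_0$. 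Since $\ln(1+x) \leq x$, this is at most $2^{u-\ell} \cdot \frac{n(b-1)}{2^u} = b - 1$.

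By the probabilistic method there exists a realization of $v_{\leq u-\ell}$, hence a surjective linear $h$, with $\Phi_{u-\ell} \leq e^{b-1}$. Combined with \Cref{clm:load-potential}, which forces $\Phi_{u-\ell} \geq b^m / n$ whenever some bin has load at least $m$, it suffices to choose $b$ and $m$ so that $e^{b-1} < b^m/n$, equivalently $m > \frac{\ln n + b - 1}{\ln b}$. Setting $b = \ln n$ reduces the threshold to $\frac{2 \ln n - 1}{\ln \ln n}$, which is strictly less than the target $\frac{2 \ln n}{\ln \ln n}$. Taking $m$ equal to (or just below) $\frac{2\ln n}{\ln \ln n}$ and using that $M(S,h)$ is an integer, this delivers $M(S,h) < \frac{2 \ln n}{\ln \ln n}$ as required.
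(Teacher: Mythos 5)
Your proof is correct. The skeleton is the same as the paper's: the same potential $\Phi_i = \E_x[b^{S_i(x)}]$ with base $b=\ln n$, the same use of \Cref{clm:expsquare} to square the potential at each step and of \Cref{clm:load-potential} to convert a bin of load $m$ into a potential of size at least $b^m/n$, and the same final contradiction. Where you genuinely diverge is in how existence is extracted. The paper never confronts the Jensen obstruction you describe, because it derandomizes greedily: at each step, $\E_{v_{i+1}}[\Phi_{i+1}]\le \Phi_i^2$ guarantees \emph{some} $v_{i+1}\notin V_i$ with $\Phi_{i+1}\le\Phi_i^2$, yielding a deterministic chain $\Phi_k\le\Phi_0^{2^k}\le(1+\tfrac{\ln n}{2^k})^{2^k}<n$. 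Your alternative---keeping all the $v_i$ random, passing to $\ln\Phi_i$, applying Jensen plus the tower property to get $\E[\ln\Phi_k]\le 2^k\ln\Phi_0\le b-1$, and invoking the probabilistic method once at the end---is equally valid and even gives the marginally sharper bound $\Phi_k\le e^{b-1}=n/e$ for the witnessing kernel. The greedy choice is the lighter tool for a pure existence statement; your one-shot log-expectation version is closer in spirit to what the paper needs later in \Cref{sec:whp}, where the randomness of the $v_i$ must be retained and the failure of the naive expectation bound is instead handled by the tail bound of \Cref{lem:potential-tail-bound}. (Both proofs share the cosmetic caveat that $b=\ln n>1$ requires $n\ge 4$; the statement is vacuous at $n=2$ anyway.)
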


\begin{proof}
Define $k\coloneqq u-\ell$. We will carefully pick linearly independent kernel vectors $v_{\le k}$ from $\F_2^{u}$, and then argue that any linear  $h: \F_2^u\to\F_2^\ell$ with kernel $V\coloneqq \text{Span}(v_{\le k})$ will have small maximum load.

Using the vectors $v_{\le k}$ and set $S$, define $\{S_i(x)\coloneqq |(x+V_i)\cap S|\}_{0\le i\le k}$ and define potentials $\{\Phi_i \coloneqq \E_{x\sim \F_2^u}[b^{S_i(x)}]\}_{0\le i\le k}$ for $b = \ln n$. As $|S| = n$, we can compute \[\Phi_0 = \E_x[b^{S_0(x)}] = \frac{n}{2^u}\cdot b + \left(1-\frac{n}{2^u}\right)\cdot 1\le 1 + \frac{\ln n}{2^k}.\]

By \Cref{clm:expsquare}, for $i = 1,2,\dots, k$ we can iteratively pick $v_{i+1}\notin \text{Span}(v_{ \le i})$ such that $\Phi_{i+1}\le \Phi_i^2$. Upon picking these $k$ linearly independent vectors $v_{\le k}$ in this manner, we claim any $h$ with $\text{Span}(v_{\le k}) \eqqcolon V$ as the kernel will yield the desired result. To see this, first notice \[\Phi_k \le  \Phi_{k-1}^2 \le  \cdots \le \Phi_0^{2^k} \le \left(1+\frac{\ln n}{2^k}\right)^{2^k} < e^{\ln n} = n.\] Assume there was $y$ such that $|h^{-1}(y)\cap S| \ge \frac{2\ln n}{\ln\ln n} $. Then there must exist $x$ such that ${|(x+V)\cap S|}\ge  \frac{2\ln n}{\ln\ln n}$. But by \Cref{clm:potential-load}, this implies \[\Phi_k \ge  \frac{(\ln n)^{\frac{2\ln n}{\ln \ln n}}}{2^{u-k}} =  \frac{n^2}{2^\ell} = n,\] absurd!
\end{proof}

\subsection{A Tail Bound to Boost Existence to Abundance} 

The above theorem showed the existence of a choice of kernel vectors $v_1,\dots, v_k$ that minimized the potential function, thereby implying the existence of a load-balancing linear map. We would like to show most choices of $v_{\le k}$ will minimize the potential. To do so, we will prove a technical lemma that converts the conditional expectation upper bound guarantees from \Cref{clm:expsquare} into tail bounds on the final potential. In particular, while the last section showed the existence of $v_{\le k}$ such that $\Phi_k\le \Phi_0^{2^k}$, the following lemma shows that most choices of $v_{\le k}$ will have $\Phi_k\le (\Phi_0^{2^k})^{O(1)}$.

\begin{lemma}
\label{lem:potential-tail-bound}
     Let $X_0\ge 1$ be a constant, and let $X_1,\dots, X_k\ge 1$ be random variables satisfying $\E[X_{i+1}|X_{\le i}] \le X_i^2$. For any $t > 1$,  \[\Pr[X_k\ge t^{2^{k-1}}]\le \frac{X_0^2 -1}{t-1}.\]
\end{lemma}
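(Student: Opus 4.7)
The plan is to identify a normalized quantity that both (i) is a nonnegative supermartingale and (ii) exceeds $t-1$ precisely when $X_k \ge t^{2^{k-1}}$, so that a single application of Markov's inequality finishes the job. The natural candidate is
\[
W_i \;\coloneqq\; X_i^{1/2^{i-1}} - 1, \qquad i \ge 1,
\]
whose choice of exponent is designed to cancel the squaring in the conditional expectation bound. Since $X_i \ge 1$, we have $W_i \ge 0$, and $\{X_k \ge t^{2^{k-1}}\}$ is exactly the event $\{W_k \ge t - 1\}$.

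Next I would verify that $(W_i)_{i \ge 1}$ is a supermartingale. For $i \ge 2$, the function $x \mapsto x^{1/2^{i-1}}$ is concave on $x \ge 0$ (the exponent is at most $1/2$), so Jensen's inequality and the hypothesis $\E[X_i \mid X_{\le i-1}] \le X_{i-1}^2$ give
\[
\E\!\left[W_i \,\middle|\, X_{\le i-1}\right] \;\le\; \bigl(\E[X_i \mid X_{\le i-1}]\bigr)^{1/2^{i-1}} - 1 \;\le\; \bigl(X_{i-1}^{2}\bigr)^{1/2^{i-1}} - 1 \;=\; X_{i-1}^{1/2^{i-2}} - 1 \;=\; W_{i-1}.
\]
For the base case, $\E[W_1] = \E[X_1] - 1 \le X_0^2 - 1$. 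Iterating tower expectations yields $\E[W_k] \le X_0^2 - 1$.

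Finally, since $W_k \ge 0$ and $t - 1 > 0$, Markov's inequality gives
\[
\Pr[X_k \ge t^{2^{k-1}}] \;=\; \Pr[W_k \ge t - 1] \;\le\; \frac{\E[W_k]}{t - 1} \;\le\; \frac{X_0^2 - 1}{t - 1},
\]
as desired.

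There is no real obstacle here — the only mildly delicate point is picking the right normalization. Using instead $Y_i = X_i^{1/2^i}$ would yield a supermartingale by the same argument but lead to a Markov bound of the form $X_0/\sqrt{t}$, which degrades to a trivial (positive) bound at $X_0 = 1$ even though the hypotheses force $X_k = 1$ in that case. Subtracting $1$ and using the exponent $1/2^{i-1}$ (rather than $1/2^i$) together exploit the assumption $X_i \ge 1$ to absorb this slack, producing the sharper bound stated.
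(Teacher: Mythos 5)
Your proof is correct, and it takes a genuinely different route from the paper's. The paper proceeds by induction on $k$: Markov's inequality at the base case, followed by an inductive step whose engine is the elementary linearization $\min\bigl(1,\tfrac{x^2-1}{t^2-1}\bigr)\le \tfrac{x-1}{t-1}$ for $x\ge 1$, which lets the conditional tail bound at level $k$ be averaged over $X_1$ using only $\E[X_1]\le X_0^2$. You instead normalize the variables once and for all, observing that $W_i = X_i^{1/2^{i-1}}-1$ is a nonnegative supermartingale (Jensen for the concave, increasing map $x\mapsto x^{1/2^{i-1}}$ when $i\ge 2$, plus the hypothesis), and finish with a single application of Markov. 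Both arguments are tight on the same extremal sequence (where $X_{i+1}=X_i^2$ deterministically for $i\ge 1$, your $W_i$ is constant in $i$, so no slack is lost). Your version is shorter and more conceptual for this lemma; the paper's inductive template has the advantage that it extends, with the function $\beta_i(t,\delta)$ in place of the linearization, to the refined Lemma~\ref{claim:op-tail-lemma}, where the additional monotonicity $X_{i+1}-1\ge 2(X_i-1)$ must be threaded through each level to obtain the quadratically stronger tail --- something a one-shot supermartingale-plus-Markov argument does not readily accommodate.
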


\begin{remark}
\label{rem:initial-tail-tightness}
    \Cref{lem:potential-tail-bound} is tight for the sequence \begin{itemize}\item $X_1  = \begin{cases} t & \text{with probability }\frac{X_0^2-1}{t-1} \\ 1 & \text{otherwise} \end{cases}$,
    \item For $i > 1$, $X_{i+1} = X_i^2.$ \end{itemize} 
\end{remark}

\begin{proof}[Proof of \Cref{lem:potential-tail-bound}]
We proceed by induction on $k$. For $k=1$, it follows by Markov's inequality and the fact $X_1-1$ is a nonnegative random variable that \[\Pr[X_1\ge t] = \Pr[X_1 - 1 \ge t - 1]\le \frac{X_0^2-1}{t-1}.\] Now assume we have the lemma for $k$ and wish to prove it for $k+1$. We can bound \begin{align*}\Pr[X_{k+1}\ge t^{2^k}]  & = \E_{X_1}\left[\Pr\left[X_{k+1}\ge (t^2)^{2^{k-1}}\right]|X_1\right] \le \E_{X_1}\left[\min\left(1,\frac{X_1^2-1}{t^2-1}\right)\right],
\end{align*}
where the inequality follows from the inductive hypothesis and the fact that all probabilities are at most 1. To bound the above expression, we will first upper bound the argument of $\E_{X_1}[\cdot ]$ in the domain $X_1\ge 1$ by a linear function.

\begin{claim}
\label{clm:upperapprox}
    Let $x\ge 1$ and $t> 1$. We have \[\min\left(1,\frac{x^2-1}{t^2-1}\right)\le \frac{x-1}{t-1}.\]
\end{claim}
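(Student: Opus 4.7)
The plan is to prove this by a straightforward case analysis based on whether $x \le t$ or $x > t$, since the $\min$ on the left-hand side is determined by which side of $t$ the value $x$ lies on.

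First I would handle the case $1 \le x \le t$. Here $x^2 - 1 \le t^2 - 1$ (since both are nonnegative and squaring is monotone for nonnegative inputs), so the $\min$ equals $\frac{x^2-1}{t^2-1}$. Factoring both numerator and denominator as a difference of squares gives
\[
\frac{x^2-1}{t^2-1} = \frac{(x-1)(x+1)}{(t-1)(t+1)} = \frac{x-1}{t-1}\cdot\frac{x+1}{t+1}.
\]
Since $x \le t$ implies $\frac{x+1}{t+1} \le 1$, and the factor $\frac{x-1}{t-1}$ is nonnegative, the desired inequality follows.

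Next I would handle the case $x > t$. Here the $\min$ equals $1$, and the inequality $1 \le \frac{x-1}{t-1}$ is equivalent to $x - 1 \ge t - 1$, which is immediate from $x > t$.

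Combining the two cases completes the proof. There is no real obstacle — the only thing to be careful about is checking that the factorization argument in the first case uses nonnegativity of $x-1$ and positivity of $t-1$, both of which hold under the hypotheses $x \ge 1$ and $t > 1$.
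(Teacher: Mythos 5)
Your proof is correct and follows essentially the same argument as the paper: split on $x\le t$ versus $x>t$, factor the difference of squares in the first case, and use $\frac{x+1}{t+1}\le 1$. The only cosmetic difference is that in the second case the paper simply bounds the $\min$ by $1$ rather than identifying it as $1$, but both observations are valid.
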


\begin{proof}
    If $1 \le x \le t$, then $\frac{x^2-1}{t^2-1}\le 1$, $\frac{x+1}{t+1}\le 1$, and $\frac{x-1}{t-1}\ge 0$. Hence \[\min\left(1,\frac{x^2-1}{t^2-1}\right) = \frac{x^2-1}{t^2-1} = \left(\frac{x-1}{t-1}\right)\left(\frac{x+1}{t+1}\right) \le \frac{x-1}{t-1}.\] If $x > t$, we have $\frac{x-1}{t-1} > 1\ge \min\left(1,\frac{x^2-1}{t^2-1}\right)$.
\end{proof}

With this linear upper bound, we can use the bound on $\E[X_1]$ and linearity of expectation to finish the inductive step as follows.

\begin{align*}\Pr[X_{k+1}\ge t^{2^k}] \le \E_{X_1}\left[\min\left(1,\frac{X_1^2-1}{t^2-1}\right)\right]\le \E_{X_1}\left[\frac{X_1-1}{t-1}\right] \le \frac{X_0^2-1}{t-1}. \end{align*} The desired result follows.
\end{proof}

\subsection{Using the Tail Bound to Analyze Potentials}
With \Cref{lem:potential-tail-bound} in hand, we can now revisit the proof strategy of \Cref{thm:hashingexistence}, and use the tail bound to argue the potentials are minimized with high probability.

\begin{theorem}
\label{thm:hashingwhp}
    Let $\ell\ge 1, u\ge \ell + \log\ell$ be integers, and $n\coloneqq 2^\ell$. Let $S\subset\F_2^{u}$ be of cardinality $n$, and let $\cH$ be the set of all surjective linear maps $h: \F_2^{u}\to\F_2^\ell$.  For any $r\ge 1$,\[\Pr_{h\sim \cH}\left[M(S,h) \ge r\cdot \frac{\log n}{\log\log n}\right] \le \frac{3\log e}{2(r-1)}.\]
\end{theorem}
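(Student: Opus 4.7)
The plan is to apply Lemma \ref{lem:potential-tail-bound} to the sequence of potentials $\Phi_0,\Phi_1,\ldots,\Phi_k$, where $k\coloneqq u-\ell$ and the base is $b\coloneqq\log n$, with kernel vectors $v_1,\ldots,v_k$ sampled iteratively via $v_{i+1}\sim\F_2^u\setminus V_i$. Since $M(S,h)$ depends only on $\ker h$ and this iterative sampling produces a uniformly random $k$-dimensional subspace as the kernel, it is equivalent to sampling $h\sim\cH$. The conditional-expectation hypothesis $\E[\Phi_{i+1}\mid\Phi_{\le i}]\le\Phi_i^2$ of Lemma \ref{lem:potential-tail-bound} follows from Lemma \ref{clm:expsquare} by the tower property of conditional expectation.

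The crux is matching the load threshold to a potential threshold. By Lemma \ref{clm:load-potential}, the event $\{M(S,h)\ge r\log n/\log\log n\}$ is contained in $\{\Phi_k\ge b^{r\log n/\log\log n}/2^{u-k}\}$, and with $b=\log n$ the identity $(\log n)^{r\log n/\log\log n}=n^r$ simplifies this to $\{\Phi_k\ge n^{r-1}\}$. I then set $t\coloneqq n^{(r-1)/2^{k-1}}$ so that $t^{2^{k-1}}=n^{r-1}$, giving
\[\Pr[M(S,h)\ge r\log n/\log\log n]\le\Pr[\Phi_k\ge t^{2^{k-1}}]\le\frac{\Phi_0^2-1}{t-1}.\]
Writing $x\coloneqq\log n/2^k$, the hypothesis $u\ge\ell+\log\ell$ gives $k\ge\log\log n$, hence $x\le 1$. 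Then $\Phi_0=1+(|S|/2^u)(\log n-1)\le 1+x$ yields $\Phi_0^2-1\le 2x+x^2\le 3x$, while the elementary inequality $2^y-1\ge y\ln 2$ (valid for $y\ge 0$) applied to $y=2(r-1)x$ yields $t-1\ge 2(r-1)x\ln 2$. Dividing, the factor $x$ cancels and $1/\ln 2=\log e$ delivers the claimed bound $3\log e/(2(r-1))$.

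The main obstacle is parameter tuning rather than any deep calculation. The base $b=\log n$ is chosen precisely so that the load threshold exponentiates into a clean power of $n$, and the threshold $t$ is chosen so that $t-1$ scales linearly in $x$, matching the scaling of $\Phi_0^2-1$ and causing the cancellation that delivers the sharp $1/(r-1)$ dependence. A looser choice of either parameter would lose this sharp dependence; the hypothesis $u\ge\ell+\log\ell$ plays an essential supporting role in ensuring $x\le 1$, which is what lets us linearize $\Phi_0^2-1\le 3x$ (for larger $x$ the quadratic term would dominate and spoil the bound).
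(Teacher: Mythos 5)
Your proof is correct and follows essentially the same route as the paper's: base $b=\log n$, the iterative kernel-vector sampling, \Cref{clm:expsquare} feeding into \Cref{lem:potential-tail-bound}, the threshold $\Phi_k\ge n^{r-1}$ via \Cref{clm:potential-load}, and the same linearization $\Phi_0^2-1\le 3x$ and $t-1\ge 2(r-1)x\ln 2$ yielding $\tfrac{3\log e}{2(r-1)}$. The only cosmetic difference is that you phrase the final estimate via $2^y-1\ge y\ln 2$ where the paper uses $e^y-1\ge y$, which is the same inequality.
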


In the next section, we will establish a stronger tail bound that decays quadratically (\Cref{lem:quad-tail-surj}). \Cref{thm:hashingwhp} suffices for a high constant probability guarantee.

\begin{proof}
Let $k\coloneqq u-\ell$. To pick $h\sim \cH$, we will pick linearly independent vectors $v_1,\dots v_k\in \F_2^u$ uniformly at random, and then pick a uniformly random $h$ with kernel $V\coloneqq \Span(v_{\le k})$.  In particular, for $i=1,\dots, k$ we will iteratively pick $v_{i}\sim \F_2^u\setminus V_{i-1}$. Using $S$ and $v_{\le k}$, define the functions $\{S_i(x)\coloneqq |(x+V_i)\cap S|\}_{0\le i\le k}$, and the potentials $\{\Phi_i(x)\coloneqq \E_{x\sim \F_2^u}[\ell^{S_i(x)}]\}_{i\in[k]}$.
If some preimage of $h$ had load $r\ell/\log \ell$, then $|(x+V)\cap S|\ge r\ell/\log \ell$ for some shift $x$. \Cref{clm:potential-load} then implies \[\Phi_k \ge \frac{\ell^{r\ell/\log\ell}}{2^{u-k}}  = \frac{2^{r\ell}}{2^{\ell}} =  n^{r - 1}.\]

We can easily compute\[\Phi_0 = \E_x[\ell^{S(x)}] = 1 - \frac{n}{2^u} + \frac{n\ell}{2^{u}} < 1 + \frac{\ell}{2^{k}}.\] Consequently, $\Phi_0^2 \le 1 + 2\ell/2^{k} + (\ell/2^k)^2\le 1 + 3\ell/2^k$, where we used the assumption $ { u\ge \ell+\log \ell} \iff \ell/2^k\le 1$. We already know by \Cref{clm:expsquare} that $\E[\Phi_{i+1} | \Phi_{\le i}] \le \Phi_i^2$. Hence, by \Cref{lem:potential-tail-bound} (where all $\Phi_i\ge 1$ since $\ell \ge 1$), we have \begin{align*}\Pr\left[M(S,h)\ge \frac{r\ell}{\log \ell}\right] \le \Pr[\Phi_k \ge n^{r-1}]  &  \le \frac{\Phi_0^2 - 1}{n^{(r-1)/2^{k-1}} - 1} \\ &   \le \frac{3\ell/2^{k}}{(r-1)(\ln n)/2^{k-1}}\tag{$e^x-1\ge x$}  \\ &   = \frac{3\log e}{2(r-1)}.\end{align*}

\end{proof}

\section{Optimal Average Max-Load}
\label{sec:expectation}

In this section, we refine the techniques from \Cref{sec:whp} to prove tail bounds strong enough to establish our main result: optimal expected max-load. The tail bound in \Cref{thm:hashingwhp} decays too slowly to imply this expectation bound. At first glance, strengthening the tail bound appears difficult, as the technical lemma underpinning it, \Cref{lem:potential-tail-bound}, is tight by \Cref{rem:initial-tail-tightness}. Crucially, our potential functions satisfy a \emph{strong} monotonicity property that \Cref{lem:potential-tail-bound} does not exploit: $X_{i+1}-1 \ge 2(X_i - 1)$ for all $i \ge 1$ (\Cref{prop:strides}). 
%For instance, they form an increasing sequence, i.e., $X_{i+1} \ge X_i$. However, this alone is insufficient, because one can construct sequences satisfying $X_{i+1} \ge X_i$ that match the original tail bound asymptotically.
Under this assumption, we can prove a quadratically stronger version of \Cref{lem:potential-tail-bound} (see \Cref{lem:parsed-strong-tail}). 

Interestingly, the conditions required for quadratically decaying appear to be quite delicate for two reasons. \begin{itemize}
 \item A stronger version of \Cref{lem:potential-tail-bound} is not possible under the weaker and more standard monotonicity property $X_{i+1}\ge X_i$. There exists a random sequence $(X_i)$ with $\E[X_{i+1}|X_{\le i}]\le X_i^2$ and $X_{i+1}\ge X_i$ that asymptotically saturates the bound of \Cref{lem:potential-tail-bound}. 
    \item The tight example from \Cref{rem:initial-tail-tightness} satisfies $X_{i+1}-1 \ge 2(X_i - 1)$ for all $i \ge 1$, but not for $i = 0$---showing that even a single exceptional timestep can eliminate any asymptotic improvement on the tails.
   
\end{itemize}

\subsection{A Stronger Tail Bound}
The main result we will show is the following.

\begin{theorem}
\label{lem:parsed-strong-tail}
    Let $X_0 > 1$ be a constant, and let $X_1,\dots, X_k$ be random variables satisfying $X_{i+1}-1\ge 2(X_i-1)$ and $\E[X_{i+1}|X_{\le i}]\le X_i^2$ for all $i$. For $ 1 + 4(X_0-1)\le t\le 2$, we have \[\Pr[X_k\ge t^{2^{k-1}}]\le 48\left(\frac{X_0-1}{t-1}\right)^2.\]
\end{theorem}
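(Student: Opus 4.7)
My plan is to reduce the general case to a bound on the distribution of $X_1$ by conditioning, and then exploit the strong monotonicity at step $0$ via a shifted Markov inequality. Applying Lemma~\ref{lem:potential-tail-bound} to the sub-process $(X_1,\ldots,X_k)$ with threshold $t^2$ (noting $(t^2)^{2^{k-2}} = t^{2^{k-1}}$) gives the conditional tail
\[
\Pr\bigl[X_k \ge t^{2^{k-1}} \bigm| X_1\bigr] \;\le\; \min\!\left(1,\; \frac{X_1^2-1}{t^2-1}\right),
\]
so taking expectation over $X_1$ reduces the problem to bounding $\E[\min(1,(X_1^2-1)/(t^2-1))]$ (the case $k=1$ is handled directly by the shifted Markov estimate below).

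The key structural input will be the strong monotonicity at step~$0$: writing $X_1-1 = 2(X_0-1) + Z$ with $Z \ge 0$, the hypothesis $\E[X_1]\le X_0^2$ yields $\E[Z] \le X_0^2-1-2(X_0-1) = (X_0-1)^2$, so Markov on the \emph{shifted} variable gives the quadratic-in-$(X_0-1)$ tail $\Pr[Z \ge u] \le (X_0-1)^2/u$. This is a genuine improvement over the linear-in-$(X_0-1)$ bound available inside the proof of Lemma~\ref{lem:potential-tail-bound}, and is precisely what rules out the tight example of Remark~\ref{rem:initial-tail-tightness}, in which $X_1$ had an atom at~$1$.

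I would then split the expectation at the threshold $X_1-1 = (t^2-1)/4$. In the ``typical'' regime $X_1-1 \le (t^2-1)/4$ the minimum equals $(X_1^2-1)/(t^2-1)$, which I would expand as $(2(X_0-1)+Z)(2(X_0-1)+2+Z)/(t^2-1)$ and bound term-by-term using $\E[Z] \le (X_0-1)^2$ and $\E[Z^2\,\mathbf{1}_{Z\le U}] \le U\,\E[Z]$ (avoiding $\E[Z^2]$, which may be infinite). In the ``heavy'' regime $X_1-1 > (t^2-1)/4$ I would use the trivial bound~$1$ combined with shifted Markov. The hypothesis $t \ge 1 + 4(X_0-1)$ guarantees $(t^2-1)/4 - 2(X_0-1) \ge (t-1)^2/4$, bounding the heavy contribution by $4(X_0-1)^2/(t-1)^2$; the hypothesis $t \le 2$ keeps $t+1 \le 3$, cleanly converting $1/(t^2-1)^2$ to $O(1/(t-1)^2)$.

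The main obstacle will be matching the constant $48$ exactly. A direct implementation of the case analysis yields a constant slightly above $48$, so the argument will likely need a sharper decomposition of $(X_1-1)^2$, or alternatively a recursive application of the target bound itself on the sub-process (giving $48((X_1-1)/(t^2-1))^2$ conditionally on $X_1$, in place of the Lemma~\ref{lem:potential-tail-bound} bound). This recursive route is cleaner but requires $t^2 \le 2$, i.e.\ $t \le \sqrt{2}$; the range $t \in (\sqrt{2},\,2]$ would then have to be handled by a separate, non-inductive argument that directly uses $t-1$ being bounded away from~$0$.
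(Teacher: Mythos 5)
Your main route has a genuine gap at the very first reduction. Applying Lemma~\ref{lem:potential-tail-bound} to the sub-process $(X_1,\dots,X_k)$ discards the monotonicity hypothesis for all steps $i\ge 1$, and after that no amount of care with the distribution of $X_1$ can recover a quadratic bound. Concretely, since $X_1 \ge 1+2(X_0-1)$ holds pointwise and $x\mapsto\min\bigl(1,\tfrac{x^2-1}{t^2-1}\bigr)$ is increasing, the quantity you reduce to satisfies
\[
\E\left[\min\left(1,\frac{X_1^2-1}{t^2-1}\right)\right]\;\ge\;\frac{(1+2(X_0-1))^2-1}{t^2-1}\;\ge\;\frac{4(X_0-1)}{(t-1)(t+1)}\;\ge\;\frac{4(X_0-1)}{3(t-1)},
\]
which is \emph{linear} in $X_0-1$ and therefore exceeds $48\bigl(\tfrac{X_0-1}{t-1}\bigr)^2$ whenever $X_0-1 < (t-1)/36$ --- a regime the theorem explicitly allows. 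The issue is not the constant $48$: your bound is off by an unbounded factor. Ruling out the tight example of Remark~\ref{rem:initial-tail-tightness} via monotonicity at step $0$ alone is not enough, because the analogous example one level deeper ($X_1=1+2(X_0-1)$ deterministically, then $X_2\in\{1,t^2\}$ with the Markov-saturating probability, then repeated squaring) satisfies every constraint your argument uses and has tail probability $\Theta\bigl(\tfrac{X_0-1}{t-1}\bigr)$. The quadratic gain genuinely requires using $X_i-1\ge 2^i(X_0-1)$ at \emph{every} level: this is why the paper proves the strengthened induction hypothesis of Lemma~\ref{claim:op-tail-lemma}, whose product form $1-\prod_i\beta_i(t,\delta)$ spreads the loss as $1-\beta_i\approx \tfrac{4^i\delta^2}{t^{2^i}-1}$ over roughly $\log\tfrac{1}{t-1}$ levels, the geometric sum of which is $O\bigl(\delta^2/(t-1)^2\bigr)$; no single level captures it.

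Your fallback of recursing on the target bound itself is closer in spirit to what is needed, but as sketched it does not close either. The heavy-regime term contributes an additive $\tfrac{4(X_0-1)^2}{(t-1)^2}$ at every level of the recursion, while the multiplicative gain from passing $t\mapsto t^2$ is only a factor of $\tfrac{4}{(t+1)^2}$, which tends to $1$ as $t\to 1^+$; so for any fixed constant $C$ the inductive inequality $C\cdot\tfrac{4}{(t+1)^2}+4\le C$ fails near $t=1$. Making this recursion work requires a $t$-dependent (indeed level-dependent) induction hypothesis, which is exactly the role of the $\beta_i$'s in the paper's proof.
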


We would like to prove the above statement using induction, similar to the proof of \Cref{lem:potential-tail-bound}. However, to make the induction go through, we need to strengthen our inductive hypothesis. For each $i\ge 0$, define the functions
%\[\beta_i(t,\delta) = \frac{4^i\delta^2}{t^{2^i}-1-2^{i+1}\delta}.\] 
\[\beta_i(t,\delta) = \frac{t^{2^i} - (1+2^i\delta)^2}{t^{2^i}-(1+2^{i+1}\delta)} .\] The following stronger claim will be easier to prove via induction.

\begin{lemma}
\label{claim:op-tail-lemma}
    Let $\delta >0$ and $X_0 > 1$ be constants, and let $X_1,\dots, X_k$ be random variables such that $\forall i\ge 0$, $X_i\ge 1 +2^i\delta$, and $\E[X_{i+1}|X_{\le i}]\le X_i^2$. For any $t > 1+2\delta$,  \[\Pr[X_k\ge t^{2^{k-1}}]\le 1-\frac{t-X_0^2}{t-1-2\delta}\prod_{i=1}^{k-1}\beta_i(t,\delta).\]

   % Let $X_0\ge Y_0>1$ be constants, and let $Y_1,\dots, Y_k>1$ be random variables such that $\forall i$, $Y_{i+1}\ge 1 + 2^i(X_i-1)$ almost surely and $\E[Y_{i+1}|Y_{\le i}]\le Y_i^2$. $Y_0^2\ge 2X_0-1$. Define \[\beta_i(t,x) =  \frac{2^{2i}(x-1)^2}{(t^{2^i} - 1) - 2^{i+1}(x -1)} \] Then  \[\Pr[Y_k\ge t^{2^{k-1}}]\le 1-\frac{t-Y_0^2}{t-2X_0+1}\prod_{i=1}^{k-1}(1-\beta_i(t,X_0))\]
\end{lemma}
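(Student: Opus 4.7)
The plan is to mirror the inductive structure of \Cref{lem:potential-tail-bound}, but exploit the added monotonicity $X_i \ge 1+2^i\delta$ to sharpen every linear approximation used along the way. I induct on $k$. For the base case $k=1$, since $X_1-(1+2\delta)$ is a nonnegative random variable, Markov yields
\[\Pr[X_1\ge t] = \Pr[X_1-(1+2\delta)\ge t-(1+2\delta)]\le \frac{\E[X_1]-(1+2\delta)}{t-1-2\delta}\le \frac{X_0^2-1-2\delta}{t-1-2\delta} = 1-\frac{t-X_0^2}{t-1-2\delta},\]
matching the claim with the empty product $\prod_{i=1}^{0}\beta_i = 1$.

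For the inductive step from $k$ to $k+1$, condition on $X_1$ and apply the IH to the shifted sequence $(X_1,X_2,\dots,X_{k+1})$ of length $k$, with starting constant $X_1$ and monotonicity parameter $\delta' = 2\delta$ (since $X_{1+j}\ge 1+2^{1+j}\delta = 1+2^j(2\delta)$), and with threshold parameter $t^2$ (so that $(t^2)^{2^{k-1}} = t^{2^k}$). The requirement $t^2 > 1+4\delta$ follows from $t>1+2\delta$. A direct algebraic check gives the key identity $\beta_i(t^2,2\delta) = \beta_{i+1}(t,\delta)$, so the IH yields
\[\Pr\bigl[X_{k+1}\ge t^{2^k}\mid X_1\bigr] \le 1 - \frac{t^2-X_1^2}{t^2-1-4\delta}\,P_0, \qquad P_0 \coloneqq \prod_{j=2}^{k}\beta_j(t,\delta).\]

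Capping the conditional bound at $1$ and taking expectation, I need
\[\E\!\left[\max\!\left(0,\frac{(t^2-X_1^2)P_0}{t^2-1-4\delta}\right)\right] \ge \frac{(t-X_0^2)\,\beta_1(t,\delta)\,P_0}{t-1-2\delta}.\]
Using the factorization $t^2-(1+2\delta)^2 = (t-1-2\delta)(t+1+2\delta)$ and cancelling the positive constants $P_0$ and $t^2-1-4\delta$, this reduces to the clean inequality
\[\E\bigl[(t^2-X_1^2)_+\bigr] \ge (t+1+2\delta)(t-X_0^2).\]
The pointwise bound $(t^2-x^2)_+ \ge (t+1+2\delta)(t-x)$ holds for every $x\ge 1+2\delta$: on $x\le t$ it is the chord of the concave parabola $t^2-x^2$ from $(1+2\delta,\,t^2-(1+2\delta)^2)$ to $(t,0)$ (verified by the factorization $(t^2-x^2)-(t+1+2\delta)(t-x) = (t-x)(x-1-2\delta)\ge 0$), while on $x>t$ the right-hand side is negative. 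Applying this with $x=X_1$, taking expectations, and using $\E[X_1]\le X_0^2$ closes the induction.

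The main obstacle is finding this chord, which is the strengthened analog of \Cref{clm:upperapprox}: it is precisely the monotonicity $X_1\ge 1+2\delta$ that lets the quadratic factor $t+X_1$ be replaced by the uniform constant $t+1+2\delta$, thereby converting $(t^2-X_1^2)_+$ into a linear function of $X_1$ whose expectation is controlled by $\E[X_1]\le X_0^2$. Without this replacement, one would lose a factor in the resulting tail — exactly the quadratic improvement we are after. A secondary bookkeeping subtlety is verifying the positivity of $P_0$, of $t^2-1-4\delta$, and of each $\beta_i$, all of which follow from $t>1+2\delta$ together with Bernoulli's inequality.
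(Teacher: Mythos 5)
Your proof is correct and follows essentially the same route as the paper: the same induction with the substitution $(t,\delta)\mapsto(t^2,2\delta)$ via the identity $\beta_i(t^2,2\delta)=\beta_{i+1}(t,\delta)$, and your pointwise chord bound $(t^2-x^2)_+\ge (t+1+2\delta)(t-x)$ is exactly the paper's \Cref{claim:mintolingeneral} after clearing the denominator $t^2-1-4\delta$. No gaps.
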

\begin{remark}
    \Cref{claim:op-tail-lemma} is tight for the sequence \begin{itemize}
        \item $X_0 = 1+\delta$, and $X_1 = \begin{cases}1+2\delta & \text{with probability }\frac{t-X_0^2}{t-1-2\delta}\\ t & \text{otherwise}\end{cases}$,
        \item for $i\ge 1$,
        \begin{itemize}
            \item if $X_i = t^{2^{i-1}}$, set $X_{i+1} = t^{2^i}$
            \item if $X_i = 1 + 2^i\delta$, set $X_{i+1} =\begin{cases} 1+2^{i+1}\delta & \text{with probability }\beta_i(t,\delta) \\ t^{2^i} & \text{otherwise}\end{cases}$.
        \end{itemize}
    \end{itemize}
\end{remark}

\begin{remark}
    Setting $\delta = 0$ recovers \Cref{lem:potential-tail-bound}.
\end{remark}

%\begin{remark}
 %   The motivation behind the $\beta_i$ functions is that $\beta_1(\delta, t)$ will relate to the slope of the best linear function upper bound, akin to \Cref{clm:upperapprox} (see \Cref{claim:mintolingeneral}). In the proof of \Cref{claim:op-tail-lemma}, we will apply the inductive hypothesis with the change of variables  $\delta\gets 2\delta$ and $t\gets t^2$. This motivates defining $\beta_{i+1}(t,\delta) = \beta_{i}(t^2,2\delta) $ (see \Cref{claim:recursebeta}).
%\end{remark}

Before we start the proof, we will establish some preliminary properties of $\beta_i(t,\delta)$.

\begin{claim}
\label{claim:recursebeta}
    $\beta_i(t^2,2\delta) = \beta_{i+1}(t,\delta)$
\end{claim}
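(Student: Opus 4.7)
The claim is purely computational: both sides are defined by an explicit rational expression in $t$ and $\delta$, so the plan is simply to substitute $(t^2,2\delta)$ into the definition of $\beta_i$ and observe that the resulting expression is, term-by-term, the definition of $\beta_{i+1}(t,\delta)$.

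\textbf{Step 1.} Write out $\beta_i(t^2,2\delta)$ by substitution into
\[
\beta_i(s,\eta) \;=\; \frac{s^{2^i}-(1+2^i\eta)^2}{s^{2^i}-(1+2^{i+1}\eta)}
\]
with $s=t^2$ and $\eta=2\delta$. The numerator becomes $(t^2)^{2^i}-(1+2^i\cdot 2\delta)^2 = t^{2^{i+1}}-(1+2^{i+1}\delta)^2$, and the denominator becomes $(t^2)^{2^i}-(1+2^{i+1}\cdot 2\delta) = t^{2^{i+1}}-(1+2^{i+2}\delta)$.

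\textbf{Step 2.} Compare with the definition of $\beta_{i+1}(t,\delta)$, namely
\[
\beta_{i+1}(t,\delta) \;=\; \frac{t^{2^{i+1}}-(1+2^{i+1}\delta)^2}{t^{2^{i+1}}-(1+2^{i+2}\delta)}.
\]
This is the same rational expression as obtained in Step 1, so the identity follows.

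\textbf{Main obstacle.} There is no real obstacle here; this is a one-line algebraic identity whose entire purpose is to license a change of variables $(t,\delta)\mapsto(t^2,2\delta)$ in the subsequent induction proof of \Cref{claim:op-tail-lemma}. The only thing to be careful about is keeping track of the exponents: the $2^i$ in the base-case factor interacts with the squaring of $t$ to produce $2^{i+1}$, and the doubling of $\delta$ interacts with the multiplication by $2^i$ and $2^{i+1}$ in the numerator and denominator to shift their indices up by one. The claim is then just the observation that these two shifts line up perfectly.
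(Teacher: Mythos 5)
Your proof is correct and is exactly the paper's own argument: substitute $(t^2,2\delta)$ into the definition of $\beta_i$, use $(t^2)^{2^i}=t^{2^{i+1}}$ and the index shifts $2^i\cdot 2\delta=2^{i+1}\delta$, $2^{i+1}\cdot 2\delta=2^{i+2}\delta$, and read off $\beta_{i+1}(t,\delta)$. Nothing further is needed.
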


\begin{proof}
    \begin{align*}\beta_i(t^2,2\delta) = \frac{(t^2)^{2^i} - (1 + 2^i(2\delta))^2}{(t^2)^{2^i} -(1 + 2^{i+1}(2\delta))}  = \frac{t^{2^{i+1}} - (1+2^{i+1}\delta)^2}{t^{2^{i+1}}-(1+2^{i+2}\delta)} = \beta_{i+1}(t,\delta).\end{align*}
\end{proof}

\begin{claim}
\label{claim:rangebeta}
   For all $i\ge 1$ and $t > 1+2\delta$, $0\le \beta_i(t,\delta)\le 1.$
\end{claim}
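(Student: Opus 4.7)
The plan is to verify the two inequalities separately by reducing each to an application of Bernoulli's inequality (\Cref{claim:silly}) with $x = 2\delta$ and $n$ a power of two. For convenience I would introduce the abbreviations $N_i \coloneqq t^{2^i} - (1+2^i\delta)^2$ (the numerator) and $D_i \coloneqq t^{2^i} - (1+2^{i+1}\delta)$ (the denominator), so $\beta_i(t,\delta) = N_i/D_i$. The strategy is first to show both $N_i$ and $D_i$ are strictly positive (which immediately yields $\beta_i \ge 0$), and then to show $N_i \le D_i$ (which yields $\beta_i \le 1$).

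For positivity of the denominator, I would apply Bernoulli with $n = 2^i$ to get $(1+2\delta)^{2^i} \ge 1 + 2^i \cdot 2\delta = 1 + 2^{i+1}\delta$. Since $t > 1+2\delta > 0$, raising to the $2^i$-th power is order preserving, so $t^{2^i} > 1+2^{i+1}\delta$, i.e., $D_i > 0$. For the numerator, this is where the hypothesis $i \ge 1$ matters: I would write $t^{2^i} = (t^{2^{i-1}})^2$ and apply Bernoulli with $n = 2^{i-1}$, obtaining $t^{2^{i-1}} > (1+2\delta)^{2^{i-1}} \ge 1 + 2^i\delta$. Squaring both sides (both positive) gives $t^{2^i} > (1+2^i\delta)^2$, i.e., $N_i > 0$.

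For the upper bound, since $D_i > 0$ it suffices to show $N_i \le D_i$, which after cancelling $t^{2^i}$ from both sides reduces to $(1+2^i\delta)^2 \ge 1+2^{i+1}\delta$. Writing $y = 2^i\delta \ge 0$, this is just $(1+y)^2 = 1 + 2y + y^2 \ge 1 + 2y$, which is trivial.

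There isn't really a serious obstacle here; the only subtlety worth flagging is the role of the hypothesis $i \ge 1$ in the positivity of $N_i$. The square-root trick used for the numerator requires the integer exponent $2^{i-1}$, so Bernoulli applies cleanly; for $i = 0$ the analogous inequality $t > (1+\delta)^2$ need not follow from $t > 1+2\delta$, which is why the claim is stated only for $i \ge 1$.
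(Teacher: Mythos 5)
Your proof is correct and follows essentially the same route as the paper: Bernoulli's inequality gives positivity of the numerator and denominator, and the upper bound reduces to $D_i - N_i = (1+2^i\delta)^2 - (1+2^{i+1}\delta) = 4^i\delta^2 \ge 0$, which is exactly the paper's computation of $1-\beta_i$. If anything, you are more explicit than the paper about why the numerator is positive (the squaring trick via $t^{2^{i-1}} \ge 1+2^i\delta$, which is where $i\ge 1$ is used), a step the paper leaves implicit.
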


\begin{proof}
For the lower bound, note $t^{2^i}\ge 1+2^{i+1}\delta$ for all $i$ (by \Cref{claim:silly}). Consequently, the numerator and denominator of $\beta_i(t,\delta)$ are both positive.
For the upper bound, we see ${1- \beta_i(t,\delta)} = \frac{4^i\delta^2}{t^{2^i}-(1+2^{i+1}\delta)}$. The numerator is trivially nonnegative, and the denominator is positive by \Cref{claim:silly}.    
\end{proof}
%We now show a generalization of \Cref{clm:upperapprox}, which will lower bound an expression we will encounter by a linear function.

With these tools, we are ready to prove \Cref{claim:op-tail-lemma}.

\begin{proof}[Proof of \Cref{claim:op-tail-lemma}]
We apply induction on $k$. For the base case $k=1$, we apply Markov's inequality on the nonnegative random variable $X_1-1-2\delta$ to yield \begin{align*}\Pr[X_1\ge t]  = \Pr[X_1-1-2\delta\ge t-1-2\delta]   \le \frac{X_0^2-1-2\delta}{t-1 - 2\delta}   = 1-\frac{t-X_0^2}{t-1-2\delta}.\end{align*}  Note the implicit use of $t> 1+2\delta$ in the application of Markov's inequality.

Now assume the lemma for $k$. We will now prove it for $k+1$. Write the tail probability as
\begin{align}\Pr[X_{k+1}\ge t^{2^k}] = \E_{X_1}[\Pr[X_{k+1}\ge (t^2)^{2^{k-1}} | X_1]].\label{eqn:induct1}\end{align} For fixed $X_1$, it follows that $X_2,\dots, X_{k+1}$ is a sequence of random variables of length $k$ satisfying $X_{i+1}\ge 1 + 2^i(2\delta)$ and $\E[X_{i+2}|X_1,X_2,\dots, X_{i+1}] \le X_{i+1}^2$ for all $i\in [k]$. Furthermore, $t^2\ge 1+2(2\delta)$ by \Cref{claim:silly}. Hence, all assumptions of the inductive hypothesis are satisfied with the instantiation $\delta\gets 2\delta$, $t\gets t^2$. Utilizing this as well as the fact all probabilities are bounded by 1, we obtain \begin{align} \E_{X_1}[\Pr[X_{k+1}\ge (t^2)^{2^{k-1}} | X_1]]\notag  &\le \E_{X_1}\left[\min \left(1, 1-\frac{t^2-X_1^2}{t^2-1-4\delta}\prod_{i=1}^{k-1}\beta_i(t^2, 2\delta)\right)\right]\notag  \\ & = 1 - \E_{X_1}\left[\max\left(0,\frac{t^2-X_1^2}{t^2-1-4\delta}\prod_{i=2}^k \beta_i(t,\delta)\right)\right] \notag \\ & = 1 - \E_{X_1}\left[\max\left(0, \frac{t^2-X_1^2}{t^2-1-4\delta}\right)\right] \prod_{i=2}^k\beta_i(t,\delta) \label{eqn:induct2}\end{align}
where the last equality used the fact $\prod_{i=2}^k\beta_i(t,\delta)\ge 0$. To bound this expression, we would like to mimic the intuition of \Cref{clm:upperapprox} and bound the argument of $\E_{X_1}[\cdot]$ by a linear function in the domain $X_1\ge 1+2\delta$. The following claim does so. \begin{claim}
\label{claim:mintolingeneral}
    Let $\delta \ge 0$, $x\ge 1+2\delta$, and $t > 1+2\delta$. Then \[\max\left(0,\frac{t^2-x^2}{t^2-1-4\delta}\right)\ge \left(\frac{t-x}{t-1-2\delta}\right)\beta_1(t,\delta). \]
\end{claim}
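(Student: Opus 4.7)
The plan is to peel away the $\max$ by splitting into the two obvious cases $x \le t$ and $x > t$, and in each case to reduce the inequality to an elementary comparison via factoring differences of squares. First I would note that the common denominator $D := t^2 - 1 - 4\delta$ is strictly positive whenever $t > 1+2\delta$: indeed, $(1+2\delta)^2 = 1+4\delta+4\delta^2 \ge 1+4\delta$, so $t^2 > (1+2\delta)^2 \ge 1+4\delta$. Next, factor $t^2 - x^2 = (t-x)(t+x)$ and $t^2 - (1+2\delta)^2 = (t-1-2\delta)(t+1+2\delta)$. Using the latter to expand $\beta_1(t,\delta)$, the right-hand side of the claim simplifies to $\frac{(t-x)(t+1+2\delta)}{D}$, so the inequality to prove becomes
\[
\max\!\left(0,\,\frac{(t-x)(t+x)}{D}\right) \;\ge\; \frac{(t-x)(t+1+2\delta)}{D}.
\]

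In the first case $x \le t$, both sides are over the positive denominator $D$ and the factor $t-x \ge 0$ is common, so the $\max$ is attained by the nontrivial argument and it suffices to show $t+x \ge t+1+2\delta$; this is immediate from the hypothesis $x \ge 1+2\delta$. In the second case $x > t$, the factor $t-x$ is negative, so the argument of the $\max$ is negative, and the left-hand side collapses to $0$. The right-hand side, being the product of a negative factor $t-x$ with the positive quantity $(t+1+2\delta)/D$, is itself negative, and the inequality $0 \ge \text{negative}$ is trivial. Combining the two cases establishes the claim.

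I do not anticipate any genuine obstacle here: the statement is a short algebraic manipulation and its purpose is clearly to play the role of a linear upper/lower envelope (analogous to \Cref{clm:upperapprox} from the $\delta = 0$ case) so that, after taking $\E_{X_1}[\,\cdot\,]$ in the inductive step of \Cref{claim:op-tail-lemma}, the bound $\E[X_1] \le X_0^2$ can be invoked. The only mild subtlety is keeping track of when $D > 0$ and when $t-x$ changes sign, which is exactly what the case split handles.
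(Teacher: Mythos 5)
Your proof is correct and follows essentially the same route as the paper: the same case split on $x \le t$ versus $x > t$, the same factorization $t^2-(1+2\delta)^2 = (t-1-2\delta)(t+1+2\delta)$, and the same reduction to $t+x \ge t+1+2\delta$ via the hypothesis $x \ge 1+2\delta$. The only cosmetic difference is that you cancel the factor $t-1-2\delta$ up front rather than carrying it through the chain of inequalities.
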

Assuming this claim is true for now (see \Cref{rem:geointuit} for geometric intuition), we can use the fact $X_1\ge 1+2\delta$, $\beta_i(t,\delta)\ge 0$,  \Cref{claim:mintolingeneral}, and linearity of expectation to bound \begin{align}1 - \E_{X_1} \left[\max\left(0,\frac{t^2-X_1^2}{t^2-1-4\delta}\right)\right] \prod_{i=2}^k\beta_i(t,\delta)\notag  & \le 1 - \E_{X_1}\left[\frac{t-X_1}{t-1-2\delta}\cdot \beta_1(t,\delta)\right]\prod_{i=2}^k \beta_i(t,\delta)\notag \\ & \le 1 - \frac{t-X_0^2}{t-1-2\delta}\prod_{i=1}^k\beta_i(t,\delta).\label{eqn:induct3} \end{align} Combining \eqref{eqn:induct1},\eqref{eqn:induct2}, and \eqref{eqn:induct3} gives us \[\Pr[X_{k+1}\ge t^{2^k}]\le 1 - \frac{t-X_0^2}{t-1-2\delta}\prod_{i=1}^k \beta_i(t,\delta).\] The desired result follows by induction.
\end{proof}

We now prove \Cref{claim:mintolingeneral}. Notice when $\delta = 0$ this is exactly \Cref{clm:upperapprox}.
\begin{proof}[Proof of \Cref{claim:mintolingeneral}]
    If $x  >  t$, then  $\frac{t-x}{t-1-2\delta} < 0$ as $t > 1+2\delta$. Since we know $\beta_i(t,\delta)\ge 0$ by \Cref{claim:rangebeta}, it follows\[\left(\frac{t-x}{t-1-2\delta}\right)\beta_1(t,\delta) < 0 \le \max\left(0, \frac{t^2-x^2}{t^2-1-4\delta}\right).\] If $x \le t$, then  $\frac{t-x}{t-1-2\delta}\ge 0$ and $\frac{t^2-x^2}{t^2-1-4\delta}\ge 0$, since $t > 1+2\delta$ and $t^2 > 1+4\delta$ (by \Cref{claim:silly}). Hence, \begin{align*}\max\left(0, \frac{t^2-x^2}{t^2-1-4\delta}\right)  = \frac{t^2-x^2}{t^2-1-4\delta}   &  = \left(\frac{t-x}{t-1-2\delta}\right)\left(\frac{(t+x)(t-1-2\delta)}{t^2-1-4\delta}\right) \\ &    \ge  \left(\frac{t-x}{t-1-2\delta}\right)\frac{(t+1+2\delta)(t-1-2\delta)}{t^2-1-4\delta}\\ & = \left(\frac{t-x}{t-1-2\delta}\right)\left(\frac{t^2-(1+2\delta)^2}{t^2-1-4\delta}\right) \\ & = \left(\frac{t-x}{t-1-2\delta}\right)\beta_1(t,\delta).\end{align*} 
    %The conclusion now follows, as \begin{align*}\min\left(1,1-\frac{t^2-x^2}{t^2-1-4\delta}\alpha\right) & = 1 - \max(0,\frac{t^2-x^2}{t^2-1-4\delta}\alpha) \\ &  = 1 - \alpha\max(0,\frac{t^2-x^2}{t^2-1-4\delta})\\ & \le 1-\frac{t-x}{t-1-2\delta}(1-\beta_1(t,\delta))\alpha\end{align*}
\end{proof}

\begin{remark}\label{rem:geointuit} As per the diagram below, the LHS of \Cref{claim:mintolingeneral} is a downward-facing parabola which flattens out to $0$ for $x\ge t$. By convexity, a lower bound for $x\ge 1+2\delta$ will be the line that intersects the parabola at $x=1+2\delta$ and $x=t$. The equation of this line is exactly the RHS.      
    \begin{figure}[h!]

\centering
\begin{tikzpicture}

    \begin{axis}[
        axis lines=middle,
        width=10cm,
        height=6cm,
        xtick distance = 4,
        ytick distance = 1,
        xlabel={$x$},
        ylabel={$y$},
        domain=-1:4,
        samples=200,
        legend cell align={left},
        legend style={
            at={(1.65,.56)}, % Position in bottom right corner
            anchor=south east, % Anchor the legend to the bottom right
            row sep=0.25cm, % Increase vertical space between legend entries
            column sep=0.5cm, % Increase horizontal space if multiple columns are used
            draw=none, % Remove border around the legend
            font=\small % Reduce the size of the legend
        },
        xmin=0, xmax=3,
        ymin=-1, ymax=1.75
    ]

    \addplot[black, ultra thick] {max(0, (4 - x^2)/3)};
    \addlegendentry{$y = \max\left(0,\frac{t^2-x^2}{t^2-(1+4\delta)}\right)$ }

    \addplot[black, dashed, domain=-1:3] {(2-x)};
    \addlegendentry{$y = \frac{\beta_1(t,\delta)}{(1+2\delta) - t}(x-t)$ }

    \node[label={45:{$(t,0)$}},circle,fill,inner sep=1.5pt] at (axis cs:{2},0) {};

    \node[label={45:{$(1+2\delta,\beta_1(t,\delta))$}},circle,fill,inner sep=1.5pt] at (axis cs:{1},1) {};

    \end{axis}
    
\end{tikzpicture}
\label{fig:geointuition}
\end{figure}
\end{remark}

At this point, we have unconditionally proven \Cref{claim:op-tail-lemma}. Using \Cref{claim:op-tail-lemma}, we can now show \Cref{lem:parsed-strong-tail} is true.

\begin{proof}[Proof of \Cref{lem:parsed-strong-tail}]
    Let $\delta \coloneqq X_0-1$. Notice that by composing the assumed inequality, the random variables satisfy $X_i-1\ge 2^i(X_0-1) = 2^i\delta$ for each $i$. Hence, we can apply \Cref{claim:op-tail-lemma} and use the fact $\beta_0(t,\delta) = \frac{t-(1+\delta)^2}{t-1-2\delta} = \frac{t-X_0^2}{t-1-2\delta}$ to yield \begin{align*}\Pr[X_k\ge t^{2^{k-1}}]  \le 1 - \left(\frac{t-X_0^2}{t-1-2\delta}\right)\prod_{i=1}^k\beta_i(t,\delta)  = 1 - \prod_{i=0}^k\beta_i(t,\delta). \end{align*} Since \Cref{claim:rangebeta} tells us $0\le \beta_i(t,\delta) \leq 1$ for all $i$, we can treat these quantities as probabilities. Consider $k+1$ independent and biased coins, where coin $i$ has probability $\beta_i(t,\delta)$ of showing heads. By the union bound on the event that at least one tail shows, we have
    %Using $0\le \beta_i(t,\delta) \leq 1$ for all $i \geq 1$ from \Cref{claim:rangebeta} and the fact $1-\beta_i(t,\delta) = \frac{4^i\delta^2}{t^{2^i}-(1+2^{i+1}\delta)}$, we can further bound 
    \[ 1 - \prod_{i=0}^k\beta_i(t,\delta)   
    %= \sum_{i=0}^k (1-\beta_i(t,\delta))\prod_{j=i+1}^k\beta_i(t,\delta)   
    \le \sum_{i=0}^k(1-\beta_i(t,\delta))   = \sum_{i=0}^k\frac{4^i\delta^2}{t^{2^i}-1 - 2^{i+1}\delta}   \le 2\delta^2\sum_{i=0}^k \frac{4^i}{t^{2^i}-1},\] where we used the fact that for all $i$, $t^{2^i} - 1\ge 2^i(t-1)\ge  2^{i+2}\delta$ by \Cref{claim:silly} and the theorem assumption. For $i\le \log\left(\frac{5}{t-1}\right)$, we have \begin{align*}\sum_{i\le \log\left(\frac{5}{t-1}\right)}\frac{4^i}{t^{2^i}-1}  =\sum_{i\le \log\left(\frac{5}{t-1}\right)}\frac{2^i}{\ln t}\cdot \frac{2^i\ln t}{e^{2^i\ln t}-1}  & \le \frac{1}{\ln t}\sum_{i\le \log\left(\frac{5}{t-1}\right)} 2^i\tag{$\frac{x}{e^x-1}\le 1$} \\ & \le \frac{1}{\ln t}\cdot \frac{10}{t-1}\\ & \le \frac{t}{t-1}\cdot \frac{10}{t-1} \tag{$1+x\ge e^{\frac{x}{1+x}}$}\\ & \le \frac{20}{(t-1)^2},\end{align*} For $i > \log\left(\frac{5}{t-1}\right)$, we can bound  \begin{align*}\sum_{i> \log\left(\frac{5}{t-1}\right)}\frac{4^i}{t^{2^i}-1}  = \sum_{j\ge 1} \frac{4^{\log(5/(t-1))}4^j}{(t^{5/(t-1)})^{2^j} - 1}  & \le \left(\frac{5}{t-1}\right)^2\sum_{j\ge 1}\frac{4^j}{(1 + \frac{5}{t-1}(t-1))^{2^j} - 1} \tag{\Cref{claim:silly}} \\ & \le \left(\frac{5}{t-1}\right)^2\sum_{j\ge 1}(4/35)^j \\ & \le \frac{4}{(t-1)^2}.  \end{align*} Hence we have \begin{align*}\Pr[X_k\ge t^{2^{k-1}}]  \le 2\delta^2\sum_{i=0}^k \frac{4^i}{t^{2^i}-1} \le 2\delta^2\left(\frac{20}{(t-1)^2}+\frac{4}{(t-1)^2}\right)  &  = 48\left(\frac{X_0-1}{t-1}\right)^2 .\end{align*}
\end{proof}

\subsection{Applying the Tail Bound To Our Potentials}

We will strengthen the tail bounds of \Cref{thm:hashingwhp}, and also generalize \Cref{thm:hashingwhp} to the setting of $m$ balls and $n$ bins, for $m\neq n$. For brevity, define \[\opt(m,n) = \begin{cases} \frac{\log n}{\log\left(\frac{n\ln n}{m}\right)} & m\le \frac{1}2 n\log n \\ \frac{m}n & m > \frac{1}2n\log n \end{cases}\] to be the function that outputs the maximum load obtained when a fully random hash maps $m$ balls to $n$ bins.

\begin{theorem}
\label{lem:quad-tail-surj}
Let $u,\ell,m\ge 1$ be integers such that $u\ge \ell + 2\log(\ell m)$, and let $n\coloneqq 2^\ell$. Let $\cH$ be the set of surjective linear maps $\F_2^u\to\F_2^\ell$. Let $S\subset \F_2^u$ be a subset of size $m$. Then for any real $r\ge 6$, \[\Pr_{h\sim \cH}\left[M(S,h)\ge r\cdot \opt(m,n)\right]\le \frac{48}{(r-2)^2}\]
\end{theorem}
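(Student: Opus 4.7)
The plan is to mirror the proof of \Cref{thm:hashingwhp}, replacing the linear tail bound \Cref{lem:potential-tail-bound} with the stronger quadratic tail bound \Cref{lem:parsed-strong-tail}. I would set $k \coloneqq u - \ell$, sample $h \sim \cH$ by iteratively drawing $v_i \sim \F_2^u \setminus V_{i-1}$ to form the kernel $V = \Span(v_{\le k})$ followed by a uniform $h$ with that kernel, and define the potentials $\Phi_i$ relative to a base $b > 1$ chosen so that $b^{\opt(m,n)} = n$. Concretely, I would take $b \coloneqq n\ln n/m$ when $m \le \tfrac{1}{2} n \log n$, and $b \coloneqq n^{n/m}$ otherwise. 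Writing $T \coloneqq \opt(m,n)$, both choices give $b^T = n$, so \Cref{clm:load-potential} translates the event $M(S,h) \ge rT$ into $\Phi_k \ge b^{rT}/n = n^{r-1}$.

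Next, I would compute $\Phi_0 = 1 + \delta$ where $\delta \coloneqq m(b-1)/2^u$, and use the hypothesis $u \ge \ell + 2\log(\ell m)$ (equivalently $2^k \ge (\ell m)^2$), along with an elementary linear bound on $2^y - 1$ for $y \in [0,2]$ in the second regime, to show $\delta = O(\ell/2^k)$. The sequence $(\Phi_i)$ satisfies the two hypotheses of \Cref{lem:parsed-strong-tail}: the strong monotonicity $\Phi_{i+1} - 1 \ge 2(\Phi_i - 1)$ from \Cref{prop:strides}, and the squared conditional expectation $\E[\Phi_{i+1} \mid \Phi_{\le i}] \le \Phi_i^2$ from \Cref{clm:expsquare}. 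I would apply \Cref{lem:parsed-strong-tail} with $t \coloneqq n^{(r-1)/2^{k-1}}$, so that $t^{2^{k-1}} = n^{r-1}$. Using the elementary bound $t - 1 \ge (\log t)\ln 2 = (r-1)\ell\ln 2/2^{k-1}$, the ratio $\delta/(t-1)$ collapses to at most $1/(2(r-1))$ in the first regime (and a comparable bound in the second), producing the tail estimate $48\,(\delta/(t-1))^2 \le 12/(r-1)^2 \le 48/(r-2)^2$ for $r \ge 6$, as desired.

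The main obstacle is verifying the preconditions $1 + 4\delta \le t \le 2$ demanded by \Cref{lem:parsed-strong-tail}. The lower bound $t \ge 1 + 4\delta$ follows immediately for $r \ge 6$ from the size estimates on $\delta$ and $t - 1$ above. The upper bound $t \le 2$ translates to $(r-1)\ell \le 2^{k-1}$, which fails when $r$ is very large relative to $2^k/\ell$. To close this gap, I would invoke the deterministic bound $M(S,h) \le m$: the hypothesis $2^k \ge (\ell m)^2$ together with $m/T = O(m)$ in both regimes implies that once $(r-1)\ell > 2^{k-1}$, one already has $rT > m$, so the event $M(S,h) \ge rT$ is vacuous and the claimed probability bound holds trivially. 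A short case split merging the two regimes then yields the theorem.
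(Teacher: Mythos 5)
Your proposal follows essentially the same route as the paper: decompose $h$ into kernel vectors, track the potential $\Phi_i = \E_x[b^{S_i(x)}]$, verify the two hypotheses of \Cref{lem:parsed-strong-tail} via \Cref{clm:expsquare} and \Cref{prop:strides}, and split on whether $m \le \tfrac{1}{2}n\log n$. The paper's differences are cosmetic: in the dense regime it takes $b=2$ and relaxes the threshold to $2^{(r-2)m/n}$ (using $\ell \le 2m/n$) rather than your $b=n^{n/m}$, and it chooses $t = 1 + r\ell/2^k$ (resp.\ $t = 1+(r-2)m/2^u$) rather than solving $t^{2^{k-1}}=n^{r-1}$ exactly; it disposes of the precondition $t\le 2$ by assuming $r\le m$ outright, which is the same vacuousness observation you make. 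One quantitative caveat: in your second regime, as $\ell n/m \to 2$ your linear bound gives $b-1 \le \tfrac{3}{2}\ell n/m$, hence $\delta/(t-1) \le \tfrac{3}{4(r-1)\ln 2}$ and a final bound of roughly $56/(r-1)^2$, which exceeds $48/(r-2)^2$ once $r \gtrsim 14$ --- so the exact stated constant does not close there as written. This is easily repaired (e.g., by adopting the paper's $b=2$ normalization, whose threshold $2^{(r-2)m/n}$ yields $48/(r-2)^2$ on the nose) and does not affect the substance of the argument.
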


\begin{proof}
     % It suffices to show the claim for large enough $u$, say $u\ge \ell + \log \ell + \log m$. To see why this doesn't lose generality, let $\cH'$ be a random uniform surjective linear map $h:\F_2^{u'}\to\F_2^\ell$ for some $u' < u$. Now consider a $u'$-dimensional subspace $V\subset \F_2^u$. There is an isomorphism $\iota: \F_2^{u'}\to V$. For $h\sim \cH$, $h|_V:V\to \F_2^\ell$ will be a random linear surjective map, and for any $T\subset V$, \begin{align*}M(T,h|_V) & = \max_{y\in \F_2^\ell} |(h|_V)^{-1}(y)\cap S| \\ & = \max_{y\in \F_2^\ell} |h^{-1}(y)\cap V \cap T| \\ & = \max_{y\in \F_2^\ell} |h^{-1}(y)\cap T| \\ & =M(T,h). \end{align*} Consequently, for any $M$,  \[\Pr_{h'\sim \cH}[M(S,h')\ge M] = \Pr_{h\sim \cH}[M(\iota(S), h|_V)\ge M]= \Pr_{h\sim \cH}[M(\iota(S), h)\ge M]. \]
    
    Define $k\coloneqq u-\ell$. We assume $r\le m$, since $M(S,h)\le |S| = m$ for any $h$. We will pick $h\sim\cH$ by iteratively sampling $v_{i+1}\sim \F_2^u\setminus V_i$, and then picking random $h$ with $\ker(h) = V$. Define the associated functions $\{S_i(x)\coloneqq |(x+V_i)\cap S|\}_{0\le i\le k}$. We will split into two cases.
    
    \noindent \textbf{Case 1: $m\le \frac{1}2 n\log n$}. 
    
    Set $b=n\ell/m$ and $\Phi_i \coloneqq \E_{x\sim \F_2^u}[b^{S_i(x)}]$ for $0\le i\le k$. Then \[\Phi_0 = \left(1-\frac{m}{2^u}\right)\cdot 1 + \frac{m}{2^u}\cdot b\le 1 + \frac{bm}{2^u}= 1 + \frac{\ell}{2^k}.\] By \Cref{clm:load-potential} and the fact $\opt(m,n) = \ell/\log b$ in this case, we have \begin{align*}\Pr[M(S,h) \ge r \cdot \opt(m,n)]  \le \Pr\left[\Phi_k\ge \frac{b^{r\cdot \opt(m,n)}}{n}\right]  = \Pr[\Phi_k\ge n^{r-1}].\end{align*} Note that for $t = 1 + r\ell/2^k$, $t^{2^{k-1}} \le e^{r\ell/2}  < n^{r-1}$. Furthermore, as $4\le r\le m$ and $u\ge \ell + \log( \ell m)$, we have $t-1\ge 4\ell/2^k\ge  4(\Phi_0-1)$ and $t-1 \le  mn\ell/2^u\le 1$. Finally, we have $\Phi_{i+1}-1\ge 2(\Phi_i-1)$ for all $i$ by \Cref{prop:strides}. Hence, we can use \Cref{lem:parsed-strong-tail} to bound \begin{align*}\Pr[\Phi_k\ge n^{r-1}]  \le \Pr[\Phi_k\ge t^{2^{k-1}}]   \le 48\left(\frac{\Phi_0-1}{t-1}\right)^2  \le 48\left(\frac{\ell/2^k}{r\ell/2^k}\right)^2 =\frac{48}{r^2}.  \end{align*} 
    
\noindent \textbf{Case 2:  $m > \frac{1}2n\log n$}. 

Define $\Phi_i \coloneqq \E_{x\sim\F_2^u}[2^{S_i(x)}]$ for $0\le i\le k$. Then \[\Phi_0 = 1-\frac{m}{2^u} + \frac{2m}{2^u} = 1 + \frac{m}{2^u}.\] By \Cref{clm:load-potential} and the fact $\opt(m,n) = m/n$ in this regime, \begin{align*}\Pr[M(S,h)\ge r\cdot \opt(m,n)] & = \Pr[M(S,h)\ge rm/n] \\ & \le \Pr[\Phi_k\ge 2^{rm/n-\ell}] \\ &  \le \Pr[\Phi_k\ge 2^{(r-2)m/n}]\tag{$m\ge \frac{1}2n\ell$ }\end{align*} For $t \coloneqq 1 + (r-2)m/2^u$, we have $t^{2^{k-1}}\le e^{(r-2)m/2n}\le 2^{(r-2)m/n}$. Since $6\le r\le m$ and $u\ge \ell + 2\log m$, we can deduce $t-1\ge 4m/2^u \ge 4(\Phi_0-1) $ and $t-1\le m^2/2^u\le 1$. Furthermore, we have $\Phi_{i+1}-1\ge 2(\Phi_i-1)$ for all $i$ by \Cref{prop:strides}. Hence, by \Cref{claim:op-tail-lemma}, we have \begin{align*}\Pr[\Phi_k\ge t^{2^{k-1}}]  \le 48\left(\frac{\Phi_0-1}{t-1}\right)^2   \le 48\left(\frac{m/2^u}{(r-2)m/2^u}\right)^2   = \frac{48}{(r-2)^2}.\end{align*} 

\end{proof}

With a simple argument (whose proof is deferred to \Cref{app:removing-surjectivity}), we can remove the artificial lower-bound condition on $u$ and the surjectivity condition on $h$. 

\begin{theorem}
\label{thm:quadratic-tail-final}
    Let $u\ge \ell\ge 1$ and $m\le 2^u$ be integers.  Let $n\coloneqq 2^\ell$. Let $h:\F_2^u\to\F_2^\ell$ be a random linear map. For any $S\subset \F_2^u$ of size $m$ and $r\ge 6$, we have \[\Pr_{h\sim \cH} [M(S,h)\ge r\cdot \opt(m,n)]\le \frac{49}{(r-2)^2}.\]
\end{theorem}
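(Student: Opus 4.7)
The plan is to reduce to \Cref{lem:quad-tail-surj} by coupling $h$ to a uniformly random linear map $\tilde h$ on a larger source space, then conditioning on the event that $\tilde h$ is surjective. The extra slack of $49 - 48 = 1$ in the numerator is precisely what is needed to absorb the surjectivity failure probability.

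Fix $r \ge 6$. First I would choose $\tilde u$ large enough to meet three constraints simultaneously: $\tilde u \ge u$ (so that $\F_2^u$ embeds into $\F_2^{\tilde u}$), $\tilde u \ge \ell + 2\log(\ell m)$ (the source-dimension hypothesis of \Cref{lem:quad-tail-surj}), and $\tilde u \ge \ell + 2\log(r-2)$ (so that the non-surjectivity probability comes in under $1/(r-2)^2$). Taking $\tilde u := \max\bigl(u,\, \ell + 2\log(\ell m),\, \ell + 2\log(r-2)\bigr)$ works. I would then define $\tilde h : \F_2^{\tilde u} \to \F_2^\ell$ by $\tilde h(x,y) := h(x) + g(y)$, where $g : \F_2^{\tilde u - u} \to \F_2^\ell$ is an independent uniformly random linear map. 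The matrix of $\tilde h$ is the horizontal concatenation of two independent uniformly random $\F_2$-matrices, so $\tilde h$ is uniformly distributed over all linear maps $\F_2^{\tilde u} \to \F_2^\ell$.

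Next, set $\tilde S := \{(x,0) : x \in S\} \subseteq \F_2^{\tilde u}$; this satisfies $|\tilde S| = m$, and the identity $\tilde h(x,0) = h(x)$ yields $M(\tilde S, \tilde h) = M(S,h)$ pointwise. Let $A$ denote the event that $\tilde h$ is surjective. Conditional on $A$, the map $\tilde h$ is uniformly distributed over surjective linear maps, so \Cref{lem:quad-tail-surj} applies and gives $\Pr[M(\tilde S, \tilde h) \ge r\cdot \opt(m,n) \mid A] \le 48/(r-2)^2$. A standard union bound over the $2^\ell - 1$ non-trivial linear combinations of the rows of the matrix of $\tilde h$ (each of which vanishes with probability $2^{-\tilde u}$) gives $\Pr[\neg A] < 2^{\ell - \tilde u} \le 1/(r-2)^2$ by our choice of $\tilde u$. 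Splitting on $A$ versus $\neg A$ then yields the desired inequality
\[
\Pr[M(S,h) \ge r\cdot \opt(m,n)] \;\le\; \frac{48}{(r-2)^2} + \frac{1}{(r-2)^2} \;=\; \frac{49}{(r-2)^2}.
\]

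The main ``obstacle'' is really only parameter bookkeeping: checking that $\tilde u$ can be chosen to satisfy all three constraints at once, and recalling that a uniformly random linear map conditioned on surjectivity is uniform among surjective linear maps (so that \Cref{lem:quad-tail-surj} can be invoked directly). No new technical ingredients beyond \Cref{lem:quad-tail-surj} are required.
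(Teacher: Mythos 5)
Your proposal is correct and follows essentially the same route as the paper's appendix proof: pass to a sufficiently large ambient domain on which the hash is a uniform linear map, condition on surjectivity so that \Cref{lem:quad-tail-surj} applies, and absorb the non-surjectivity probability (at most $2^{\ell-\tilde{u}}\le 1/(r-2)^2$) into the extra $+1$ in the numerator. The only cosmetic differences are the direction of the coupling (you append independent random columns to enlarge the domain, whereas the paper first proves the bound for large $u$ and then restricts a random map to a subspace to handle small $u$) and the non-surjectivity estimate (a union bound over the $2^\ell-1$ row combinations versus Markov's inequality on the expected kernel size); just take $\tilde{u}$ to be the ceiling of your maximum so it is an integer.
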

With these strong tails, optimal expected max-load is a simple corollary.

\begin{theorem}
    Let $u\ge \ell\ge 1$ be integers, and $n\coloneqq 2^\ell$. For uniformly random linear map $h:\F_2^u\to \F_2^\ell$, \[\E_h[M(S,h)]\le 16\cdot \opt(m,n).\]
\end{theorem}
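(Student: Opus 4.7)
The plan is to derive the expectation bound from the quadratic tail bound in \Cref{thm:quadratic-tail-final} by a standard tail integration, and to choose the splitting point carefully so that the constant comes out to exactly $16$. Set $T \coloneqq M(S,h)$ and $o \coloneqq \opt(m,n)$. Since $T$ is a nonnegative random variable,
\[
\E_h[T] \;=\; \int_0^\infty \Pr[T \ge t]\,dt \;=\; o\int_0^\infty \Pr[T \ge r o]\,dr,
\]
after the substitution $t = r o$.

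Next, I would split this integral at some threshold $r_0 \ge 6$. For $r \in [0,r_0]$ I would use only the trivial bound $\Pr[T \ge r o]\le 1$, contributing $r_0\cdot o$. For $r > r_0$ I would apply \Cref{thm:quadratic-tail-final}, which gives $\Pr[T \ge r o]\le 49/(r-2)^2$, so the tail contributes
\[
o\int_{r_0}^\infty \frac{49}{(r-2)^2}\,dr \;=\; \frac{49\,o}{r_0-2}.
\]
Combining,
\[
\E_h[T] \;\le\; o\left(r_0 + \frac{49}{r_0-2}\right).
\]

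The final step is to optimize the split. Differentiating in $r_0$ gives the minimum at $r_0 - 2 = 7$, i.e.\ $r_0 = 9$, which is within the admissible range $r\ge 6$ required by \Cref{thm:quadratic-tail-final}. Plugging in yields $r_0 + 49/(r_0-2) = 9 + 7 = 16$, so $\E_h[T] \le 16\cdot\opt(m,n)$ as claimed. There is no real obstacle here; the only thing to be careful about is ensuring the threshold $r_0=9$ lies in the regime where the quadratic tail bound applies, which it does, and confirming that the ``trivial'' portion where $r < 9$ genuinely only loses a factor of $9$ (it does, since $\Pr[\cdot]\le 1$). The numerical constants in \Cref{thm:quadratic-tail-final} were evidently set up so that this optimization lands on a clean integer, which is why the stated constant is $16$.
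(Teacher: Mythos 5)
Your proof is correct and is essentially identical to the paper's: both integrate the tail, split at $r_0 = 9$ (i.e.\ at $t = 9\cdot\opt(m,n)$), use the trivial bound below and the quadratic tail bound $49/(r-2)^2$ above, and evaluate $9 + 49/7 = 16$. The only cosmetic difference is that you explicitly derive $r_0=9$ as the optimizer of $r_0 + 49/(r_0-2)$, whereas the paper just plugs it in.
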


\begin{proof}
    \begin{align*}\E_h[M(S,h)] & = \int_{0}^\infty\Pr[M(S,h)\ge t]dt \\ & \le \int_0^{9\cdot \opt(m,n)} 1dt + \int_{9\cdot \opt(m,n)}^\infty\Pr[M(S,h)\ge t]dt\\ & \le 9\opt(m,n)+ \opt(m,n)\int_{9}^\infty \Pr[M(S,h)\ge r\opt(m,n)]dr  \\ & \le 9\opt(m,n) + 49\opt(m,n)\int_{9}^\infty\frac{1}{(r-2)^2}dr \tag{\Cref{thm:quadratic-tail-final}}\\ & =   16\cdot \opt(m,n). \end{align*}
\end{proof}

\section{Two-Sided Bounds}
\label{sec:two-sided}

%We first confirm that all of the max-load upper bounds from \Cref{thm:hashingwhp} are tight in a very strong manner. We first start with the regime $m = O(n\log n)$. In this setting, most ball sets $S$ will not have a \emph{single} linear map with $M(S,h) \le (1-o(1))\opt(m,n)$.

%\begin{theorem}
%    Let $m,n\in \N$ with $2^u = m^2$. There exists a constant $C$ and a subset $S\subset\F_2^u$ of cardinality at most $m$ such that \emph{every} linear map $h:\F_2^{u}\to\F_2^\ell$ has $M(S,h) > (1-o(1)) \opt(m,n)$.
%\end{theorem}

%The proof uses standard techniques from the classical balls-and-bins problem, so we defer it to the appendix.

In the regime of $m = \Omega(n\log n)$, we can give two-sided bounds on all bins. In particular, we can show for any set of $m$ balls, a random linear map will hash $\Theta(m/n)$ balls to each bin with high probability.

\begin{theorem}
\label{thm:twosided}
    Let $0 <\eps < 1/2$ be a constant. There exists constants $C_1 < 1 < C_2$ depending on $\eps$ such that for $m\ge C_1^{-1} n\log n$ and any $S\subset\F_2^u$ of cardinality $m$, a uniformly random linear map $h:\F_2^u\to \F_2^\ell$ satisfies \[\Pr\left[\forall y\in \F_2^\ell, \frac{C_1 m}n\le |h^{-1}(y)\cap S|\le \frac{C_2 m}n\right] \ge 1-\eps.\] Furthermore, $C_1 = \Omega(\eps^{74})$ and $C_2 = O(\eps^{-1/2})$.
\end{theorem}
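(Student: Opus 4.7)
My plan is to prove the upper and lower bounds separately and combine them via a union bound. The upper bound is an immediate consequence of prior work; the lower bound is where the real work lies and will use the ``dual'' potential detection from \Cref{clm:blanketing}.

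\textbf{Upper bound.} In the regime $m \ge C_1^{-1} n \log n$ (assuming $C_1 \le 1/2$, so $m \ge 2n\log n$), we have $\opt(m,n) = m/n$. I would invoke \Cref{thm:quadratic-tail-final} with $r = 2 + 7\sqrt{2/\eps}$, which yields
\[
  \Pr_h\Big[\max_y |h^{-1}(y)\cap S| \ge r\cdot m/n\Big] \le \frac{49}{(r-2)^2} \le \frac{\eps}{2}.
\]
Setting $C_2 \coloneqq r$ gives the claimed scaling $C_2 = O(\eps^{-1/2})$.

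\textbf{Lower bound via the dual potential.} For a base $b\in(0,1)$ to be tuned, define $\Phi_i = \E_x[b^{S_i(x)}]$. By \Cref{clm:blanketing}, if some bin has load at most $L_0 \coloneqq C_1 m/n$, then $\Phi_k \ge b^{L_0}/n$. Hence it suffices to prove
\[
  \Pr_h\!\big[\Phi_k \ge b^{L_0}/n\big] \le \eps/2.
\]
I would pick $b = 1 - \delta$ with $\delta$ a polynomial function of $\eps, C_1$ tuned so that $b^{L_0}/n$ is non-trivially achievable in expectation while $\Phi_0 = 1 - \delta m/2^u$ is close to $1$. Note that even in the most favorable case of perfectly balanced loads, one has $\Phi_k \ge b^{m/n}$ by Jensen, which forces $\delta$ to satisfy $\delta(m/n - L_0) \gtrsim \log n$; in the target regime $m = \Theta(C_1^{-1} n\log n)$ this pins $\delta$ to be of order $C_1$.

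\textbf{The main obstacle.} Bounding the upper tail of $\Phi_k$ is the crux. The squaring property \Cref{clm:expsquare} and the monotonicity \Cref{prop:strides} still hold, but \Cref{lem:parsed-strong-tail} \emph{cannot} be directly applied: for $b<1$ the potential satisfies $\Phi_i\in[0,1]$, violating the hypothesis $X_i\ge 1$. Naive Markov gives only $\Pr[\Phi_k \ge T] \le \E[\Phi_k]/T$, and even bounding $\E[\Phi_k]$ is nontrivial because $\E[\Phi_{i+1}] \le \E[\Phi_i^2]$ does not imply $\E[\Phi_{i+1}] \le \E[\Phi_i]^2$ without control of the variance.

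I expect to resolve this by combining two ingredients. First, a Doob martingale $M_i \coloneqq \E[\Phi_k \mid v_{\le i}]$ whose differences $|M_{i+1}-M_i|$ are controlled by the pair of deterministic sandwich bounds $2\Phi_i - 1 \le \Phi_{i+1} \le \Phi_i$; applying Azuma (\Cref{fact:azuma}) concentrates $\Phi_k$ around its mean. Second, a moment-based upper bound on $\E[\Phi_k]$ that leverages the conditional squaring \emph{and} the upper-bound side from Step 1 (which tells us the potential cannot blow up due to a catastrophic heavy bin). The repeated polynomial losses across the Lipschitz estimate, the moment bound, and the requirement that $b^{L_0}/n$ be achievable are what I expect to produce the large exponent $C_1 = \Omega(\eps^{74})$.

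\textbf{Combining.} Tuning $\delta$ and $C_1$ so that each of the two failure events has probability at most $\eps/2$, a union bound yields the desired $\Pr[\forall y: C_1 m/n \le |h^{-1}(y)\cap S| \le C_2 m/n] \ge 1-\eps$, with $C_1 = \Omega(\eps^{74})$ and $C_2 = O(\eps^{-1/2})$.
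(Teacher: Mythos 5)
Your upper bound is exactly the paper's (invoke \Cref{thm:quadratic-tail-final} with $r = 2+O(\eps^{-1/2})$ and union-bound), and your identification of the crux --- an upper tail bound for a potential with base $b<1$, to which \Cref{lem:parsed-strong-tail} does not apply --- is correct. But your lower-bound plan has a genuine gap in two places. First, the concentration mechanism you propose does not work as stated: the sandwich $2\Phi_i-1\le\Phi_{i+1}\le\Phi_i$ controls the one-step change of the \emph{potential}, not the increments of the Doob martingale $M_i=\E[\Phi_k\mid v_{\le i}]$, and because the dynamics are multiplicative (conditional squaring), revealing one $v_{i+1}$ can shift $\E[\Phi_k\mid\cdot]$ by amounts that do not sum to anything useful. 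You need $\Pr[\Phi_k\ge b^{L_0}/n]\le\eps/2$ where the threshold $b^{L_0}/n$ is of order at most $1/n$, i.e.\ additive concentration of $\Phi_k$ at scale $o(1/n)$ over $k=u-\ell$ steps; no difference bounds realistically extractable from the sandwich reach that scale via \Cref{fact:azuma}. The paper instead proves a bespoke tail bound (\Cref{lem:square-decay}) by a multi-scale ``stride'' argument: each step makes definite multiplicative progress (in $1-\Phi_i$ while $\Phi_i>1/2$, in $\Phi_i$ once $\Phi_i\le1/2$) with probability at least $1/3$ by Markov; Azuma is applied to the Doob martingale of the \emph{number of strides} (which genuinely has increments bounded by $1$); and conditional squaring takes over once the potential drops below $\eps^2/8$ to give doubly exponential decay. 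Your proposed ``moment-based bound on $\E[\Phi_k]$'' is not a substitute, and as you note yourself, $\E[\Phi_{i+1}]\le\E[\Phi_i^2]$ does not compose.

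Second, you are missing a preprocessing step without which any such tail bound is vacuous for large $u$. Working directly in $\F_2^u$ with $b=1-\delta$ gives $\Phi_0=1-\delta m/2^u$, exponentially close to $1$ when $u\gg\log m$; the process then spends roughly $u-\log(\delta m)$ steps escaping a neighborhood of $1$, and the quality of any resulting tail bound degrades with $1-\Phi_0$ (in \Cref{lem:square-decay} the exponent carries a factor $(1-X_0)^{25}$). The paper handles this by factoring $h=h_1\circ h_2$ through an intermediate space $\F_2^t$ with $t=\log(4m/\eps)$: with probability $1-\eps/4$ the image $h_1(S)$ retains at least $m/2$ elements and hence has density $\Omega(\eps)$ in $\F_2^t$, so the potential argument is run on $h_2$ with base $b=1/2$ and $1-\Phi_0\ge\eps/16$ bounded below by a constant, and per-bin lower bounds for $h_2$ on $h_1(S)$ transfer to $h$ on $S$. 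You would need this (or an equivalent density-boosting step) before any tuning of $\delta$ and $C_1$ can deliver the claimed $C_1=\Omega(\eps^{74})$.
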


We reiterate that the condition $m=\Omega(n\log n)$ is necessary: there exists $S$ of size at least $ 0.69n \log n$ such that \emph{every} linear map has at least one empty bin (\cite{alon99}, Proposition 2.2). Interestingly, this two-sided bound is also proven using potential functions. To prove this, we will require the following tail bound.

\begin{lemma}
\label{lem:square-decay}
    Let $0<\eps < 1$ and $0 \le X_0 < 1$ be fixed, and let $ 1 > X_1\ge X_2\ge \dots  > 0$ be random variables satisfying $X_{i+1}\ge 2X_i - 1$ and $\E[X_{i+1}|X_{\le i}]\le X_i^2$. Then for ${C_\eps = (1-X_0)^{25}(\eps/2)^{50}\log(1/\eps)}$ and any $s$,

    \[\Pr\left[X_s\ge 2^{-C_\eps 2^s}\right]\le \eps.\]

\end{lemma}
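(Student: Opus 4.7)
The statement is the mirror image of \Cref{lem:parsed-strong-tail}: the sequence $X_i$ starts just below $1$ and we want to show it decays super-exponentially, in the sense that $X_s\le 2^{-C_\eps 2^s}$. The plan is a supermartingale argument, together with a ``burn-in'' phase that handles the constraint $X_{i+1}\ge 2X_i-1$, which prevents instantaneous decay when $X_0$ is close to $1$.

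The core construction: for a parameter $\alpha>0$ to be chosen, set $U_i\coloneqq X_i^{\alpha/2^i}$. Whenever $\alpha/2^{i+1}\le 1$, the map $x\mapsto x^{\alpha/2^{i+1}}$ is concave on $[0,1]$, so Jensen's inequality combined with the quadratic conditional expectation bound gives
\[\E[U_{i+1}\mid X_{\le i}] \le (\E[X_{i+1}\mid X_{\le i}])^{\alpha/2^{i+1}} \le (X_i^2)^{\alpha/2^{i+1}} = U_i.\]
Hence $(U_i)_{i\ge s_\star}$ is a nonnegative supermartingale with $s_\star\coloneqq \lceil\log_2\alpha\rceil$. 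Since $\{X_s\ge 2^{-C_\eps 2^s}\}$ coincides with $\{U_s\ge 2^{-C_\eps\alpha}\}$, Doob's maximal inequality reduces the lemma to a bound on $\E[U_{s_\star}]$. The trivial bound $U_{s_\star}\le 1$ is useless when $X_0$ is close to $1$, so I need a separate ``burn-in'' argument to drive $X_{s_\star}$ below a small constant.

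For the burn-in, set $\delta_i\coloneqq 1-X_i$. The hypotheses translate into the deterministic bounds $\delta_i\le \delta_{i+1}\le 2\delta_i$ and the conditional lower bound $\E[\delta_{i+1}\mid \cdot]\ge 2\delta_i-\delta_i^2$, so the nonnegative ``gap'' $G_i\coloneqq 2\delta_i-\delta_{i+1}$ has conditional expectation at most $\delta_i^2$. A Markov-plus-union-bound argument over the first $s_\star = \Theta(\log(1/\delta_0)+\log(1/\eps))$ steps shows that with probability $\ge 1-\eps/2$, $\delta_i$ essentially doubles at each step (losing only a $1-O(\delta_i)$ factor) until saturation, forcing $X_{s_\star}$ below a suitably small power of $\eps$. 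Conditioning on this event and applying Doob to $(U_i)_{i\ge s_\star}$ with $\alpha=\Theta(2^{s_\star})$ then yields the required tail bound, and a union bound over the two phases finishes.

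The main obstacle is executing the burn-in tightly enough: the slow doubling $\delta_{i+1}\le 2\delta_i$ forces $s_\star=\Omega(\log(1/\delta_0))$, and the accumulated failure probability across these steps forces a polynomial dependence of $C_\eps$ on both $1-X_0$ and $\eps$, which is the source of the (non-optimal) exponents $25$ and $50$ in the statement. A cleaner direct analysis is plausible, but the loose version above already suffices for the downstream two-sided bound in \Cref{thm:twosided}.
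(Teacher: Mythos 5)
Your second phase is fine: once $X_{s_\star}$ is already small, the supermartingale $U_i = X_i^{\alpha/2^i}$ (valid by Jensen whenever $\alpha/2^{i+1}\le 1$) together with Markov at time $s$ is a clean substitute for the paper's explicit chain of events $\cE_i$, and you are right that the constraint $X_{i+1}\ge 2X_i-1$ only matters in the burn-in. The problem is the burn-in itself. A ``Markov-plus-union-bound over the first $s_\star$ steps'' cannot deliver failure probability $\eps/2$ here, because the per-step failure probabilities are not summable against an $\eps$ budget. Concretely: in your notation, to force $\delta_{i+1}\ge (2-\gamma)\delta_i$ you apply Markov to $G_i$ with $\E[G_i\mid\cdot]\le\delta_i^2$, which fails with probability at least $\delta_i/\gamma$; once $\delta_i$ reaches a constant (say $1/4$), a \emph{single} step already fails with probability $\ge 1/4 \gg \eps/2$. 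Worse, in the middle regime $\eps^{O(1)}\le X_i\le 1/2$ the best Markov gives is that $X_{i+1}\le \tfrac34 X_i$ with probability $\ge 1/3$ — i.e.\ each step makes progress only with constant probability — so a union bound over the $\Theta(\log(1/\eps))$ steps needed to reach $X_i\le \eps^{O(1)}$ is hopeless. Since your $U$-supermartingale needs $\E[U_{s_\star}]$ to be polynomially small in $\eps$ (the trivial bound $U_{s_\star}\le X_{s_\star}\le 1-\Theta(\eps)$ is useless after multiplying by $2^{C_\eps\alpha}$), the argument as stated does not close.

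The paper repairs exactly this point with a concentration argument rather than a union bound: it calls a step a \emph{stride} if it makes fixed multiplicative progress (growing $1-X_i$ by $5/4$ when $X_{i-1}>1/2$, shrinking $X_i$ by $3/4$ otherwise), shows via Markov that every step is a stride with probability $\ge 1/3$ regardless of the conditioning, observes that only $O(\log_{5/4}(4/((1-X_0)\eps^2)))$ strides are needed to reach $X\le\eps^2/8$, and then applies Azuma's inequality (\Cref{fact:azuma}) to the Doob martingale of the stride count over $k^*=8\log_{5/4}(4/((1-X_0)\eps^2))$ steps to get the whole burn-in to succeed with probability $1-\eps/2$. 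You would need to import this (or an equivalent stochastic-domination argument on the number of progress steps) before your supermartingale phase; with that replacement, your phase two goes through and would give a proof of essentially the same strength.
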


We will first prove \Cref{thm:twosided} assuming this tail bound, and then prove the tail bound afterwards. 

\begin{proof}[Proof of \Cref{thm:twosided}]
   
    We first focus on the lower bound. Set $t = \log(4m/\eps)$ and factor $h = h_1\circ h_2$, where $h_1:\F_2^u\to\F_2^t$ is a uniformly random linear map, and $h_2:\F_2^t\to\F_2^\ell$ is a uniformly random \emph{surjective} linear map. $h_1$ will collide any fixed pair of elements in $S$ with probability $1/2^t$. Hence the expected number of pairwise collisions among $S$ is $\binom{m}2 \frac{1}{2^t}\le \frac{m^2}{2^{t+1}}$. Let $\cE$ denote the event in which there are at least $m/2$ pairs that collide. By Markov's inequality, \[\Pr[\cE] \le \frac{m^2}{2^{t+1}}\cdot \frac{2}{m} = \frac{m}{2^t} = \frac{\eps}4 .\]
    
    Letting $k\coloneqq t-\ell$, we will sample random linearly independent vectors $v_1,\dots, v_k\in \F_2^t$, and consider $h_2$ with kernel $V\coloneqq \Span(v_{\le k})$. Using the set $h_1(S)$ and vectors $v_{\le k}$, we construct the associated functions $\{S_i(x)\coloneqq |(x+V_i)\cap h_1(S)|\}_{0\le i\le k}$ and potentials $\{\Phi_i \coloneqq \E_{x\sim \F_2^u}[(1/2)^{S_i(x)}]\}_{0\le i\le k}$.

    Conditioning on $\neg\cE$, we have $|h_1(S)|\ge m-m/2 =  m/2$, and so the density of $h_1(S)\subset \F_2^t$ is lower bounded by $\frac{m/2}{4m/\eps} = \eps/8$. Consequently, we have \[\Phi_0 \le  \left(1-\frac{\eps}8\right)\cdot 1 + \frac{\eps}{8}\cdot \frac{1}2 = 1 - \frac{\eps}{16}.\]
   \Cref{clm:expsquare} and \Cref{prop:strides} show that $(\Phi_i)$ satisfy the premise of \Cref{lem:square-decay}. Hence, we can use \Cref{lem:square-decay} and the fact $\ell \le C_1m/n$ to deduce 
    \begin{align*}\Pr[\Phi_k > 2^{-C_1m/n - \ell}]   \le \Pr[\Phi_k > 2^{-2C_1m/n}]   = \Pr[\Phi_k > 2^{- (\eps/2)C_1 2^k}]  \le \eps/4\end{align*} for $C_1 \coloneqq \frac{2}\eps(1-\Phi_0)^{25}(\eps/8)^{50}\log(4/\eps) = \Omega(\eps^{74})$. 
    Now note for all $y\in \F_2^\ell$, we have \[h^{-1}(y) = \bigsqcup_{z\in h_2^{-1}(y)}h_1^{-1}(z).\] Hence, \begin{align*}|h^{-1}(y)\cap S|  = \sum_{z\in h_2^{-1}(y)}|h_1^{-1}(z)\cap S|  \ge \sum_{z\in h_2^{-1}(y)}\mathbf{1}(z\in h_1(S)) = |h_2^{-1}(y)\cap h_1(S)|.  \end{align*}
    Therefore, it follows \begin{align}\Pr\left[\forall y  \in \F_2^\ell, |h^{-1}(y)\cap S| < C_1 \frac{m}n\right]
     & \le \Pr\left[\forall y\in \F_2^\ell,|h_2^{-1}(y)\cap h_1(S)| < C_1\frac{m}n\right] \nonumber\\ & \le \frac{\eps}4 + \Pr\left[\forall y\in \F_2^\ell,|h_2^{-1}(y)\cap h_1(S)| < C_1\frac{m}n \big| \neg \cE\right] \nonumber
    \\ & \le \frac{\eps}4 + \Pr[\Phi_k > 2^{-C_1m/n - \ell} | \neg \cE]\nonumber
    \tag{\Cref{clm:blanketing}}
    \\ & \le \frac{\eps}4 + \frac{\eps}4 = 
    \frac{\eps}2.\label{eqn:upperbound}\end{align}
We already have from \Cref{thm:quadratic-tail-final} that there exists $C_2=7\eps^{-1/2}+2$ such that ${\Pr[M(S,h) > C_2m/n]}\le \eps/2$. Taking a union bound over this and \Cref{eqn:upperbound} gives the desired result.

\end{proof}

\begin{proof}[Proof of \Cref{lem:square-decay}]
The proof will resemble that of Theorem 7 in \cite{alon99}. Call $i$ a \emph{stride} if $X_{i-1} > 1/2$ and $(1-X_i)\ge \frac{5}4(1-X_{i-1})$, or if $X_{i-1} \le 1/2$ and $X_i\le \frac{3}4X_{i-1}$. We will show on any conditioning of $X_{\le i}$, $i+1$ is a stride with probability $\ge 1/3$. We split into two cases.

\begin{itemize}
\item If $X_{i}\le 1/2$, then $\E[X_{i+1}]\le X_{i}^2\le \frac{1}2X_i$. Therefore, by Markov's inequality, $\Pr[X_{i+1} > \frac{3}4X_i] < \frac{X_i/2}{3X_i/4} = 2/3$, so $i+1$ is a stride with probability at least $1/3$.

\item If $X_i > 1/2$, we will instead apply Markov's inequality on $1-2X_i+X_{i+1}$. Note \[\E[1-2X_i + X_{i+1} | X_{\le i}] = 1-2X_i + X_i^2 = (1-X_i)^2.\] Hence, conditioned on $X_{\le i}$, we have \begin{align*}\Pr_{X_{i+1}}\left[1-X_{i+1}< \frac{5}4(1-X_i)\right] & = \Pr_{X_{i+1}}\left [1-2X_i + X_{i+1} > \frac{3}4(1-X_i)]\right]   < \frac{(1-X_i)^2}{3(1-X_i)/4} < \frac{2}3,\end{align*} so $i+1$ is a stride with probability $\ge 1/3$ in this case as well. 

\end{itemize}

Let $j$ be the first integer such that $X_{j+1} < 1/2$, and let $k$ be the first integer such that $X_{k+1} < \eps^2/8$. Let $s_1$ be the number of strides in $[j]$, and let $s_2$ be the number of strides in $\{j+1,\dots, k\}$. We observe that $(1-X_0)(\frac{5}{4})^{s_1}\le \frac{1}2$, and so $s_1\le \log_{5/4}(1/2(1-X_0))$. Similarly, we must have $\frac{1}2\left(\frac{3}4\right)^{s_2}\ge \frac{\eps^2}8$, implying that $s_2\le \log_{4/3}(4/\eps^2)$. Therefore, there must be at most $s_1+s_2 \le \log_{5/4}(4/(1-X_0)\eps^2)$ strides in $[k]$. Define $k^*\coloneqq 8\log_{5/4}(4/(1-X_0)\eps^2)$. Let  $f(X_0, X_1,\dots, X_{k^*})$ evaluate the number of strides in the first $k^*$ steps, and define $Y_i \coloneqq \E[f(X_{\le k^*}) | X_{ < i}]$. Notice $(Y_i)$ is a Doob martingale with $|Y_{i+1}-Y_i|\le 1$ for all $i$. Furthermore, since we showed each index is a stride with probability $\ge 1/3$,  \[Y_0 \ge k^*/3 = (8/3)\log_{5/4}(4/(1-X_0)\eps^2) \ge 16\log(2/\eps)\] Hence, by Azuma's inequality (\Cref{fact:azuma}) and the above string of inequalities, \begin{align*}\Pr[k\ge k^*] &\le \Pr\left[f(X_{\le k^*})  \le  \log_{5/4}\left(\frac{4}{(1-X_0)\eps^2}\right)\right] \\ & \le \Pr\big[Y_{k^*} \le (3/8)Y_0\big] \\ &   \le e^{-\frac{(5Y_0/8)^2}{2k^*}}  \le e^{-\frac{(5/8)^2}6( 16\log (2/\eps))}  \le \eps/2.\end{align*} 
If $s \le k^*$, we have just shown with probability at most $\eps/2$ that $X_s\ge X_{k}\ge \eps^2/8\ge 2^{-C_\eps 2^s}$ for any $C_\eps\le \log(8/\eps^2)$. Henceforth, we will assume $s > k^*$. Conditioned on $k < k^*$, we have $X_{k^*} < \eps^2/8$. Let $\cE_i$ be the event $X_{k^*+i}< \eps^{2^{i} + 1}/2^{i+3}$ and denote $\cE_{\le t} = \bigwedge_{i=1}^t\cE_i$. Notice that $\E[X_{k^*+i+1}|\cE_{\le i}]\le \eps^{2^{i+1}+2}/2^{2i+6}$, so by Markov's inequality, \[\Pr[\neg \cE_{i+1}| \cE_{\le i}]\le \frac{\eps^{2^{i+1}+2}/2^{2i+6}}{\eps^{2^{i+1}+1}/2^{i+4}} \le \eps/2^{i+2}.\]  Hence for any $t$ we have \[\Pr\left[\neg\cE_{\le t}\right] = \sum_{i=0}^t \Pr\left[\neg \cE_{i+1}\wedge \cE_{\le i}\right]\le \sum_{i=0}^{t} \Pr[\neg\cE_{i+1}|\cE_{\le i}]\le \sum_{i\ge 0}\eps/2^{i+2}\le \eps/2.\] 

Consequently by a union bound, $k < k^*$ and $\cE_{\le (s-k^*)}$ occurs with probability at least $1-\eps$. In the event this happens, and noting $k^*=8\log_{5/4}(4/(1-X_0)\eps^2)\le 25\log(4/(1-X_0)\eps^2)$, it follows \[X_s \le  \frac{\eps^{2^{s-k^*}+1}}{2^{s-k^*+3}}\le \eps^{2^s((1-X_0)\eps^2/4)^{25} + 1} \le 2^{-C_\eps 2^s}\] for $C_\eps = (1-X_0)^{25}(\eps/2)^{50}\log(1/\eps)$. The desired result follows. 

\end{proof}

\section*{Acknowledgements}
    We thank Raghu Meka for discussions, one of which led to this question. We also thank Jesse Goodman for discussions, as well as Jeffrey Champion, Sabee Grewal, and Lin Lin Lee for comments that improved our presentation.  

\bibliographystyle{alphaurl} \bibliography{references}

\newcommand{\etalchar}[1]{$^{#1}$}
\begin{thebibliography}{ADM{\etalchar{+}}97}

\bibitem[ADM{\etalchar{+}}97]{alon99}
Noga Alon, Martin Dietzfelbinger, Peter~Bro Miltersen, Erez Petrank, and G\'{a}bor Tardos.
\newblock Is linear hashing good?
\newblock In {\em Proceedings of the Twenty-Ninth Annual ACM Symposium on Theory of Computing (STOC)}, page 465–474, 1997.
\newblock \href {https://doi.org/10.1145/258533.258639} {\path{doi:10.1145/258533.258639}}.

\bibitem[Bab18]{babka18}
Martin Babka.
\newblock A note on the size of largest bins using placement with linear transformations.
\newblock 2018.
\newblock \href {https://arxiv.org/abs/1810.04161} {\path{arXiv:1810.04161}}.

\bibitem[BGG94]{BGG94}
Mihir Bellare, Oded Goldreich, and Shafi Goldwasser.
\newblock Incremental cryptography: The case of hashing and signing.
\newblock In Yvo~G. Desmedt, editor, {\em Advances in Cryptology — CRYPTO 1994}, volume 839 of {\em Lecture Notes in Computer Science}, pages 216--233, 1994.
\newblock \href {https://doi.org/10.1007/3-540-48658-5_22} {\path{doi:10.1007/3-540-48658-5_22}}.

\bibitem[BGG95]{BGG95}
Mihir Bellare, Oded Goldreich, and Shafi Goldwasser.
\newblock Incremental cryptography and application to virus protection.
\newblock In {\em Proceedings of the Twenty-Seventh Annual ACM Symposium on Theory of Computing (STOC)}, page 45–56, 1995.
\newblock \href {https://doi.org/10.1145/225058.225080} {\path{doi:10.1145/225058.225080}}.

\bibitem[CMV13]{cmv13-simplehashingonentropicsources}
Kai-Min Chung, Michael Mitzenmacher, and Salil Vadhan.
\newblock Why simple hash functions work: Exploiting the entropy in a data stream.
\newblock {\em Theory of Computing}, 9(30):897--945, 2013.
\newblock \href {https://doi.org/10.4086/toc.2013.v009a030} {\path{doi:10.4086/toc.2013.v009a030}}.

\bibitem[CRSW13]{crsw13}
L.~Elisa Celis, Omer Reingold, Gil Segev, and Udi Wieder.
\newblock Balls and bins: Smaller hash families and faster evaluation.
\newblock {\em SIAM Journal on Computing}, 42(3):1030--1050, 2013.
\newblock \href {https://doi.org/10.1137/120871626} {\path{doi:10.1137/120871626}}.

\bibitem[CW79]{cw79}
J.Lawrence Carter and Mark~N. Wegman.
\newblock Universal classes of hash functions.
\newblock {\em Journal of Computer and System Sciences}, 18(2):143--154, 1979.
\newblock \href {https://doi.org/10.1016/0022-0000(79)90044-8} {\path{doi:10.1016/0022-0000(79)90044-8}}.

\bibitem[DD22]{dhardvir}
Manik Dhar and Zeev Dvir.
\newblock Linear hashing with $\ell_\infty$ guarantees and two-sided kakeya bounds.
\newblock In {\em 2022 IEEE 63rd Annual Symposium on Foundations of Computer Science (FOCS)}, pages 419--428, 2022.
\newblock \href {https://doi.org/10.1109/FOCS54457.2022.00047} {\path{doi:10.1109/FOCS54457.2022.00047}}.

\bibitem[MCW78]{mcw78}
George Markowsky, J.~Lawrence Carter, and Mark~N. Wegman.
\newblock Analysis of a universal class of hash functions.
\newblock In {\em Mathematical Foundations of Computer Science 1978}, volume~64 of {\em Lecture Notes in Computer Science}, pages 345--354. Springer, 1978.
\newblock \href {https://doi.org/10.1007/3-540-08921-7_82} {\path{doi:10.1007/3-540-08921-7_82}}.

\bibitem[MRRR14]{mrrr-fastloadbalance}
Raghu Meka, Omer Reingold, Guy~N. Rothblum, and Ron~D. Rothblum.
\newblock Fast pseudorandomness for independence and load balancing.
\newblock In {\em Proceedings of the 41st International Colloquium on Automata, Languages, and Programming (ICALP)}, volume 8572 of {\em Lecture Notes in Computer Science}, pages 859--870. Springer, 2014.
\newblock \href {https://doi.org/10.1007/978-3-662-43948-7_71} {\path{doi:10.1007/978-3-662-43948-7_71}}.

\bibitem[MU05]{mitzupfal}
Michael Mitzenmacher and Eli Upfal.
\newblock {\em Probability and Computing: Randomized Algorithms and Probabilistic Analysis}.
\newblock Cambridge University Press, Cambridge, 2005.
\newblock \href {https://doi.org/10.1017/CBO9780511813603} {\path{doi:10.1017/CBO9780511813603}}.

\bibitem[MV84]{mv81}
Kurt Mehlhorn and Uzi Vishkin.
\newblock Randomized and deterministic simulations of prams by parallel machines with restricted granularity of parallel memories.
\newblock {\em Acta Informatica}, 21(4):339--374, 1984.
\newblock \href {https://doi.org/10.1007/BF00264615} {\path{doi:10.1007/BF00264615}}.

\bibitem[PT12]{PT12-tabulation}
Mihai P\v{a}tra\c{s}cu and Mikkel Thorup.
\newblock The power of simple tabulation hashing.
\newblock {\em J. ACM}, 59(3), June 2012.
\newblock \href {https://doi.org/10.1145/2220357.2220361} {\path{doi:10.1145/2220357.2220361}}.

\bibitem[RS98]{RS98-ballsbinstightanalysis}
Martin Raab and Angelika Steger.
\newblock Balls into bins -- a simple and tight analysis.
\newblock In {\em Randomization and Approximation Techniques in Computer Science (RANDOM 1998)}, volume 1518 of {\em Lecture Notes in Computer Science}, pages 159--170. Springer, 1998.
\newblock \href {https://doi.org/10.1007/3-540-49543-6_13} {\path{doi:10.1007/3-540-49543-6_13}}.

\bibitem[WC81]{WC81}
Mark~N. Wegman and J.~Lawrence Carter.
\newblock New hash functions and their use in authentication and set equality.
\newblock {\em Journal of Computer and System Sciences}, 22(3):265--279, 1981.
\newblock \href {https://doi.org/10.1016/0022-0000(81)90033-7} {\path{doi:10.1016/0022-0000(81)90033-7}}.

\bibitem[Woo14]{wootters2014thesis}
Mary~Katherine Wootters.
\newblock {\em Any Errors in This Dissertation Are Probably Fixable: Topics in Probability and Error Correcting Codes}.
\newblock Ph.d. dissertation, University of Michigan, 2014.
\newblock URL: \url{https://deepblue.lib.umich.edu/handle/2027.42/108844}.

\bibitem[WZ96]{WinklerZuckerman1996}
Peter Winkler and David Zuckerman.
\newblock Multiple cover time.
\newblock {\em Random Structures \& Algorithms}, 9(4):403--411, 1996.
\newblock \href {https://doi.org/10.1002/(SICI)1098-2418(199612)9:4<403::AID-RSA4>3.0.CO;2-0} {\path{doi:10.1002/(SICI)1098-2418(199612)9:4<403::AID-RSA4>3.0.CO;2-0}}.

\end{thebibliography}

\appendix
\section{Deferred Proofs}
%\begin{corollary}
%    Let $u\ge \ell$ and $c>0$ be a constant. For any $\eps < 1$ such that $\eps \ge 1/2^{u-\ell}$, there exists $C_\eps = O(1/\eps)$ such that the following holds. Let $S\subset\F_2^{u}$ be of cardinality $m$, and let $\cH$ be the uniform distribution over all surjective linear maps $h: \F_2^{u}\to\F_2^\ell$.  Then \[\Pr_{h\sim \cH}\left[M(S,h) < C_\eps\cdot \opt(m,n)\right] \ge 1-\eps.\]
%\end{corollary}
\subsection{Proof of \Cref{thm:quadratic-tail-final}: Removing the Surjectivity Assumption}
\label{app:removing-surjectivity}
%We show that with only a mild loss in parameters, we can use our max-load theorem on random surjective linear maps with large enough universe size (\Cref{lem:quad-tail-surj}) to prove max-load bounds on random linear maps with any universe size (\Cref{thm:quadratic-tail-final}). The intuition is that with extremely high probability, a random linear map is surjective, and that a smaller universe only makes hashing easier.

\begin{proof}[Proof of \Cref{thm:quadratic-tail-final}]
 We will first show the result for $u$ large enough, i.e. $2^{u-\ell}\ge \max\{(m-2)^2,(m\ell)^2\}$. At the end, we will remove this assumption on $u$. Let $\cH$ be the set of linear maps $\F_2^u\to\F_2^\ell$, and let $\cK$ be the distribution of the nullity of $h\sim\cH$. If $k$ is the nullity of $h$, then the number of universe elements $v\in \F_2^u$ such that $h(v) = 0$ is $2^k$. For $h\sim \cH$ we have $\Pr[h(v) = 0] = 2^{-\ell}$ for fixed $v\neq 0$. Hence, we can compute by linearity that \begin{align*}\E_{k\sim K}[2^k]  = \E_{h\sim \cH}\left[\sum_{v\in \F_2^u}\mathbf{1}\{h(v) = 0\}\right]    = \sum_{v\in \F_2^u} \Pr_{h}[h(v) = 0]   = 1 + 2^{-\ell}(2^u-1)   \le  2^{u-\ell} + 1.   \end{align*}
Let $\cE$ be the event that $h$ is surjective, i.e. $k = u-\ell$. By Markov's inequality on the random variable $2^k - 2^{u-\ell}$ (as $k\ge u-\ell$), we have \[ \Pr\left[\neg \cE\right]  = \Pr_{k\sim \cK}[k\ge u-\ell + 1] = \Pr_{k\sim \cK}\left[2^k - 2^{u-\ell} \ge 2^{u-\ell}\right] \le \frac{1}{2^{u-\ell}}\le \frac{1}{(m-2)^2} \le \frac{1}{(r-2)^2}.\]

 For brevity, set $M\coloneqq r\cdot \opt(m,n)$. We can use \Cref{lem:quad-tail-surj}, the above observation, and the fact $u\ge \ell + 2\log(\ell m)$ to deduce
 \begin{align*}
    \Pr_{h\sim \cH}\left[M(S,h) \ge M\right]    \le \Pr[\neg \cE] + \Pr_{h\sim \cH}\left[M(S,h) \ge M| \cE\right]  \le  \frac{1}{(r-2)^2} + \frac{48}{(r-2)^2}    =\frac{49}{(r-2)^2}.
 \end{align*}

 We will now remove the lower bound assumption on $u$. Intuitively, for any $\ell\le u' < u$ we can simply embed $\F_2^{u'}$ into $\F_2^u$, and then port in the max-load result for $\F_2^u$.

 More formally, let $\cH'$ be the set of linear maps $\F_2^{u'}\to\F_2^\ell$. Take an arbitrary $u'$-dimensional subspace $V\leq \F_2^u$. There is an isomorphism $\iota: \F_2^{u'}\to V$. Denote $T\coloneqq \iota(S)$. Since $T\subset V$, we have for any $h:\F_2^u\to\F_2^\ell$ that \begin{align*}M(T,h|_V)  = \max_{y\in \F_2^\ell} |(h|_V)^{-1}(y)\cap T|  = \max_{y\in \F_2^\ell} |h^{-1}(y)\cap V \cap T|  = \max_{y\in \F_2^\ell} |h^{-1}(y)\cap T| =M(T,h). \end{align*} For $h\sim \cH$, $h|_V:V\to \F_2^\ell$ will be a uniform random linear map. Consequently, \begin{align*}\Pr_{h'\sim \cH'}[M(S,h')\ge M] = \Pr_{h\sim \cH}[M(T, h|_V)\ge M] = \Pr_{h\sim \cH}[M(T, h)\ge M]  \le \frac{49}{(r-2)^2} \end{align*} since $T$ has cardinality $m$.
\end{proof}

%\begin{proof}[Proof Sketch]
    %We use the probabilistic method. Fix a linear map $h$, whose preimages will divide $\F_2^{u}$ into at most $n$ different bins. We now consider picking $S$ randomly, where each universe element in $\F_2^{u}$ is included in $S$ independently with probability $1/m$. The expectation and variance of $|S|$ can be computed to be $m$ and $m-1$, respectively. Hence, by Chebyshev's inequality, it follows that $\Pr[|S|\ge m - 2\sqrt{m-1}]\le 1/4$.

 %   Now notice that the load distribution of $S$ in the preimages of $h$ is \emph{exactly} the same distribution as randomly mapping $m$ balls to $n$ bins. From here, classical techniques finish. In the case $m\le \frac{1}2n\log n$, we can use Poisson Approximation theorem to bound $M(S,h)> (1-o(1)) \opt(m,n)$ by $2^{-3\log^2 n}$, and for $m > \frac{1}2n\log n$, we can use the Chernoff bound for negatively correlated variables to prove $M(S,h) > (1+o(1))\frac{m}n$ with probability. In either case, a union bound over all $n^{\log m}$ linear maps yields the desired result.
%\end{proof}

\end{document}